\documentclass[11pt]{article}
\usepackage{fullpage}
\usepackage{amssymb}
\usepackage{amsmath,amsfonts,amsthm,amssymb}
\usepackage{enumerate}
\usepackage{setspace}
\usepackage{fancyhdr}
\usepackage{lastpage}
\usepackage{extramarks}
\usepackage{algorithm}
\usepackage{comment}
\usepackage{algorithmic}
\usepackage{chngpage}
\usepackage{soul,color}
\usepackage{graphicx,float,wrapfig}
\usepackage[margin=1in]{geometry}
\usepackage[font=small]{caption}
\usepackage{subcaption}
\usepackage{epstopdf}

\usepackage[suppress]{color-edits}

\usepackage{tikz}
\addauthor{vs}{blue}
\addauthor{ct}{magenta}
\addauthor{ni}{red}
\addauthor{jm}{brown}
\addauthor{bl}{ForestGreen}

\newtheorem{theorem}{Theorem}[section]
\newtheorem{corollary}[theorem]{Corollary}

\newtheorem{definition}[theorem]{Definition}
\newtheorem{lemma}[theorem]{Lemma}
\newtheorem{remark}{Remark}[section]
\newtheorem{example}{Example}[section]

\newcommand{\Title}{
Combinatorial Assortment Optimization}

\title{\Title}

\author{
  Nicole Immorlica \\ Microsoft Research
  \and
  Brendan Lucier \\ Microsoft Research
  \and
  Jieming Mao \\ Princeton University
  \and
  Vasilis Syrgkanis \\ Microsoft Research
  \and
  Christos Tzamos \\ Microsoft Research
}

\begin{document}

  \begin{titlepage}
\maketitle \thispagestyle{empty}
\begin{abstract}
Assortment optimization refers to the problem of designing a slate of products to offer potential customers, such as stocking the shelves in a convenience store. The price of each product is fixed in advance, and a probabilistic choice function describes which product a customer will choose from any given subset.  We introduce the combinatorial assortment problem, where each customer may select a bundle of products.  We consider a model of consumer choice where the relative value of different bundles is described by a valuation function, while individual customers may differ in their absolute willingness to pay, and study the complexity of the resulting optimization problem.  We show that any sub-polynomial approximation to the problem requires exponentially many demand queries when the valuation function is XOS, and that no FPTAS exists even for succinctly-representable submodular valuations.  On the positive side, we show how to obtain constant approximations under a ``well-priced'' condition, where each product's price is sufficiently high.  We also provide an exact algorithm for $k$-additive valuations, and show how to extend our results to a learning setting where the seller must infer the customers' preferences from their purchasing behavior.  

 \end{abstract}
\end{titlepage}

\section{Introduction}

Imagine that you are an inventory manager, tasked with selecting which products to display on the shelves in a retail store.  These products are acquired from different producers, who control the suggested retail prices.  Your goal is to find a profitable assortment of items to offer, given a model of how customers choose which item(s) to ultimately purchase from the subset you display.  This \emph{assortment problem} captures a natural tradeoff.  If you offer only the most expensive items, then many customers might simply leave the store without purchasing anything.  On the other hand, a variety of inexpensive items might cannibalize sales from pricier goods and dilute the overall revenue.  Given a collection of possible items, and a model of customer preferences, which subset of items should you display to maximize revenue?

The assortment problem is of practical importance for brick and mortar stores, but is also relevant to online shopping platforms that must choose which products to display in response to a search query and whose price is exogenous, set by a third party.  Customers have limited patience and are more likely to select products from the first page of results, so the platform is incentivized to display a well-chosen slate of products.  Since an online platform may need to choose from a vast array of potential products, it is important to find computationally feasible solutions.

There is a growing literature on assortment in the field of revenue management, typically focusing on cases where each customer wants at most a single item.  In such unit-demand settings, the problem is captured by a \emph{choice function} that maps an assortment $S$ to a probability distribution describing which good in $S$ a customer will ultimately purchase.  Commonly-studied choice functions include multinomial logit functions~\cite{TalluriR2004}, exponential choice functions~\cite{AlptekinogluS2016}, and mixture models~\cite{BrontMDV2009}, among others.  
On the other hand, the computer science literature has mostly focused on combinatorial versions of revenue or welfare maximization when the designer controls the prices of items (see e.g. multi-dimensional revenue maximization \cite{Cai2012,Chawla2010,Haghpanah2015}) or the mode of interaction with the consumer (see e.g. combinatorial auctions \cite{Feige2009,Lehmann2001}). The important case of assortment optimization, where the platform designer is constrained to only design the set of available items, has been largely left untouched by the combinatorial optimization community. The goal of our work is to bridge this gap and explore the intersection of assortment and combinatorial optimization.

We introduce the \emph{combinatorial assortment problem}, where consumers may choose to purchase bundles of goods.  For example, a customer may want to buy a camera, possibly in combination with accessories, which may be either of the same brand as the camera or a cheaper off-brand variety.  These items may be complementary (a camera plus an accessory), or substitutes for each other (a brand-name accessory or a generic version of the same accessory).  We ask: \emph{given the relationship between the items for sale, and possibly a cardinality constraint on the number of items that can be shown, what is a revenue-maximizing selection to offer?}

We consider a model of consumer choice motivated by vertical customer differentiation.  In this model, the relationship between the items is fixed and common to all potential buyers, but customers vary in their willingness-to-pay.  Formally, the value that a buyer $i$ has for a certain bundle of goods $T$ is taken to be $w_i \cdot v(T)$, where $v$ is a valuation function common to all buyers and $w_i \geq 0$ is a buyer-specific multiplier that represent's the buyer's type.  This captures settings where the relative quality and relationship between the items is unambiguous, but customers vary in their ability to extract value from the items.  For example, if the items are cameras and accessories, a professional photographer might derive a value equal to $110\%$ of the reference value for any bundle, whereas an amateur might only derive $90\%$ of the reference value. Our market exhibits vertical differentiation in that all customers 
agree on the relative comparisons between bundles, so that if one bundle is more valuable and cheaper than another, everyone will buy the former.  In comparison, horizontally-differentiated choice functions like multinomial logit perturb the common component of the valuation by an additive constant; this causes customers to disagree on which bundles are more valuable, so that if one item is cheaper and has a higher common value than another, a positive fraction of customers would still prefer the latter.

\subsection{Our Results and Techniques}

\begin{figure}[t]
  \centering
  \begin{subfigure}[b]{0.47\textwidth}
  \centering
  	\begin{picture}(0,0)
\includegraphics[scale=0.6]{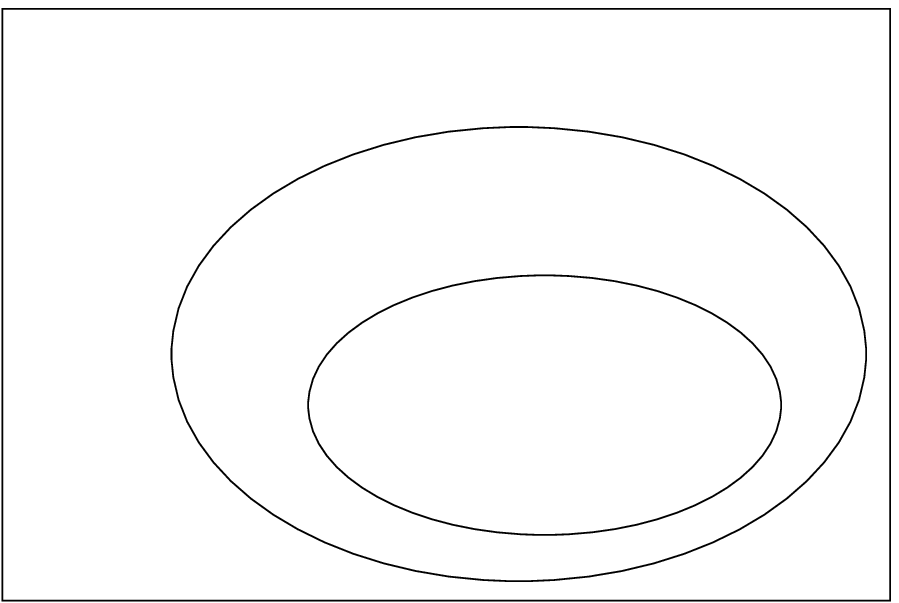}
\end{picture}
\setlength{\unitlength}{2367sp}
\begingroup\makeatletter\ifx\SetFigFont\undefined
\gdef\SetFigFont#1#2#3#4#5{
  \reset@font\fontsize{#1}{#2pt}
  \fontfamily{#3}\fontseries{#4}\fontshape{#5}
  \selectfont}
\fi\endgroup
\begin{picture}(4285,2864)(817,-2483)
\put(2750,-1281){\makebox(0,0)[lb]{\smash{{\SetFigFont{8}{14.4}{\rmdefault}{\mddefault}{\updefault}{\color[rgb]{0,0,0}$\mathrm{K-Add}$}
}}}}
\put(3558,-714){\makebox(0,0)[lb]{\smash{{\SetFigFont{8}{14.4}{\rmdefault}{\mddefault}{\updefault}{\color[rgb]{0,0,0}$\mathrm{SM}$}
}}}}
\put(3990, 91){\makebox(0,0)[lb]{\smash{{\SetFigFont{8}{14.4}{\rmdefault}{\mddefault}{\updefault}{\color[rgb]{0,0,0}$\mathrm{XOS}$}
}}}}
\put(350,101){\makebox(0,0)[lb]{\smash{{\SetFigFont{6}{14.4}{\rmdefault}{\mddefault}{\updefault}{\color[rgb]{0,0,0}$\Omega(n^{0.5-\epsilon})$ communication lower bound}
}}}}
\put(350,-204){\makebox(0,0)[lb]{\smash{{\SetFigFont{6}{14.4}{\rmdefault}{\mddefault}{\updefault}{\color[rgb]{0,0,0}$\mathrm{NP}$-hard (succinct)}
}}}}
\put(1393,-812){\makebox(0,0)[lb]{\smash{{\SetFigFont{6}{14.4}{\rmdefault}{\mddefault}{\updefault}{\color[rgb]{0,0,0}$\mathrm{NP}$-hard (2-demand)}
}}}}
\put(1953,-1652){\makebox(0,0)[lb]{\smash{{\SetFigFont{6}{14.4}{\rmdefault}{\mddefault}{\updefault}{\color[rgb]{0,0,0}Exact algorithm}
}}}}
\end{picture}
   	\caption{Arbitrary prices.}
  \end{subfigure}
  ~
  \begin{subfigure}[b]{0.47\textwidth}
  \centering
  	\begin{picture}(0,0)
\includegraphics[scale=0.6]{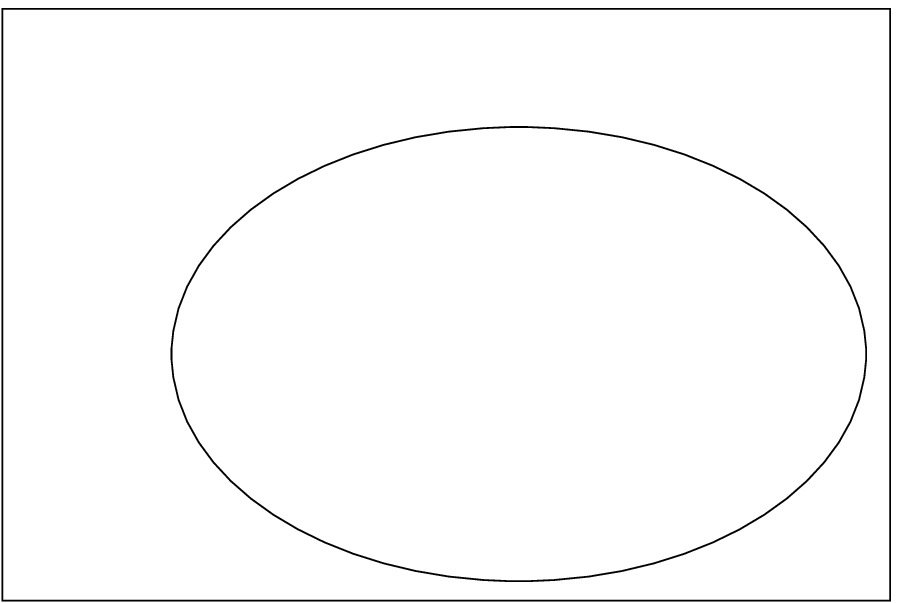}
\end{picture}
\setlength{\unitlength}{2367sp}
\begingroup\makeatletter\ifx\SetFigFont\undefined
\gdef\SetFigFont#1#2#3#4#5{
  \reset@font\fontsize{#1}{#2pt}
  \fontfamily{#3}\fontseries{#4}\fontshape{#5}
  \selectfont}
\fi\endgroup
\begin{picture}(4285,2864)(717,-2483)
\put(3918, 98){\makebox(0,0)[lb]{\smash{{\SetFigFont{8}{14.4}{\rmdefault}{\mddefault}{\updefault}{\color[rgb]{0,0,0}$\mathrm{ANY}$}
}}}}
\put(3436,-714){\makebox(0,0)[lb]{\smash{{\SetFigFont{8}{14.4}{\rmdefault}{\mddefault}{\updefault}{\color[rgb]{0,0,0}$\mathrm{GS}$}
}}}}
\put(368,-39){\makebox(0,0)[lb]{\smash{{\SetFigFont{6}{14.4}{\rmdefault}{\mddefault}{\updefault}{\color[rgb]{0,0,0}$O(1)$-$\mathrm{APX}$ (w/o constraints)}
}}}}
\put(1455,-1356){\makebox(0,0)[lb]{\smash{{\SetFigFont{6}{14.4}{\rmdefault}{\mddefault}{\updefault}{\color[rgb]{0,0,0}$O(1)$-$\mathrm{APX}$ (w/ constraints)}
}}}}
\end{picture}
   	\caption{Well-priced items.}
  \end{subfigure}
  \caption{\small Computational landscape of Combinatorial Assortment Optimization.  For arbitrary prices, the negative results for XOS and SM (submodular) valuations hold even without cardinality constraints, and the exact algorithm for K-ADD ($k$-additive) valuations applies even with cardinality constraints.  For the case of well-priced items (see Section~\ref{sec:structural}), we give a constant-approximate algorithm for general valuations without cardinality constraints, or for GS (gross substitutes) valuations with cardinality constraints.}
\label{fig:landscape}
\end{figure}

Our goal is to explore the computational complexity of combinatorial assortment. We will characterize the limits of polynomial time computation or approximability and provide conditions under which simple heuristics such as a greedy algorithm, or exhaustively searching over small assortments, are optimal or approximately so. Interestingly, we will see a stark difference in the computational landscape, depending on how well the items are priced (with respect to the distribution over consumer types). 
It turns out that assuming the item prices are not too low can make otherwise computationally hard assortment problems easy to solve or approximate (see Figure~\ref{fig:landscape}).

The bulk of our results apply in the case where the valuation function $v$ is known to the assortment planner, and the types $w_i$ are unknown but drawn from a known prior distribution $\mathcal{F}$.  We then investigate the difficulty of the assortment problem as a function of the structural assumptions imposed on the valuation $v$. At the end of the paper we extend many of our algorithmic results to a setting where the planner must learn these parameters from samples.

\paragraph{Negative Results} Our main results are summarized in Figure~\ref{fig:landscape}.  We begin by showing that, in general, the combinatorial assortment problem is inherently difficult.  Even in the 
deterministic case, where all buyers have the exact same preferences and these are known to the optimizer (i.e., the type distribution $\mathcal{F}$ is a point mass at $1$),
it is hard to approximate the revenue of the optimal assortment to a factor of $o(n^{1/2 - \varepsilon})$ for any constant $\varepsilon > 0$, where $n$ is the number of items to choose from.  This is true even if there is no constraint on the number of items to be shown, and even if the valuation function is an XOS function, a subclass of subadditive functions.\footnote{A valuation is subadditive if, for any sets of items $S$ and $T$, $v(S \cup T) \leq v(S) + v(T)$.  A valuation is XOS if it is the maximum of a collection of additive functions.}  Notably, this is a class of valuations where the welfare maximization problem can be well-approximated~\cite{Lehmann2001,Feige2009}.

This hardness result takes the form of a communication complexity bound, independent of any computational hardness assumptions.  We show that an approximation algorithm requires an exponential amount of communication with an oracle that can answer demand queries about the valuation function $v$.  Note that it is too much to hope for a lower bound in a fully general model of communication with a valuation oracle, since in particular the oracle could simply communicate the optimal assortment, which can be described in polynomially many bits.  Instead, our proof considers a communication model in which information about the valuation $v$ is split between two oracles, and show that exponential communication between the oracles is necessary to obtain any reasonable approximation.  We then show how the pair of oracles can simulate a demand query oracle.  One implication of this result is that any assortment algorithm with a sub-polynomial approximation factor requires exponentially many demand queries about the valuation function $v$. 

We next show that even for valuation functions that can be described succinctly,\footnote{Formally: an XOS valuation that is the maximum of only $2$ additive functions.} it is still NP-hard to compute the optimal assortment.  Like the communication complexity result, this holds even if all buyers have type $1$.  If we move beyond this deterministic case and allow buyer types to be drawn from an arbitrary distribution, then we show that there is no FPTAS for the combinatorial assortment problem with XOS (or even submodular) valuations even if each customer wants at most two items.\footnote{When customers demand at most 2 items, the XOS condition is equivalent to submodularity.  A valuation is submodular if, for any sets of items $S$ and $T$, $v(S \cup T) + v(S \cap T) \leq v(S) + v(T)$.  This is equivalent to each item having diminishing marginal value, and is more restrictive than subadditivity.}  Furthermore, the natural greedy heuristics that adds items to the assortment one by one, maximizing the marginal revenue increase on each step, fails to obtain a constant approximation for submodular valuations, even in the deterministic case where $\mathcal{F}$ is a point mass.

\paragraph{Algorithmic Results} Motivated by these lower bounds, we characterize settings in which natural methods achieve good approximations, and where exact solutions can be computed in polynomial time.
We first characterize settings where displaying all items is a good approximation to the optimal revenue.  As mentioned earlier, offering all items might be highly suboptimal in the presence of ``cheap'' items that might cannabilize sales from more profitable items.  We show that such an issue is inherently due to items being sold at too low a price.  We say that the goods are ``well-priced'' if, roughly speaking, the price of each bundle is at least its optimal (i.e., Myerson) reserve price, in a world where only that bundle is for sale.  When goods are substitutes, this is equivalent to each individual item's price being at least its Myerson price.  This may be the case if the individual product retailers are behaving like monopolists and not responding to the assortment planner, such as when the platform is driving only a small portion of the producer's overall revenue.  We show that if the goods are well-priced, and the type distribution satisfies the standard regularity property, then offering all items is a $4$-approximation to the optimal revenue.

\begin{theorem}
For combinatorial assortment with well-priced items and regular type distribution, the assortment that selects all items is a $4$-approximation to the optimal expected revenue.
\end{theorem}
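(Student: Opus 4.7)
The plan is to apply Myerson's lemma in virtual-welfare form and then to compare $\mathrm{OPT}$ and $\mathrm{ALL}$ to the ``monopoly on the grand bundle'' quantity $v(N)\cdot g(t^*)$, where $g(r) := r(1-F(r))$ and $t^*$ is its maximizer. Fix any assortment $S$ and note that the buyer's choice $T_S(w) \in \arg\max_{T\subseteq S}(w\,v(T)-p(T))$ is monotone non-decreasing in $w$, so it induces a truthful single-buyer direct mechanism. By Myerson's payment identity, $\mathrm{Rev}(S) = \E[\varphi(w)\,v(T_S(w))]$, where $\varphi(w) = w - (1-F(w))/f(w)$; regularity gives $\varphi(t^*)=0$ and $\varphi(w) \geq 0$ iff $w \geq t^*$, while well-pricing $p(T) \geq t^*\,v(T)$ forces $T_S(w)=\emptyset$ whenever $w < t^*$. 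From $v(T_S(w)) \leq v(N)$ and the identity $\int_{t^*}^\infty \varphi(w) f(w)\,dw = g(t^*)$, one obtains the uniform upper bound $\mathrm{OPT} \leq v(N)\cdot g(t^*)$.

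The substantive direction is the matching lower bound $\mathrm{ALL} \geq \tfrac14 v(N)\cdot g(t^*)$. The key structural fact is that the grand bundle $N$ is always in the option set for $\mathrm{ALL}$, so the utility-maximality of $T_A(w)$ yields the envelope bound $v(T_A(w)) \geq v(N)(1 - r_N/w)$ with $r_N := p(N)/v(N) \geq t^*$. I would split the virtual-welfare integral at $w = 2r_N$: on the high range $w \geq 2 r_N$ we have $v(T_A(w)) \geq v(N)/2$ pointwise, contributing at least $v(N)\cdot r_N(1-F(2 r_N)) = \tfrac12 v(N)\cdot g(2 r_N)$; on the low range $w \in [t^*, 2 r_N)$ I would use the Abel-summation identity $\mathrm{Rev}(N) = \sum_i \bigl(v(T_i)-v(T_{i-1})\bigr)\cdot w_i(1-F(w_i))$ for the menu induced by $\mathrm{ALL}$, together with the well-pricing-derived lower bound $w_i \geq t^*$ on every menu threshold and the unimodality of $g$ at $t^*$ (from regularity), to control the remaining mass $v(N)\bigl(g(t^*) - g(2 r_N)\bigr)$ up to a constant. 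Summing the two ranges yields the $1/4$ bound and hence the claim.

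The main obstacle is the low-type regime, where the grand-bundle comparator becomes useless because $1 - r_N/w$ can be arbitrarily small as $w \downarrow t^*$. The plan there is a dichotomy: either the menu $\mathrm{ALL}$ induces already contains a sub-bundle with $r$-ratio close to $t^*$, in which case those low types actually purchase it and produce a comparable Abel-sum term; or every sub-bundle has $r$-ratio bounded away from $t^*$, which by well-pricing then forces the analogous menu thresholds of $\mathrm{OPT}$ to be large and thus shrinks OPT's low-range revenue by a matching amount. Balancing the two regimes while losing only a factor of four overall is the core technical content.
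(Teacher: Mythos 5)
Your upper bound $\mathrm{OPT} \le v(N)\,g(t^*)$ is correct (the payment-identity step and the use of well-pricing to zero out the allocation below $t^*$ are fine), but the matching lower bound $\mathrm{Rev}(N) \ge \tfrac14 v(N)\,g(t^*)$ that your argument needs is simply false, and this is not just a low-type regime to be patched. Take a single item with $v(\{1\})=1$ and $F$ uniform on $[0,1]$, so $g(w)=w(1-w)$, $t^*=\tfrac12$, $g(t^*)=\tfrac14$. Price the item at $p_1 = 0.99$; this is well-priced ($p_1 \ge t^*$), and $\mathrm{Rev}(N)=g(0.99)=0.0099 < \tfrac14\cdot\tfrac14$. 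Of course the theorem still holds here because $\mathrm{OPT}$ is also $0.0099$, which is exactly the point: when $r_N = p(N)/v(N)$ is far above $t^*$, the benchmark $v(N)\,g(t^*)$ is the wrong anchor, since both $\mathrm{Rev}(N)$ and $\mathrm{OPT}$ fall far below it. Your dichotomy gestures at this (``shrinks OPT's low-range revenue by a matching amount''), but that is a retreat from the framework you set up and would require a direct, quantitative comparison of the two assortments' menus across the whole type range rather than a comparison of each to the fixed quantity $v(N)\,g(t^*)$. As stated there is no proof.

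The paper's actual argument avoids this trap entirely. It first shows (Lemma~\ref{lem:convex 4 approx}) that under the regularity and well-pricedness hypotheses the revenue curve $R(w)=w(1-F(w))$ admits a convex decreasing minorant $\hat R$ on $[r,\infty)$ with $\hat R \le R \le 4\hat R$. It then uses the identity $Rev(T) = \int u_T(w)\,R''(w)\,dw - R'(H)\,u_T(H)$ (Lemma~\ref{lem:revenue expression integral}), which expresses revenue as a \emph{signed} combination of the buyer's indirect utility $u_T(w)$; the convexification makes all coefficients nonnegative, so $\overline{Rev}(T)$ is maximized by maximizing $u_T$ pointwise, i.e.\ by $T=N$. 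The chain $\mathrm{OPT} \le 4\,\overline{\mathrm{OPT}} = 4\,\overline{Rev}(N) \le 4\,Rev(N)$ then compares $\mathrm{Rev}(N)$ against every competitor assortment \emph{under the same modified curve}, never against a loose uniform benchmark. This is where the factor $4$ actually comes from, and it is the structural step your proposal is missing. If you want to salvage a virtual-value approach, you would have to find a menu-to-menu coupling between $N$ and an arbitrary $S$, which is essentially what the convexification buys for free.
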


We also show that if there is a cardinality constraint on the number of items that can be shown, then greedily accepting items to maximize marginal revenue also yields a constant approximation when the valuations satisfy the \emph{gross substitutes} condition, which is a stronger notion of substitutability than submodularity.  

\begin{theorem}
For cardinality-constrained combinatorial assortment with well-priced items, a gross substitutes valuation, and regular type distribution, the assortment that selects items greedily by revenue is a $\tfrac{4e}{e-1}$-approximation to the optimal expected revenue.
\end{theorem}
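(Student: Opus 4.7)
The plan is to combine the structural $4$-approximation from the previous theorem with the classical greedy guarantee for monotone submodular maximization under a cardinality constraint. Let $\mathrm{Rev}(S)$ denote the expected revenue when the assortment shown is exactly $S$ (each buyer then picks her favorite bundle $T \subseteq S$). Let $S^\star$ be an optimal feasible assortment (so $|S^\star| \leq k$) and write $\mathrm{OPT}$ for its expected revenue. Apply the previous theorem to the induced instance whose item set is $S^\star$: the unrestricted optimum on that instance is $\mathrm{OPT}$, and offering the whole set of items is a $4$-approximation, so $\mathrm{Rev}(S^\star) \geq \mathrm{OPT}/4$.

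Next I would prove that $S \mapsto \mathrm{Rev}(S)$ is monotone submodular whenever $v$ is gross substitutes and items are well-priced. Monotonicity follows because enlarging the menu only enlarges each buyer's feasible set, so her chosen bundle generates weakly more revenue (this is where the well-priced assumption matters: without it, adding a cheap low-margin item could cannibalize a high-margin sale). For submodularity, I would argue pointwise: fix a realized type $w$, let $T_w(S) \in \arg\max_{T \subseteq S} (w\,v(T) - p(T))$ be the purchased bundle, and show that the realized revenue $p(T_w(S))$ has diminishing marginals in $S$. The natural tool is the single-improvement / exchange property of gross substitutes demand (Gul--Stacchetti, Kelso--Crawford): when an item $j$ is added to a larger menu $S'$, the new optimum either keeps $T_w(S')$ unchanged or swaps in $j$ in exchange for at most one previously chosen item, and the same exchange is available on any $S \subseteq S'$. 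This lets one couple the purchase decisions across nested menus and conclude $\mathrm{Rev}(S \cup \{j\}) - \mathrm{Rev}(S) \geq \mathrm{Rev}(S' \cup \{j\}) - \mathrm{Rev}(S')$ for the coupled realization; taking expectations over $w \sim \mathcal{F}$ preserves submodularity.

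Once monotone submodularity of $\mathrm{Rev}$ is in hand, the Nemhauser--Wolsey--Fisher analysis of the greedy algorithm that picks at each step the item of largest marginal revenue yields
\[
\mathrm{Rev}(S^{G}) \;\geq\; \left(1 - \tfrac{1}{e}\right) \max_{|S| \leq k} \mathrm{Rev}(S) \;\geq\; \left(1 - \tfrac{1}{e}\right) \mathrm{Rev}(S^\star) \;\geq\; \tfrac{1}{4}\left(1 - \tfrac{1}{e}\right) \mathrm{OPT},
\]
which rearranges to the claimed $\tfrac{4e}{e-1}$-approximation.

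The main obstacle is the pointwise submodularity of $p(T_w(S))$. Gross substitutes implies that $v$ is submodular, but the revenue depends on the buyer's utility-maximizing choice rather than on $v$ directly, and marginal revenue is not submodular for generic submodular $v$ (indeed the earlier negative results show greedy fails for submodular valuations). The argument must genuinely use the GS exchange property together with well-pricing to rule out the pathological case where a newly added item is purchased on the larger menu $S'$ but not on the smaller menu $S$ and yet contributes strictly more revenue than the item it displaces on $S$. Careful tie-breaking — e.g. perturbing prices by infinitesimals to make the demand correspondence single-valued on a full-measure set of $w$ — will likely be needed to make the coupling step rigorous.
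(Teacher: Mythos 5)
Your proof has a genuine gap: the expected revenue $\mathrm{Rev}(S)$ is \emph{not} monotone submodular in $S$, even for well-priced items and gross-substitutes valuations, so the Nemhauser--Wolsey--Fisher guarantee cannot be applied to it directly. Monotonicity already fails: with unit-demand values $v_1=v_2=1$, $v_3=2$, reserve $r=1$, and well-priced $p_1=1.5$, $p_2=1$, $p_3=2$, a buyer with $w=2$ chooses item $1$ from $\{1\}$ (revenue $1.5$) but item $2$ from $\{1,2\}$ (revenue $1$), so adding item $2$ strictly decreases realized revenue. The same example gives increasing marginals for item $3$ ($0.5$ on $\{1\}$ versus $1$ on $\{1,2\}$), so pointwise revenue is not submodular either. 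The well-priced condition is a global pricing condition, not a per-sample guarantee against cannibalization, and gross substitutes gives an exchange property for the buyer's \emph{demand} (hence for her indirect utility), not for the revenue she pays. Relatedly, the step ``apply the unconstrained theorem to the instance restricted to $S^\star$ to get $\mathrm{Rev}(S^\star)\ge\mathrm{OPT}/4$'' is vacuous, since $\mathrm{Rev}(S^\star)=\mathrm{OPT}$ by definition; if $\mathrm{Rev}$ really were monotone submodular your chain would in fact prove the stronger $\tfrac{e}{e-1}$-approximation, which signals that the factor $4$ must enter somewhere else.

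The paper's argument avoids this by never running greedy on $\mathrm{Rev}$ itself. It replaces the revenue curve $R(w)=w(1-F(w))$ with a convex nonincreasing lower envelope $\bar R$ satisfying $\bar R\le R\le 4\bar R$ on $[r,\infty)$ (Lemma \ref{lem:convex 4 approx}), and defines a surrogate objective $\overline{Rev}(T)=\sum_i(v^T_i-v^T_{i-1})\bar R(w^T_i)$. Writing this as $\int_r^H u_T(w)\bar R''(w)\,dw-\bar R'(H)u_T(H)$ shows $\overline{Rev}$ is a nonnegative combination of the indirect utilities $u_T(w)=\max_{S\subseteq T}\bigl(w\,v(S)-\sum_{i\in S}p_i\bigr)$, and it is these indirect utilities that are monotone and submodular in $T$ when $v$ is gross substitutes (Lemma \ref{lem:gs-submodular}, via the $M^{\sharp}$-exchange property you were gesturing at). Greedy on $\overline{Rev}$ then gives a $(1-1/e)$ factor against $\max_{|T|\le\ell}\overline{Rev}(T)$, and the envelope inequality converts this to a $\tfrac{1}{4}(1-1/e)$ factor against the true optimum. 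So your instinct to invoke gross-substitutes exchange and submodular greedy is in the right neighborhood, but the submodular object is the buyer's utility fed through a convexified revenue curve, not the realized revenue, and the factor $4$ is the price of that convexification.
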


In addition to these approximation results, we present an exact algorithm for combinatorial assortment when the valuation function is $k$-demand additive.  That is, when each buyer desires at most $k$ items, and the value for such a bundle is the sum of the individual item values.  This class extends unit-demand valuations to bundles of more than a single item.  For this setting, we describe a dynamic programming solution that runs in time $O(n^{2k})$.  Our solution builds an optimal assortment by first optimizing for high-type buyers and incrementally modifying the assortment to cater to lower types.  This algorithm does not require any assumptions about items being well-priced, and applies whether or not there are cardinality constraints on the assortment.

Finally, for $k$-demand valuations that may not be additive, we show that under a certain revenue-concavity assumption on the type distribution, the optimal assortment will have size at most $k$.  

\paragraph{Extension: Welfare Maximization} We conclude by considering two extensions.  First, we note that most of our positive results apply also to the goal of maximizing welfare, rather than maximizing revenue.  The welfare maximization problem is still non-trivial, since the presence of cheap goods can result in lower-valued items being purchased.  However, we show that if items are well-priced then offering all items is, in fact, the welfare-optimal assortment.  Note that this is a stronger result than for revenue-maximization, where we established a $4$-approximation.  Under a cardinality constraint, the greedy algorithm for assortment yields a $\frac e {e-1}$ approximation to the optimal welfare for well-priced items and gross substitutes valuations.  Finally, our dynamic program for additive $k$-demand valuations applies just as well to the welfare objective, and can be used to compute a welfare-optimal assortment.  Kleinberg et al.~\cite{KleinbergMU2017} study the learnability of a class of comparison-based choice functions.

\paragraph{Extension: Learning} The second extension concerns a setting where $v$ and $\mathcal{F}$ are not known to the seller.  Rather, the seller must learn these through demand queries: repeatedly choosing a slate of items and observing a buyer's choice.  We show that the dynamic programming solution for $k$-demand additive valuations can be implemented in this learning setting, with the loss of an $O(k\epsilon)$ additive error factor, using $\Theta(n^{k+1} \log(n) / \epsilon^2 )$ queries.

\subsection{Related Work}

There is a growing literature on (unit-demand) assortment optimization in the management science literature.  Talluri and van Ryzin~\cite{TalluriR2004} provide a closed-form solution when buyer choices follow the multinomial logit model.  Rusmevichientog et al.~\cite{RSS2010} extend this solution to the case of cardinality-restricted assortment, and Davis et al.~\cite{DavisGT2014} show how to solve for the optimal assortment under more general nested logit models.  When the choice function is described by a mixture of multinomial logit models, the assortment problem is NP-hard but various integer programming methods and approximation algorithms are known~\cite{BrontMDV2009,DesirG2015,MendezDiazBVZ2014}.

There has also been work studying learning in assortment, where the product slate can be adjusted to learn customer preferences.  Caro and Gallein~\cite{CaroG2007} consider learning in a model of assortment without substitution effects, where the demand for each product is unaffected by the other products in the assortment.  Ulu et al.~\cite{UluHA2012} study the dynamic learning problem when products exhibit purely horizontally differentiation, as modeled by location on a line segment.  Agrawal et al.~\cite{AgrawalAGZ2016} consider a multi-armed bandit model of dynamic assortment, and show how to achieve near-optimal regret for multinomial logit choice models.  Kleinberg et al.~\cite{KleinbergMU2017} consider a general class of comparison-based choice models, and study the complexity of learning their model from samples.

The combinatorial assortment problem can be viewed as a restricted form of mechanism design, where the design space consists only of choosing which subset of items to display.  This is more restrictive than sequential posted pricing, where the designer can also choose the price at which each item can be sold (e.g., \cite{Chawla2010}).

\section{The Combinatorial Assortment Optimization Problem}

There is a set $N$ of $n$ items. Each item $i$ has a fixed price $p_i \geq 0$. We assume items are indexed so that $p_1 \leq p_2 \leq \cdots \leq p_n$.  There is an unbounded supply (i.e., number of copies) of each item.

There is a collection of buyers, each of whom wish to purchase a subset of the items.  Each buyer $j$ has a value $u_j(S) = w_j \cdot v(S)$ for each subset $S \in [n]$ of goods.  Here $v(S)$ is a common valuation that determines the relationship between the goods, for all buyers, and $w_j$ is a buyer-specific scaling factor.  We refer to $w_j$ as the \emph{type} of buyer $j$.  We assume that each $w_j$ is sampled independently from a distribution $\mathcal{F}$, which we refer to as the type distribution.  We sometimes also call $w_j$ the multiplicative noise of buyer $j$.
When $\mathcal{F}$ is a point mass on 1 (i.e., $u_j = v$ for each buyer $j$), we call the problem \emph{noiseless}.  We call the general problem \emph{noisy}.

Given a subset of items $T$ displayed to a buyer $j$, the buyer will pick $S \subseteq T$ maximizing $u_j(S) - \sum_{i \in S} p_i$ and pay $\sum_{i \in S} p_i$. Our goal as a seller is to pick an \emph{optimal assortment}, which is a subset $T$ of at most $\ell$ items that maximizes the expected revenue. 
Here $\ell$ is a parameter of the problem.  
We will focus first on the \emph{unconstrained} case of $\ell = n$, then consider general $\ell$ in Section~\ref{sec:constrained}.
For most of the paper we will assume that $v$ and $\mathcal{F}$ are known to the seller and given as inputs to the optimization problem.  In Section~\ref{sec:learn} we relax this assumption and suppose $v$ and $\mathcal{F}$ are fixed but unknown to the seller, who must learn about them by interacting with buyers.

\paragraph{Valuation classes.}
We focus on variants of the combinatorial assortment problem where the valuation function $v$ lies in a given class.  We assume that valuations are monotone non-decreasing and normalized so that $v(\emptyset) = 0$.  In this paper we will focus on the following valuation classes, which encode forms of substitutability between items.

\begin{itemize}
\item \textbf{additive:} there exist $v_1, \dotsc, v_n \geq 0$ such that $v(S)=\sum_{i \in S} v_i$.
\item \textbf{XOS:} there exist additive valuations (i.e., clauses) $v_1,...,v_m$ such that $v(S)=\max_{i \in [m]} v_i(S)$.  
\item \textbf{submodular:} for all $S, T \subseteq [n]$, $v(S \cup T) + v(S \cap T) \leq v(S) + v(T)$. 
\item \textbf{gross substitutes:} 
for all $S,T \subseteq [n]$ and $x \in S$, one of the following is true:\footnote{We use the M\#-exchange characterization of gross substitutes, since it will be convenient for our proofs~\cite{Leme}.}
\begin{enumerate}
\item $v(S) + v(T) \leq v(S \backslash \{ x\}) + v(T \cup \{x \})$. 
\item There exists $y \in T$, $v(S)  + v(T) \leq v(S \backslash \{x \} \cup \{y\}) + v(T \backslash \{y\} \cup \{x\})$. 
\end{enumerate}
\end{itemize}

We will also be interested in valuations that encode a constraint that a buyer does not derive benefit from receiving more than a certain number of items.

\begin{definition}
Valuation $v$ is $k$-demand if, for all $S \subseteq N$, $v(S)= \max_{T \subseteq S, |T|\leq k} v(T)$.  That is, the buyer derives no benefit from receiving more than $k$ items. 
We say that valuation $v$ is additive (resp.\ XOS, submodular) $k$-demand if there is an additive (resp.\ XOS, submodular) valuation $v'$ such that, for all $S \subseteq N$, $v(S) = \max_{T \subseteq S, |T| \leq k} v'(T)$.
\end{definition}

We note that these valuation classes can be ordered from most to least restrictive, as follows:
\emph{Additive $k$-demand $\subseteq$ gross substitutes $\subseteq$ submodular $\subseteq$ XOS.  }

\section{Hardness of Combinatorial Assortment}
\label{sec:hardness}

In this section we explore the hardness of the Combinatorial Assortment problem.  We give a general hardness of approximation result for XOS valuations, even in the noiseless setting.  We then show that even when valuations can be succinctly represented, the problem remains NP-hard.  We also demonstrate that even when valuations are submodular, the natural greedy heuristic fails to obtain a good approximation.  All missing proofs can be found in Appendix \ref{app:hardness}.

\paragraph{Hardness of approximation, even without noise.}

We begin by considering the noiseless setting, where $\mathcal{F}$ is a point mass at $1$ and hence the valuation of the buyer is known exactly.  
Our first result shows that for XOS valuations, the combinatorial nature of the problem leads to strong hardness of approximation.  Indeed, it may take exponential many demand queries to achieve better than an $O(\sqrt{n})$-approximation to the combinatorial assortment problem.

\begin{theorem}
\label{thm:XOShard}
For XOS valuations, any $o(n^{1/2 - \varepsilon})$-approximate algorithm for the combinatorial assortment problem 
requires $\Omega(\exp ({n^{2\varepsilon}/24})/n)$ demand queries.
\end{theorem}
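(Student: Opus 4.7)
The approach is a two-party communication reduction. I imagine the XOS valuation $v$ as jointly held by two oracles, Alice and Bob, each controlling part of its description. The aim is to exhibit a family of instances for which distinguishing ``high-revenue'' from ``low-revenue'' cases requires $\Omega(\exp(n^{2\varepsilon}/24))$ bits of inter-oracle communication, whereas any single demand query to the combined valuation can be simulated with only $O(n)$ bits of inter-oracle communication. Both parties can locally compute the utility-maximizing bundle for each of their private additive clauses (given a price vector $p$, the optimum for an additive clause with weights $w$ is simply $\{i : w_i > p_i\}$), then exchange $O(n)$ bits to determine the global argmax. Hence any algorithm using $q$ demand queries induces a protocol of total cost $O(qn)$, so the communication lower bound forces $q = \Omega(\exp(n^{2\varepsilon}/24)/n)$.

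For the hard two-party source problem I reduce from set-disjointness on exponentially long inputs of size $N = 2^{n^{2\varepsilon}/24}$, whose randomized communication complexity is $\Omega(N) = \Omega(\exp(n^{2\varepsilon}/24))$. To encode such an instance as a combinatorial assortment problem I partition the $n$ items into $k = \lfloor n^{1/2-\varepsilon}\rfloor$ blocks of size $\Theta(n^{1/2+\varepsilon})$, and associate each index $j \in [N]$ to a pair of transversals $(T_j^A,T_j^B)$ chosen by a random construction so that distinct indices produce nearly-disjoint transversals with high probability (the number of available transversals, roughly $(n/k)^k$, hugely exceeds $N$, giving plenty of slack). The XOS clauses are then designed so that the ``target'' at $j$ becomes activated---producing an additive clause of value $\Theta(k)$ on a specific bundle of size $\Theta(k)$---if and only if both parties hold $j$ in their disjointness input. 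With item prices set to $p_i = 1$, the noiseless buyer extracts zero surplus at the activated bundle and purchases it (under tie-breaking), so YES instances admit revenue $\Omega(k) = \Omega(n^{1/2-\varepsilon})$, while NO instances activate no target and I will argue admit only revenue $O(1)$ across all assortments.

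The main obstacle will be the valuation construction itself. XOS is closed under maxima of additive clauses, so the buyer can take the best single clause on any displayed bundle; a naive encoding where each party independently contributes ``full'' single-transversal clauses yields only a factor-$2$ gap between YES and NO, not the $\sqrt{n}$-sized gap required, because Alice's clause alone (or Bob's alone) already hands the buyer a good bundle. Getting the right gap demands that non-cooperative clauses (those contributed by only one party) yield at most an additive constant to the buyer's surplus on any bundle, while the cooperatively-activated clause in YES instances yields $\Theta(k)$. I expect the resolution to involve: (i) splitting the value of each index-$j$ clause into low-weight contributions $\alpha_j$ from Alice and $\beta_j$ from Bob, each individually too weak to generate surplus on its own, which sum to an additive clause of total value $\Theta(k)$ only when $j \in X \cap Y$; (ii) a concentration-and-union-bound argument over the random transversal construction showing that in NO instances no subset $S$ accumulates value exceeding $|S|+O(1)$ from any single clause, so buyer surplus (and hence revenue) remains $O(1)$; and (iii) a careful check that even with the split-clause structure, demand queries can still be answered in $O(n)$ inter-oracle bits by each party sending their best candidate bundle and its marginal contribution to the coefficient vector. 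Verifying the gap survives all three constraints simultaneously is the technical heart of the proof.
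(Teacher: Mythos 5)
Your high-level framework is right: splitting the XOS description between two oracles, showing that each demand query can be simulated with $O(n)$ bits of inter-oracle communication, and then appealing to an $\Omega(\exp(n^{2\varepsilon}))$ communication lower bound on a source problem. The paper's proof has exactly this skeleton. However, the core construction you propose has a gap that would make the argument break down.

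The problem is the ``split-clause'' device at step (i) of your resolution, where an additive clause with coefficients $\alpha_j + \beta_j$ is supposed to come into existence only when $j \in X \cap Y$. This is not a well-defined object in the XOS model you are reducing to: the combinatorial assortment instance takes as input a single valuation $v = \max_\ell c_\ell$, and in a two-oracle split each clause $c_\ell$ is held \emph{entirely} by one of the two parties (that is precisely what lets each party answer ``what is my best clause on $S$ at prices $p$?'' locally). If a clause's coefficients depend on both Alice's and Bob's private data, then no party can evaluate that clause on its own, neither can answer the demand query, and the $O(n)$-bit simulation in step (iii) collapses. This is not a technicality you can patch: the whole point of the lower bound is that a demand query must be \emph{cheap} to simulate relative to the hardness of the source problem, and cross-party clauses destroy exactly that asymmetry.

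This is also why the paper reduces from \emph{equality} rather than disjointness, and why the sign of the gap runs the opposite way from what you expect. In the paper's construction Alice holds ``c-clauses'' over large sets $X_i$ with small per-item weight $2$, and Bob holds ``d-clauses'' over small pieces $Y_{i,j}$ with large per-item weight $b+2$; each clause belongs to one party, so demand queries simulate in $O(n)$ bits. When $W_X = W_Y$, Bob's high-value small clauses \emph{dominate} every one of Alice's large clauses, so the buyer always prefers a tiny bundle and revenue is at most $2n^\varepsilon$; when $W_X \ne W_Y$, some $X_i$ is ``uncovered'' and the seller can display it to collect revenue $n^{1/2}$. Note that ``equal'' maps to \emph{low} revenue and ``unequal'' to \emph{high} revenue --- opposite to your disjointness intuition that intersection should trigger the high-revenue case. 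That inversion is forced by the max-of-additives structure: you cannot make a clause appear when parties agree, but you can make one party's clauses \emph{defeat} the other's. Your instinct in step (ii), that the hard part is keeping non-cooperative clauses from generating surplus, is exactly the right worry; the paper resolves it by the dominance relationship rather than by AND-ing the two inputs together.

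To salvage your plan you would need to re-derive the construction with entirely one-party clauses and switch the source problem to one (like equality) whose ``yes'' and ``no'' branches map naturally to ``all of Alice's clauses are dominated'' versus ``some clause of Alice's survives.'' As written, the proposal is not a valid reduction.
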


Note that Theorem~\ref{thm:XOShard} is a query complexity bound, and puts no limitations on the algorithm's running time.
Theorem \ref{thm:XOShard} can be extended to a more general statement about communication complexity under a certain query model. See Remark \ref{rm:cc} for details. The general result will suggest that combinatorial assortment problem is hard to approximate with a sub-exponential number of a certain class of queries. 
Note that we cannot hope for Theorem~\ref{thm:XOShard} to extend to a fully general communication complexity bound with an arbitrary query model: if arbitrary queries are allowed, one could directly ask for the optimal assortment, which can be succinctly described.

The proof of Theorem \ref{thm:XOShard} follows by reducing from the communication complexity of the equality function to the combinatorial assortment problem.  Two players, Alice and Bob, play a communication game where they each hold an (exponentially-long) input string and want to determine if they hold the same string.  They each use their input strings to construct XOS function clauses, and the input to the combinatorial assortment problem will be the XOS valuation function containing both Alice and Bob's clauses.
Each of Alice's clauses corresponds to a large set of items, and assigns small values; each of Bob's corresponds to a small set, and assigns large values.  The buyer will only ever buy a set of items corresponding to one of these clauses.  The optimizer would prefer that the buyer chooses one of Alice's large sets.  However, the clauses are constructed so that if Alice and Bob's inputs are equal, then each of Alice's clauses is ``dominated'' by one of Bob's clauses, so there is no assortment where the buyer purchases many items.  However, if the inputs are unequal, then at least one of Alice's clauses is ``uncovered,'' and the corresponding items would be purchased if they were the only items available.
By carefully designing the XOS clauses in this way, we can show that approximation of the combinatorial assortment problem will also solve the equality problem.  

\paragraph{Hardness for succinct valuations.}

Theorem~\ref{thm:XOShard}'s hardness is a communication bound, and relies on the fact that an XOS function may require exponentially many bits to fully describe.  As we now show, the combinatorial assortment problem remains hard even for XOS valuations with succint descriptions.  In particular, the problem is NP-hard, again in the noiseless setting, even if we restrict to valuations with only two clauses (i.e., the maximum of two additive functions).  

\begin{theorem}
\label{thm:XOS2hard}
For any XOS valuations with only 2 clauses, finding the optimal revenue is NP-hard in the noiseless case and the offline setting. 
\end{theorem}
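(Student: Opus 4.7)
The plan is to reduce from Knapsack, which is NP-hard. Given a Knapsack instance with items of weights $w_i$ and values $r_i$ and capacity $W$, I will build a $2$-clause XOS assortment instance with $n+1$ items whose optimal revenue equals $M + \mathrm{OPT}_{\mathrm{KS}}$ for a sufficiently large constant $M$.

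For each Knapsack item $i\in[n]$, introduce a ``main'' item with price $p_i=r_i$ and additive clauses $v_1(i)=2r_i$ and $v_2(i)=2r_i+w_i$. Add a single ``reward'' item $n+1$ with $p_{n+1}=M$, $v_1(n+1)=M+W$, $v_2(n+1)=0$, where $M := 1+\sum_i r_i$. The $2$-clause XOS valuation is $v(S)=\max(v_1(S),v_2(S))$.

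The key step is the buyer analysis. Under clause $j$ the buyer's utility from $T$ is $U_j(T)=\sum_{i\in T}\max(v_j(i)-p_i,0)$. Every item has $v_1(i)>p_i$, so the clause-$1$ buyer purchases all of $T$, giving $U_1(T)=\sum_{i\in T\cap[n]} r_i + W\cdot\mathbf{1}\{n+1\in T\}$; the reward has $v_2(n+1)<p_{n+1}$, so the clause-$2$ buyer purchases only the main items in $T$, giving $U_2(T)=\sum_{i\in T\cap[n]}(r_i+w_i)$. Comparing, the buyer picks clause $1$ iff $W\cdot\mathbf{1}\{n+1\in T\}\geq\sum_{i\in T\cap[n]} w_i$, which, when the reward is included, is exactly the Knapsack capacity constraint. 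The corresponding revenues are $\sum_{i\in T\cap[n]} r_i+M\cdot\mathbf{1}\{n+1\in T\}$ in clause $1$ and $\sum_{i\in T\cap[n]} r_i$ in clause $2$.

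To close the reduction, observe that $M>\sum_i r_i$ implies the seller strictly prefers a clause-$1$ outcome, which requires both including the reward and keeping the main-item weight at most $W$. Hence the optimal revenue is $M+\mathrm{OPT}_{\mathrm{KS}}$, so any polynomial-time solver for $2$-clause XOS assortment would solve Knapsack. The main obstacle is designing the gadget so the buyer's clause selection exactly encodes the Knapsack constraint: the critical trick is to arrange that the reward contributes $+W$ to $U_1-U_2$ while each main item contributes $-w_i$, so the clause-flip threshold coincides with the capacity; setting $p_i=r_i$ simultaneously makes the collected revenue track the Knapsack objective.
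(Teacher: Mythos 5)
Your proposal is correct and takes essentially the same approach as the paper: a reduction from Knapsack using a two-clause XOS gadget in which each Knapsack item becomes a ``main'' item whose price tracks the Knapsack value and whose clause-gap tracks the weight, plus a single high-priced ``reward'' item that is valuable only under clause~1, so that the buyer's clause choice encodes the capacity constraint and the resulting revenue encodes the Knapsack objective. The paper's concrete numbers differ slightly (it uses clause-1 margins of~$1$ rather than~$r_i$ and invokes an auxiliary lemma that some optimal assortment is fully purchased, which you handle by a direct utility analysis), but the reduction is the same in structure and correctness.
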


The idea of the proof is to relate the optimal revenue of the combinatorial assortment problem to the solution to a knapsack problem, implementing the knapsack constraints by comparing values between the two clauses in the combinatorial assortment problem. 

\paragraph{Hardness for $2$-demand valuations in the noisy setting.}
\label{sec:2demandhard}

One might also wonder if the hardness results above are driven by the large sets of goods desired by the buyers.  What if we restrict attention to $k$-demand buyers, where $k$ is a small constant?  

One observation is that in the noiseless setting, the optimal assortment for a $k$-demand valuation will contain at most $k$ items, so the problem can be solved in time $n^{O(k)}$ by evaluating the revenue for all subsets of size $k$.  So this question is interesting only in the more general noisy setting.  

Theorem \ref{thm:2demandhard} shows that even for submodular 2-demand valuations, there can be no FPTAS for combinatorial assortment.  Therefore, we can only hope to get an efficient algorithm for $k$-demand valuations if we add add more restrictions, for example, to require the valuations to also be additive.

\begin{theorem}
\label{thm:2demandhard}
For submodular $2$-demand valuations, it is NP-hard to approximate the optimal revenue within approximation factor $1+1/n^c$, for some large enough constant $c$.  In particular, there is no FPTAS in this setting unless $P = NP$. 
\end{theorem}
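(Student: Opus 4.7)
The plan is to reduce from an NP-hard decision problem whose optimum is a polynomially-bounded integer, in such a way that the optimal revenue of the resulting assortment instance equals, up to a polynomial rescaling, this optimum. Because the optimum is then an integer lying in a polynomially-bounded range, any $(1+1/n^c)$-approximation with $c$ larger than the exponent of that range would recover the optimum exactly, and hence decide the source NP-hard problem. Since an FPTAS supplies such an approximation in polynomial time by setting $\varepsilon = 1/n^c$, this rules out an FPTAS and yields the claimed hardness.

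For concreteness I would reduce from the decision version of Maximum Independent Set: given $G=(V,E)$ and $k$, decide whether $G$ has an independent set of size $k$. Items correspond to vertices (plus a few gadget items to anchor buyer behaviour), all prices are set to a common constant $p$, and the submodular $2$-demand valuation $v'$ assigns each singleton value $v_0$ and each pair $\{u,v\}$ value $v_0+\delta\cdot\mathbf{1}[\{u,v\}\notin E]$, with $\delta \ll v_0$. The type distribution $\mathcal{F}$ consists of a small number of atoms, each calibrated so that a buyer of that type contributes to the seller's revenue only when a particular combinatorial event is triggered by the displayed set $T$ (for instance, that $T$ contains a non-edge incident to a designated vertex). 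Summing over the support of $\mathcal{F}$, the expected revenue decomposes as an affine function of the number of non-edges induced by $T$, whose maximization is equivalent to Maximum Independent Set in $G$.

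Two checks are required. First, the chosen values for $v'$ on singletons and pairs must extend to a genuinely submodular function under the $2$-demand rule $v(S) = \max_{T\subseteq S,\,|T|\le 2} v'(T)$. This reduces to pairwise subadditivity, $v'(\{i,j\}) \le v'(\{i\}) + v'(\{j\})$, together with the triangle constraint $v'(\{a,b\}) + v'(\{a,c\}) \ge v'(\{a\}) + \max\bigl(v'(\{a,b\}), v'(\{b,c\}), v'(\{a,c\})\bigr)$, both of which hold comfortably when $\delta \ll v_0$. Second---and this is the main obstacle---the prices and type atoms must be calibrated so that, for \emph{every} possible assortment (not merely the ones the seller would optimally display), each buyer type selects the intended pair or singleton. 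This price-threshold analysis is where the bulk of the technical care lies; any slack could let a ``singleton type'' accidentally purchase a pair, or let a ``pair type'' accept an edge pair, contaminating the revenue accounting and breaking the correspondence with independent set.

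Once this calibration is in place, the inverse-polynomial revenue gap between YES and NO instances follows immediately: the revenue difference amounts to at least one additional non-edge contribution, while the total revenue is a polynomially-bounded rational with bounded denominator. Thus a $(1+1/n^c)$-approximation for sufficiently large constant $c$ would decide the Maximum Independent Set instance, establishing the stated inapproximability and, in particular, ruling out an FPTAS unless $P=NP$.
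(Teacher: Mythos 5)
Your high-level plan---reduce from an NP-hard decision problem whose optimum is a polynomially-bounded integer, so that a $(1+1/n^c)$-approximation pins it down exactly---is exactly the right framing and matches the paper's strategy (the paper reduces from $k$-clique rather than Max Independent Set, but that choice is inessential). The gap lies in the middle step, where you assert without justification that the expected revenue can be made ``an affine function of the number of non-edges induced by $T$.''

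With all items at a common price $p$, singletons at value $v_0$, and non-edge pairs at value $v_0+\delta$, the buyer's $2$-demand utility maximization only cares about \emph{whether} the assortment contains \emph{any} non-edge pair: a type-$w$ buyer who clears the threshold $w>p/\delta$ buys \emph{some} non-edge pair if one exists (revenue $2p$), and otherwise a singleton (revenue $p$). The revenue as a function of $T$ is therefore a step function depending on $\mathbf{1}[T\text{ contains a non-edge}]$, not a sum over non-edges, and adding more type atoms cannot change this because symmetric prices make all non-edge pairs interchangeable to every buyer. So the decomposition you need simply does not hold in your construction. Moreover, even granting some affine dependence on non-edge count, maximizing non-edges is not equivalent to Max Independent Set (taking all vertices typically maximizes non-edges); you would need an additional cardinality penalty, which you do not supply. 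The ``calibration'' you flag as the main obstacle is indeed exactly where the real work lies, and constant prices give you no handle on it: the buyer cannot be made to respond to a ``designated vertex'' when all options look identical to her.

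The paper resolves this by abandoning uniform prices entirely. Vertex $i$ is given price $h_i$ and value $\sqrt{h_i}$, and a pair $\{i,j\}$ has price $h_i+h_j$ and value $\sqrt{h_i+h_j}$, so every purchasable option lies on the curve $(\text{value},\ \text{value}^2)$; a type-$w$ buyer then simply maximizes $wv-v^2$, and the indifference points between options are the pairwise sums of values, which are made all distinct by the choice of the $h_i$. Additional ``anchor'' items $b_0,\ldots,b_m$ are interleaved with these options, and a carefully designed base distribution $D$ (with convex revenue curve) forces every optimal assortment to include all anchors while leaving the revenue insensitive to which $h_i$ items are added. A second distribution $D'$ places atoms just below and just above each $2a_i$ so that each atom's contribution toggles on exactly when a specific item or specific pair is present, which is what lets the revenue encode the full quadratic objective $\sum_i \alpha_i x_i - \sum_{i\neq j}\beta_{ij}x_ix_j$ rather than a single bit. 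That per-item and per-pair addressability is precisely what your constant-price construction lacks and what you would need to recover before the reduction can go through.
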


The proof is a reduction from the $k$-clique problem. Given a graph, we construct a 2-demand valuation and a distribution $\mathcal{F}$ over the types. We embed the edge information into the prices of pairs of items. $\mathcal{F}$ is carefully chosen such that an assortment has large revenue if and only if it corresponds to a set of vertices which form a $k$-clique in the original graph. 

\paragraph{Greedy assortment fails for submodular valuations.}

We've shown that there is no FPTAS for submodular valuations in the general noisy setting.  One might wonder if it's possible to obtain a constant approximation, however, by using a simple heuristic.  One natural idea for submodular valuations is to use a greedy approach: repeatedly add the revenue-maximizing item to the assortment, until either no item remains or until adding any one item causes revenue to decrease.  The following example shows that this heuristic can lead to approximation $\Omega(n)$, even without noise.

\begin{example}
There are $n = m+1$ items, which we'll label $\{0, 1, \dotsc, m\}$.  The valuation $v$ is: 
\[ v(S) = \begin{cases} 
m|S| & \text{ if $0 \not\in S$ }\\
m+(m-1)|S| & \text{ otherwise }
\end{cases}
\]
One can verify that this valuation is indeed submodular.  Suppose $p_0 = m$ and $p_j = m-2$ for all $j > 0$.  The greedy algorithm selects item $0$ first, as it generates revenue $m$ which is larger than $m-2$, the revenue from any other single item.  However, having selected item $0$, the greedy algorithm would not add more items, since if the assortment is $\{0,i\}$ for any $i > 0$, the buyer would choose to buy only item $i$ leading to a loss of revenue.  So  greedy obtains revenue $m$.  The optimal assortment takes all items other than $0$, for a revenue of $(m-2)m$.
\end{example}
 
\section{Structural and Algorithmic Results}

\paragraph{Approximate assortment for well-priced items.}\label{sec:structural}

As mentioned in the introduction, it can be highly suboptimal to select all items in the combinatorial assortment problem, since the presence of a cheap but valuable item might cannibalize revenue from more expensive items.
One might wonder, then, if such a situation can be made less severe if the items are all priced ``reasonably.''  For example, suppose that each individual item is assigned the price that would maximize revenue when that item is sold by itself.  Indeed, we would argue that such prices are very reasonable if the items are typically sold separately, and it is precisely the assortment platform that presents these items in combination with each other.  We will show that under such an assumption, plus a regularity assumption on the type distribution, it is approximately revenue-optimal to show all of the items.
Let us first define formally the assumptions needed for our result.
\begin{definition}[Regularity]
We say that type distribution $F$ is \emph{regular} if the virtual value function $\phi(w) = w - \frac{1-F(w)}{f(w)}$ is non-decreasing, where $f$ denotes the density function of distribution $F$.
\end{definition}

Regularity is a common assumption in the revenue maximization literature.  Many natural distributions are regular, including uniform, gaussian, and exponential distributions.  

\begin{definition}[Revenue Curve]
The \emph{revenue curve} of a type distribution $F$ is $R(p) = p(1 - F(p))$.
\end{definition}

We can think of $R(p)$ as describing the revenue obtained if we were to offer a single item with value $1$ and price $p$ to a buyer whose type is drawn from distribution $F$. As we show the total revenue of an assortment can be expressed as a function of $R$. 

The \emph{optimal reserve} (or \emph{Myerson reserve}) for $F$ is the value $r$ that maximizes $R(r)$ (or the supremum over such values $r$, if the maximum is not unique).

\begin{definition}[Well-priced]
Suppose type distribution $F$ is regular with Myerson reserve $r$ and non-increasing density after $r$.\footnote{In fact, our results for well-priced combinatorial assortment hold for distributions that satisfy a weaker condition than regularity.  It is enough for the well-pricedness condition to hold for some value $r$ (not necessarily the Myerson reserve) such that the density function $f$ is non-increasing after $r$, and the revenue curve $R$ is non-increasing after $r$.}
Then the combinatorial assortment problem with type distribution $F$ is \emph{well-priced} if, for each subset of items $S \subseteq N$,
$$\sum_{i \in S} p_i \geq r \cdot v(S).$$
\end{definition}

We think of $r$ as a desired threshold on the type of buyers who purchase items.  For example, if we focus on a single item $i$, then reserve $r$ corresponds to a price of $r \cdot v_i$, as this is the price at which a buyer with type $w > r$ would choose to purchase.
The well-priced condition requires that the price assigned to any set of items is at least the reserve $r$, scaled appropriately to the value of the set.  Note that if $v$ is subadditive, it is enough for each individual item to be well-priced as this implies the condition for any larger set of items as well.

We show that for well-priced instance of the combinatorial assortment problem, selecting all items yields a $4$-approximation to the optimal revenue. The proof can be found at Appendix \ref{sec:convex-proof}.

\begin{theorem}\label{thm:convex}
Choosing $S = N$ is a $4$-approximation to the optimal revenue for well-priced combinatorial assortment.
\end{theorem}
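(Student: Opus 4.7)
My plan is to prove the bound via a Myerson-style virtual-welfare analysis, decomposing both $\text{OPT}$ and $\text{Rev}(N)$ in terms of the revenue curve $R(x) = x(1-F(x))$ and comparing them through the well-priced condition.

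The first step is to write, for any assortment $T \subseteq N$,
\[
\text{Rev}(T) \;=\; \E_w\bigl[\phi(w)\, v(S^*(w,T))\bigr],
\]
where $S^*(w,T) \in \arg\max_{S\subseteq T}(w\,v(S)-p(S))$ is the buyer's utility-maximizing bundle and $\phi(w) = w - (1-F(w))/f(w)$ is the virtual value. This is a consequence of Myerson's payment identity applied to the single-dimensional posted-price mechanism induced by $T$; monotonicity of the allocation $v(S^*(w,T))$ in $w$ follows from a standard two-sample revealed-preference argument. Well-pricedness forces any purchasing buyer to have $w \ge r$, since a nonempty chosen bundle needs $w\,v(S^*) \ge p(S^*) \ge r\,v(S^*)$, which is exactly the regime where $\phi(w) \ge 0$ by regularity.

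The second step is an upper bound on $\text{OPT}$. Let $r_0 := \min_{\emptyset \ne S \subseteq N} p(S)/v(S) \ge r$ be the smallest bundle price-per-value ratio in $N$, attained by some bundle $S_0$. For any $T \subseteq N$, no buyer with $w < \bar p_{\min}(T) := \min_{\emptyset \ne S \subseteq T} p(S)/v(S)$ purchases, and $\bar p_{\min}(T) \ge r_0$. Combining $v(S^*(w,T)) \le v(T) \le v(N)$ with the identity $\int_a^\infty \phi(w)\, f(w)\,dw = R(a)$ yields
\[
\text{OPT} \;\le\; v(N)\,R(\bar p_{\min}(T^*)) \;\le\; v(N)\,R(r_0),
\]
where the final step uses that $R$ is non-increasing past $r$.

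The third and main step is a matching lower bound on $\text{Rev}(N)$. I plan to combine two revealed-preference lower bounds on the allocation $v(S^*(w,N))$: first, every $w \ge r_0$ could opt for $S_0$, so $v(S^*(w,N)) \ge v(S_0)$; second, every $w \ge \bar p := p(N)/v(N)$ gets utility at least $w\, v(N) - p(N)$ from buying all of $N$, which together with $p(S^*) \ge r\, v(S^*)$ rearranges to $v(S^*(w,N)) \ge v(N)(w-\bar p)/(w-r)$. Substituting these into the Myerson formula, using $\int_a^b \phi(w) f(w)\, dw = R(a)-R(b)$ and the non-increasing behavior of $f$ and $R$ past $r$, I expect to conclude
\[
\text{Rev}(N) \;\ge\; \tfrac{1}{4}\,v(N)\,R(r_0),
\]
which combined with the upper bound yields the desired $4$-approximation.

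The hard part will be the integration in the last step: pairing the two lower bounds so that neither dominates wastefully, and controlling the region $w \in [r_0, \bar p]$ where the buyer may pick a bundle much smaller than $N$. The factor of $4$ should emerge as two factors of $2$: one because the revealed-preference bound $(w-\bar p)/(w-r)$ only reaches $1/2$ at $w = 2\bar p - r$, and a second absorbed by relating $\int_{r_0}^\infty \phi(w) (w-\bar p)/(w-r)\, f(w)\, dw$ back to $R(r_0)$ via the non-increasing density assumption after $r$.
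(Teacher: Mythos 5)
Your proposal takes a genuinely different route from the paper, but it has a gap that I do not think can be closed as stated: the target lower bound $\text{Rev}(N) \ge \tfrac{1}{4}\,v(N)\,R(r_0)$ is simply false, so the sandwiching strategy against the upper bound $v(N)\,R(r_0)$ cannot work. Consider $n$ items with additive valuation $v_i = 1$, prices $p_1 = 1$ and $p_i = M^{i-1}$ for $i\ge 2$ with $M$ large, and type distribution $F(w) = 1-1/w^2$ on $[1,\infty)$. This is regular with non-increasing density, reserve $r=1$, $R(w) = 1/w$, and every bundle is well-priced. Here $r_0 = 1$, $v(N) = n$, so $v(N)\,R(r_0) = n$; yet $\text{Rev}(N) = \sum_i 1/p_i \to 1$ as $M\to\infty$. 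So $\text{Rev}(N)$ is not within any constant of $v(N)\,R(r_0)$ once $n\ge 5$. Your two revealed-preference bounds diagnose exactly why: the first gives only $\text{Rev}(N) \ge v(S_0)\,R(r_0)$ (here $v(S_0) = 1 \ll v(N)$), and the second only activates for $w\ge \bar p \approx M^{n-1}/n$, where $R$ is already negligible. The real issue is that your upper bound $\text{OPT} \le v(N)\,R(r_0)$ throws away too much: replacing $v(S^*(w,T^*))$ by the uniform cap $v(N)$ is very loose precisely when item prices are heterogeneous, and the theorem is true in those instances for a reason this decomposition cannot see.

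By contrast, the paper sidesteps the need for matching absolute upper and lower bounds entirely. It constructs a convex, decreasing surrogate $\hat R$ with $\hat R \le R \le 4\hat R$ on $[r,\infty)$ (using exactly the two monotonicity hypotheses on $R$ and $f$ that you invoke), then uses the integration-by-parts identity $\overline{\text{Rev}}(T) = \int_r^H u_T(w)\,\hat R''(w)\,dw - \hat R'(H)\,u_T(H)$. Convexity and monotonicity of $\hat R$ make both coefficients non-negative, so $\overline{\text{Rev}}$ is a positive functional of the utility curve $u_T$, which $T = N$ dominates pointwise. The factor $4$ then transfers because the surrogate sandwiches $R$ uniformly in $T$, not because $\text{Rev}(N)$ and $\text{OPT}$ are separately comparable to some fixed quantity. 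If you want to keep a Myerson-flavored argument, you would need a per-instance upper bound on $\text{OPT}$ that degrades gracefully when the optimal allocation is small even for high types, which effectively forces you back to a utility-curve comparison of the kind the paper makes.
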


The idea behind Theorem~\ref{thm:convex} is to show that the revenue curve $R$ can be well-approximated by a modified revenue curve $\hat{R}$ that is convex on the range $[r,\infty)$.  We show that for convex curves, maximizing revenue reduces to the problem of maximizing utility, and hence the (modified) revenue is maximized by the assortment that maximizes buyer utility, which is to display all items.

\paragraph{Exact assortment when revenue is concave.}\label{sec:concave}

This approximation result used intuition that when revenue curves are convex, it is preferable to show as many items as possible.  As it turns out, the reverse intuition holds as well: if the revenue curve is concave, then it is preferable to show fewer items.  In particular, if buyers are $k$-demand, then the optimal assortment will consist of at most $k$ items. The proof of Theorem \ref{thm:concave} can be found in Appendix \ref{app:concave}.

\begin{theorem}\label{thm:concave}
Suppose that buyers are $k$-demand, 
and the revenue function 
$R$ is concave over the support of the type distribution $F$.
Then there exists an optimal assortment $S$ with $|S| \leq k$.
\end{theorem}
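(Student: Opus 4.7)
The plan is to apply Jensen's inequality to a standard decomposition of revenue along the ``menu chain'' of an assortment. For any assortment $T$, as the type $w$ varies, the buyer's chosen bundle traces out a chain of sub-bundles $\emptyset = S_0, S_1, \ldots, S_m$ of $T$ with strictly increasing values and prices, corresponding to the upper convex hull of the menu $\{(v(S), p(S)) : S \subseteq T,\; |S| \leq k\}$ in the $(v,p)$-plane. Since $v$ is $k$-demand, each $S_i$ has size at most $k$; in particular $|S_m| \leq k$. Setting $s_i := (p(S_i)-p(S_{i-1}))/(v(S_i)-v(S_{i-1}))$ for the threshold at which the buyer upgrades from $S_{i-1}$ to $S_i$, a telescoping computation gives
\[
\mathrm{Rev}(T) \;=\; \sum_{i=1}^{m} (v(S_i) - v(S_{i-1}))\,R(s_i),
\]
with the weights $\alpha_i := v(S_i)-v(S_{i-1})$ summing to $v(S_m)$ and $\sum_i \alpha_i s_i = p(S_m)$.

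For any optimal $T^*$ with top bundle $S^* := S_m$ (of size at most $k$), concavity of $R$ on the support of $F$ combined with Jensen's inequality immediately yields $\mathrm{Rev}(T^*) \leq v(S^*)\,R(p(S^*)/v(S^*))$. I would then show that offering $A := S^*$ as an assortment achieves at least this much revenue. Both $A$'s chain and $T^*$'s chain share the endpoints $(0,0)$ and $(v(S^*), p(S^*))$, so the induced slope measures (supported on the hull slopes with masses given by the $\alpha_i$) have the same total mass $v(S^*)$ and the same mean $p(S^*)/v(S^*)$. The heart of the argument is to show that the slope measure from $A$ is a mean-preserving contraction of the slope measure from $T^*$; by concavity of $R$, a mean-preserving contraction can only increase the integral of $R$ against the measure, so $\mathrm{Rev}(A) \geq \mathrm{Rev}(T^*)$.

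The main obstacle is establishing the mean-preserving contraction. A natural way to proceed is by iterative pruning: for any $j \in T^* \setminus S^*$, setting $T' := T^* \setminus \{j\}$ still has top bundle $S^*$, and I would show $\mathrm{Rev}(T') \geq \mathrm{Rev}(T^*)$ by arguing that removing $j$ merges or smooths adjacent hull segments in a mean-preserving fashion. The subtlety is that removing $j$ may reveal new hull vertices (sub-bundles of $T'$ that were dominated in $T^*$'s fuller menu), and one must verify that any such new vertex's slope can be written as a convex combination of slopes already present in $T^*$'s chain, so that the overall redistribution of mass is a contraction toward the common mean $p(S^*)/v(S^*)$. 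Iterating this single-item removal until the assortment equals $S^*$ chains these Jensen inequalities together and gives the desired conclusion.
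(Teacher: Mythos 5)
Your high-level plan is on the right track and mirrors the paper's structure: you use the same revenue decomposition $\mathrm{Rev}(T)=\sum_i(v^T_i-v^T_{i-1})R(w^T_i)$ (Lemma~\ref{lem:revenue expression}), identify $S^*$ as the top bundle (which has size at most $k$ by the $k$-demand property), and reduce to showing that the slope measure of $A=S^*$ is a mean-preserving contraction (MPC) of that of $T^*$. The MPC claim is indeed the crux, and it is true. But you explicitly flag it as ``the main obstacle'' and leave the proposed verification (iterative single-item pruning with convex-hull surgery) unfinished; the ``subtlety'' you raise about newly exposed hull vertices is exactly where that route gets tangled, and as written the argument does not close.

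The paper sidesteps all the hull surgery by noting that the MPC condition is \emph{equivalent} to a pointwise statement about utilities, and that statement is trivial to verify. Concretely, the CDF $G_T$ of the slope measure satisfies $G_T(w)=u_T'(w)$ a.e.\ (the $x$-length of each chain segment is $v^T_i-v^T_{i-1}$, so the cumulative mass below slope $w$ is exactly the slope $v^T_i$ of $u_T$ at $w$). Hence $\int_0^w \bigl(G_{T^*}(t)-G_A(t)\bigr)\,dt = u_{T^*}(w)-u_A(w)$, and the second-order stochastic dominance condition defining an MPC becomes: $u_A(w)\le u_{T^*}(w)$ for all $w$, with equality at $w=H$. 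Both facts are immediate: $A=S^*\subseteq T^*$ gives fewer options, hence $u_A\le u_{T^*}$ pointwise; and since $S^*$ is the bundle the $w=H$ buyer chooses from $T^*$, offering only $S^*$ still gives that buyer the same utility, so $u_A(H)=u_{T^*}(H)$. After this observation, one applies Lemma~\ref{lem:revenue expression integral} (the integrated-by-parts form $\mathrm{Rev}(T)=\int u_T R'' - R'(H)u_T(H)$, with $R''\le 0$) and reads off $\mathrm{Rev}(A)\ge\mathrm{Rev}(T^*)$ directly. I'd encourage you to replace the pruning argument with this one-line equivalence; it makes the MPC a consequence of monotonicity of $u$ in the assortment rather than something that has to be re-derived segment by segment.
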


Recall that in Section~\ref{sec:2demandhard} we noted that, in general, the optimal assortment for $k$-demand buyers may contain far more than $k$ items.  In particular, a heuristic that simply enumerates all assortments of size at most $k$ will not find an optimal solution in general.  Theorem~\ref{thm:concave} shows that such a heuristic does find an optimal solution in cases where the revenue curve is concave.  

\begin{example}\label{ex:uniform}
Suppose that buyers are $k$-demand with uniform type distribution over $[a,b]$. The revenue curve $R(w) = w\cdot\min\left\{ 1, \frac {b-w} {b-a}\right\}$ is concave for all $w \in [0,b]$ and thus by Theorem~\ref{thm:concave} the optimal assortment consists of at most $k$ items. 
\end{example}
\begin{remark}
If in Example~\ref{ex:uniform} items are well-priced, Theorem~\ref{thm:convex} implies that even though the optimal assortment is small, showing all items yields a $4$-approximation to the optimal revenue. 
\end{remark}

\paragraph{Exact combinatorial assortment for $k$-demand buyers.}\label{sec:alg}
We showed in Section~\ref{sec:2demandhard} that the combinatorial assortment problem is hard even for submodular $2$-demand buyers.  We instead turn to additive $k$-demand valuations, which include unit-demand valuations as a special case.  We show that for constant $k$, there is a polynomial-time algorithm that solves the combinatorial assignment problem. The proof of Theorem \ref{thm:dpmain} can be found in Appendix \ref{sec:dp}. Importantly, this result applies even in the general noisy cases where buyer values are not fully known in advance.  Like the submodular $2$-demand case considered in Section~\ref{sec:2demandhard}, the optimal assortment for additive $k$-demand valuations may include many more than $k$ items.

\begin{theorem}
\label{thm:dpmain}
For additive $k$-demand valuations, there exists an algorithm that finds the revenue-optimal assortment in time\footnote{Our algorithms depend on the type distribution, which may be continuous.  The runtime bound assumes that the CDF of this distribution can be queried in $O(1)$ time.  See Appendix \ref{app:computation} for a detailed discussion.} $O(n^{2k} + n^2\log(n))$.
\end{theorem}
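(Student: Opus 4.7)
The plan is to design a dynamic program that sweeps over the type distribution from high to low $w$ and incrementally constructs an optimal assortment by deciding which items to include as $w$ decreases, in line with the description in the paper.

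I would start with the following structural observation. Given additive $k$-demand values $v_1, \dots, v_n$ and prices $p_1, \dots, p_n$, a buyer of type $w$ facing assortment $T$ purchases the at-most-$k$ items in $T$ that maximize $wv_i - p_i$ among those with $wv_i - p_i > 0$. Consequently, as $w$ varies, the purchase bundle $B_w(T)$ changes only at two classes of critical thresholds: profitability thresholds $p_i/v_i$, one per item, and pairwise swap thresholds $(p_j - p_i)/(v_j - v_i)$, one per pair $(i,j)$ with $v_i \neq v_j$. There are $O(n^2)$ such thresholds in total; they can be computed and sorted in $O(n^2 \log n)$ time, and they partition the type support into $O(n^2)$ intervals within each of which, for every fixed assortment, the purchased bundle is constant. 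The expected revenue then decomposes as a sum, over intervals $I$, of $\Pr_{w \sim \mathcal{F}}[w \in I]$ times the price $\sum_{i \in B_w(T)} p_i$ of the bundle purchased in $I$.

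The dynamic program sweeps through the intervals from the highest-$w$ interval to the lowest, maintaining enough information to determine both the current purchase bundle and its future evolution. For the highest-$w$ interval the purchase bundle is simply the $k$ items of $T$ with largest $v_i$; for lower-$w$ intervals the bundle evolves by at most one single-item swap (at a pairwise threshold) or drop (at a profitability threshold) per transition, because in any small enough interval at most one pairwise or profitability event occurs. The key to obtaining a polynomial-time algorithm is to show that a DP state consisting of the current purchase bundle $B$ together with, for each of the $k$ slots of $B$, the identity of the item pre-committed to swap into that slot at the next relevant threshold, suffices to guide the sweep. Since both $B$ and the replacement list are each tuples of up to $k$ items chosen from $n$, there are $O(n^{2k})$ such states. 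Each state has a constant number of natural transitions across the next threshold (either an item in $B$ is swapped out for its replacement, or an item becomes unprofitable and is dropped, or a new item is introduced into the assortment for lower types), and the revenue contribution from each interval is computed in constant time from the state. Summing over states and transitions, together with the sorting cost, yields the claimed runtime $O(n^{2k} + n^2 \log n)$.

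The main obstacle is verifying that this canonical DP captures every optimal assortment---that is, that some optimal assortment admits a decomposition into a sequence of purchase bundles connected by single-item swaps and drops that match the DP transitions, and that the replacement information in the state suffices to avoid inconsistencies across type intervals. The subtle issue is that an item introduced to serve low-type buyers may have a large $v_i$ and hence could intrude into the top-$k$ of higher types already processed; the proof must show that in a canonical optimal solution such items can always be scheduled consistently with the sweep order. I expect the additive structure of the valuation, which makes the scores $wv_i - p_i$ linear in $w$ so that each pairwise ordering flips at most once, to be precisely what permits this canonical scheduling and makes the DP transitions well-defined; carrying out this consistency argument is the crux of the proof.
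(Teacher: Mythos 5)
Your high-level plan matches the paper: identify $O(n^2)$ critical thresholds, sweep across them, maintain a DP state over $2k$-or-so items with $O(n^{2k})$ states, and argue that single-swap transitions suffice. However, there is a genuine gap that you explicitly flag yourself and do not close, and it is precisely the crux of the theorem.

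The missing ingredient is the structural lemma that makes a bounded-size state sufficient. The paper proves (their Lemma about top-$(2k-1)$) that for any $w_1 < w_2 < w_3$, if an item is not among the top $2k-1$ lines at $w_2$, then it cannot be in the top $k$ at \emph{both} $w_1$ and $w_3$. This is the reason the DP state can be taken to be the top $k$ items plus $k-1$ ``spare'' items, giving a plain unordered set of size $2k-1$: any line that falls below the $(2k-1)$-th position can be forgotten, because it will never again re-enter the purchased bundle for a type on the other side of the sweep. Your proposed state---the current bundle of size $k$ together with a ``pre-committed replacement'' for each slot---is a different object, and its well-definedness is exactly what is in doubt: the identity of the item that will next swap into slot $i$ depends on which other items are present in the assortment at types you have not yet processed, so ``pre-committing'' those replacements is circular without the lemma above. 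You also conflate this issue with the direction of the sweep (the paper sweeps low-to-high $w$; you sweep high-to-low), which is a cosmetic difference, but the item-re-entering-the-top-$k$ problem you describe in your last paragraph is real and is precisely what the $(2k-1)$-membership lemma rules out.

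Concretely, your proof plan, as written, establishes that there are $O(n^{2k})$ states and that transitions are cheap, but it does not establish that the DP actually searches over a space containing an optimal assortment. The paper needs two further lemmas for this: one showing that for every DP-path the union of the $S$-components behaves as an actual assortment whose purchase behavior is correctly tracked interval-by-interval, and one showing that some DP-path realizes the optimal assortment. Both rely essentially on the $(2k-1)$ lemma and on the monotone-crossing structure of lines (each pair crosses once). Until you prove the analogue of that lemma for your replacement-schedule state, the claim that ``each state has a constant number of natural transitions'' and that the DP ``captures every optimal assortment'' is an assertion, not an argument. I'd suggest replacing your state with the simple unordered set of the top $2k-1$ items and proving the forgetting lemma; this both simplifies the state and supplies exactly the consistency argument you identify as the crux.
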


The algorithm we propose is a dynamic program, which incrementally builds an optimal assortment by considering how the purchasing behavior of a buyer changes with $w$.  
To build intuition for our dynamic program, consider first the unit-demand case of $k=1$.  In this case, each buyer chooses at most a single item to purchase.  The utility derived by purchasing item $i$ is $w v_i - p_i$, which we can plot as a line mapping $w$ to utility.  A choice of assortment $S$ then corresponds to a subset of $n$ possible lines; and for any given value of $w$, the item with the highest utility would be chosen.  We can think of this as tracing the maximum over this set of lines; see Figure~\ref{fig:upper-envelope}. Given this pictorial representation our DP algorithm computes the optimal revenue ``from left to right'' adding lines/items to the assortment but only keeping track of the last line that was added. 

In the case $k>1$, we are interested in the revenue obtained by tracing the top $k$ lines given an assortment. Computing the optimal revenue is inherently more difficult in this case, as it doesn't suffice to store only the top $k$ lines at any given point $w$ (see Figure \ref{fig:k=2} in Appendix \ref{sec:dp}). However, we are able to extend our DP by showing that any lines that were among the top $k$ earlier but are not in the top $2k-1$ at the current point $w$ won't be among the top $k$ lines for any $\hat w > w$. This allows us to only keep track of the top $2k-1$ lines, resulting in $\tilde{O}(n^{2k})$ runtime.

\begin{figure}\begin{center}
\begin{tikzpicture}[xscale=1.7,yscale=1.3]

\draw [->] (0,0) -- (4,0)  node [below] {$w$};
\draw [->] (0,-1) -- (0,1.5) node [left] {$u$};

\draw [fill] (3,1) circle [radius=0.025];
\draw [fill] (1,0) circle [radius=0.025];

\draw [dashed, domain=0:4] plot (\x, 1.4*\x/4-.2*5/4) node [right] {$v_1$};
\draw [domain=0:4] plot (\x, 2.0*\x/4-.4*5/4) node [right] {$v_2$};
\draw [dashed, domain=0:4] plot (\x, .27*\x-.7) node [right] {$v_3$};
\draw [domain=0:4] plot (\x, 2.5*\x/4-.7*5/4) node [above right] {$v_4$};

\draw (0,-.2*5/4) node [left] {$-p_1$};
\draw (0,-.4*5/4) node [left] {$-p_2$};
\draw (0,-.7*5/4) node [left] {$-p_4$};
\draw (0,-.7) node [left] {$-p_3$};

\draw [ultra thick, domain=0:1] plot (\x, 0);
\draw [ultra thick, domain=1:3] plot (\x, 2.0*\x/4-.4*5/4);
\draw [ultra thick, domain=3:4] plot (\x, 2.5*\x/4-.7*5/4);

\end{tikzpicture}
\end{center}
\vspace{-10pt}
\caption{The dark solid line represents a unit-demand buyer's utility for the assortment $S=\{2,4\}$. It is the upper envelope of the lines corresponding to items 2 and 4.}
\label{fig:upper-envelope}
\end{figure}
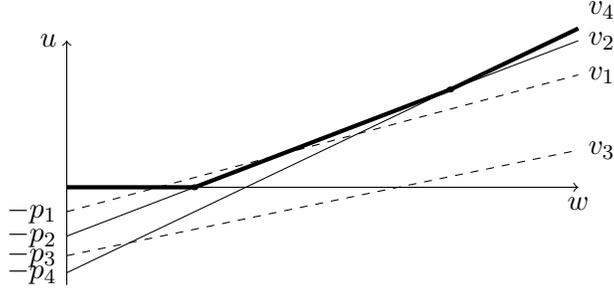

\section{Extensions}\label{sec:extensions}

\paragraph{Constrained assortment.}\label{sec:constrained}

To this point we focused exclusively on the case of unconstrained assortment, where $\ell = n$.  For general $\ell$, the lower bounds from Section~\ref{sec:hardness} still apply.  Also,
the dynamic program for exact revenue-optimal assortment for additive $k$-demand valuations 
solves the constrained case;
one need only track the remaining budget for additional items as part of the program.  
See Corollary \ref{cor:dp} for a detailed proof.

\begin{theorem}
\label{thm:dpmainconstrained}
For additive $k$-demand valuations and any cardinality constraint $\ell$, there exists an algorithm that finds the revenue-optimal assortment of at most $\ell$ items in time $O(n^{2k} \ell)$.
\end{theorem}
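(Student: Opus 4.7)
The plan is to augment the dynamic program of Theorem~\ref{thm:dpmain} with an extra state variable that tracks the number of items already placed in the assortment, and to reject any transition that would push this count above~$\ell$. Concretely, recall that for additive $k$-demand valuations the DP sweeps through the type parameter~$w$ from left to right, where at each ``event'' point on the $w$-axis the identity of the top~$k$ lines (which determines the bundle purchased and the associated revenue) changes. The key structural insight underlying Theorem~\ref{thm:dpmain} is that a line which was once in the top~$k$ but is no longer in the top~$2k-1$ at the current $w$ can never re-enter the top~$k$ for any larger $w$, so it suffices to store, as part of the DP state, the identities of the top~$2k-1$ lines currently active; this gives $O(n^{2k-1})$ states and an additional factor of $O(n)$ for transitions.

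To handle the cardinality constraint, I would extend the DP state from the tuple ``(sweep position, top $2k-1$ active lines)'' to ``(sweep position, top $2k-1$ active lines, $j$)'' where $j \in \{0,1,\ldots,\ell\}$ records how many distinct items have been added to the assortment up to the current sweep position. A transition corresponds either to an existing active line dropping out of the top $2k-1$ (in which case $j$ is unchanged, and the item remains part of the committed assortment~$S$) or to a new item being introduced as an active line (in which case $j$ increments by one, and the transition is allowed only if the new value of $j$ does not exceed $\ell$). Since each item is introduced exactly once during the left-to-right sweep, $j$ at the end of the sweep equals $|S|$, so restricting transitions in this way enforces $|S|\leq \ell$ exactly.

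The correctness argument is then essentially the same as in Theorem~\ref{thm:dpmain}: the DP still accumulates revenue contributions over each interval of $w$ between consecutive events, and the structural lemma that only the top $2k-1$ lines matter going forward is unaffected by the additional counter~$j$. For the runtime, the state space grows by a factor of $\ell + 1$, yielding $O(n^{2k-1} \cdot \ell)$ states and $O(n)$ transitions per state, for a total of $O(n^{2k} \ell)$, as claimed. I would formalize this as Corollary~\ref{cor:dp} by explicitly defining the augmented recurrence $V(w, L, j)$ where $L$ is the ordered top $2k-1$ active lines and $j$ is the item count, with base cases at $w = -\infty$ and extraction of the answer as $\max_{L,j\leq\ell} V(+\infty, L, j)$.

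The only subtle point---and the main thing to verify carefully---is that the item counter~$j$ is updated \emph{only} when a genuinely new item enters the top $2k-1$ window, and not when a line that was previously evicted from the window is revisited. The structural lemma from Theorem~\ref{thm:dpmain} guarantees that such revisitation never occurs along an optimal sweep, so each item contributes exactly one ``+1'' increment to~$j$ across its lifetime. Thus, the augmented DP correctly identifies the revenue-optimal assortment of size at most~$\ell$, completing the proof.
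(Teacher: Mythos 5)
Your proposal matches the paper's approach (Corollary~\ref{cor:dp}): augment the DP state of Theorem~\ref{thm:dpmain} with a counter for the number of new items added via the transition that introduces a new line, discard states whose counter exceeds $\ell$, and observe the $O(\ell)$ factor blow-up in the state space yields $O(n^{2k}\ell)$ time. The one subtlety you flag is resolved slightly differently in the paper: rather than asserting that the structural lemma forbids any item from re-entering the tracked window (which is not quite what Lemma~\ref{lem:2k-1} says, since the DP tracks the top $2k-1$ among \emph{chosen} items, not among all items), the paper simply shows that for the optimal assortment $S^*$ there \emph{exists} a DP-path in which each item of $S^*$ enters exactly once, so that path has counter exactly $|S^*|\le\ell$; this suffices because the counter is always an upper bound on assortment size, so no infeasible assortment can slip through even if some other paths overcount.
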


Theorem~\ref{thm:convex} specified conditions under which it is approximately optimal to select all items.  Under a cardinality constraint, this solution may not be feasible.  However, if the buyer valuations are gross substitutes, a greedy assortment algorithm is approximately optimal.  
The idea is to reduce from revenue maximization to utility maximization, as in Theorem~\ref{thm:convex}, then note that the total utility derived from the buyers is a submodular function.  See Appendix~\ref{subsec:convex.constrained.gross} for details.

\begin{theorem}\label{thm:convex.constrained.gross}
For gross substitutes valuations and well-priced items, a $(6.33)$-approximation to the revenue-optimal assortment of size at most $\ell$ can be computed in time $\tilde O(\ell^3 n)$.
\end{theorem}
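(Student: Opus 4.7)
The plan is to reduce constrained revenue maximization to constrained monotone submodular maximization via the ironed revenue curve from Theorem~\ref{thm:convex}, and then invoke the classical greedy guarantee for submodular functions under a cardinality constraint. First I would recall from the proof of Theorem~\ref{thm:convex} the construction of a modified revenue curve $\hat R$ with the two properties (i) $\tfrac12 R(w) \leq \hat R(w) \leq 2 R(w)$ on the relevant range of $F$, and (ii) the expected revenue computed against $\hat R$ (substituted for $R$ in the posted-price revenue formula) equals a fixed constant $c$ times the expected buyer utility
\[
U(S) \;:=\; \E_{w\sim F}\Bigl[\max_{T\subseteq S}\bigl(w\,v(T)-p(T)\bigr)\Bigr].
\]
Writing $\widehat{\mathrm{Rev}}(S)$ for the $\hat R$-revenue of assortment $S$, property (ii) gives $\widehat{\mathrm{Rev}}=c\cdot U$ and property (i) gives $\tfrac12\mathrm{Rev}(S)\le\widehat{\mathrm{Rev}}(S)\le 2\mathrm{Rev}(S)$ for every $S$.

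The second step is to observe that $U$ (and hence $\widehat{\mathrm{Rev}}$) is monotone and submodular in $S$. Monotonicity is immediate. For submodularity I would invoke the classical fact that when $v$ is gross substitutes, so is the scaled valuation $w\,v$ for every $w\geq 0$, and the indirect utility $S \mapsto \max_{T\subseteq S}(wv(T)-p(T))$ of a GS buyer is submodular in $S$ pointwise; submodularity is preserved by non-negative combinations, so integrating against $F$ preserves it.

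Now I would run the Nemhauser--Wolsey--Fisher greedy algorithm to maximize $\widehat{\mathrm{Rev}}$ subject to $|S|\le\ell$, producing $S^{\mathrm g}$ with $\widehat{\mathrm{Rev}}(S^{\mathrm g})\ge(1-1/e)\max_{|S|\le\ell}\widehat{\mathrm{Rev}}(S)$. Letting $S^\star$ denote the true revenue-optimal feasible assortment, chaining the two-sided bound yields
\[
\mathrm{Rev}(S^{\mathrm g}) \;\ge\; \tfrac12\widehat{\mathrm{Rev}}(S^{\mathrm g}) \;\ge\; \tfrac{1-1/e}{2}\widehat{\mathrm{Rev}}(S^\star) \;\ge\; \tfrac{1-1/e}{4}\mathrm{Rev}(S^\star),
\]
so $S^{\mathrm g}$ is a $\tfrac{4e}{e-1}\le 6.33$ approximation. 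For the runtime, greedy makes $O(\ell n)$ marginal-value queries on $U$. To evaluate $U(S)$ for $|S|\le\ell$ I would sweep $w$ from $0$ upward, tracking the buyer's demand bundle $T^\star(w,S)$: since $wv$ is GS, the demand correspondence changes at only $\tilde O(\ell^2)$ breakpoints, each found by a poly-time GS demand oracle on $S$; integrating against the CDF of $F$ (queryable in $O(1)$) gives $U(S)$ in $\tilde O(\ell^2)$, so the total is $\tilde O(\ell^3 n)$.

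The main obstacle I anticipate is the bookkeeping around the ironing step: reusing the construction of $\hat R$ from Theorem~\ref{thm:convex} in a way that preserves both the multiplicative bound $\hat R\in[\tfrac12 R,2R]$ and the exact identity $\widehat{\mathrm{Rev}}=c\,U$, so that the greedy guarantee on $\widehat{\mathrm{Rev}}$ transports cleanly to a guarantee on $\mathrm{Rev}$. The submodularity of indirect utility under GS and the $\tilde O(\ell^2)$ breakpoint bound for demand-bundle transitions are both classical, but must be applied uniformly in $w$ to commute with the expectation over $F$.
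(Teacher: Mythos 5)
Your high-level strategy---convexify the revenue curve, observe that the ironed revenue is a positive combination of pointwise indirect utilities, establish submodularity via the gross-substitutes structure, and then run greedy---is the same as the paper's (Appendix~\ref{subsec:convex.constrained.gross}). However, two intermediate claims you lean on are false, and one of them actually changes the algorithm.

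First, the identity $\widehat{\mathrm{Rev}}(S)=c\cdot U(S)$ does not hold. By Lemma~\ref{lem:revenue expression integral} the $\hat R$-revenue of an assortment $T$ is $\int u_T(w)\,\hat R''(w)\,dw - \hat R'(H)\,u_T(H)$. This is indeed a nonnegative combination of the pointwise utilities $u_T(w)$ (which is all one needs for submodularity, by Lemma~\ref{lem:gs-submodular}), but the weight measure is $\hat R''(w)\,dw$ plus a boundary atom, not $f(w)\,dw$. In general $\hat R''(w)\ne c\,f(w)$, so $\widehat{\mathrm{Rev}}$ and $U$ are different set functions with different constrained maximizers. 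Your algorithm, which evaluates marginal values of $U$, is therefore running greedy on the wrong objective: a $(1-1/e)$-approximate maximizer of $U$ need not be a $(1-1/e)$-approximate maximizer of $\widehat{\mathrm{Rev}}$. (In the unconstrained case of Theorem~\ref{thm:convex} this distinction is moot because $T=N$ maximizes $u_T(w)$ pointwise and hence maximizes \emph{every} positive combination simultaneously, including both $U$ and $\widehat{\mathrm{Rev}}$; under a cardinality constraint there is no such pointwise-dominating feasible assortment.) The fix is to run greedy on $\widehat{\mathrm{Rev}}$ itself, as the paper does.

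Second, the two-sided bound $\tfrac12 R\le\hat R\le 2R$ is not what Lemma~\ref{lem:convex 4 approx} gives; the correct inequality is $\hat R(w)\le R(w)\le 4\hat R(w)$ (i.e.\ $\hat R$ is a lower envelope, hence never exceeds $R$, and loses at most a factor $4$). The approximation factor $(1-1/e)/4=\bigl(4e/(e-1)\bigr)^{-1}$ happens to come out the same from either bound, so this is a misstatement rather than a fatal error, but it should be corrected.

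Finally, you never address how $\hat R$ is to be computed given only oracle access to the CDF of $F$. The paper inserts a preprocessing step that rounds the revenue curve to powers of $(1+\varepsilon)$ so that it becomes piecewise constant with $O(B/\varepsilon^2)$ pieces; only then can the convex lower envelope be computed explicitly, and this is also what drives the per-evaluation cost of $O(\ell^2 B^2/\varepsilon^2)$ that leads to the $\tilde O(\ell^3 n)$ runtime (and accounts for the slightly degraded constant $6.33$ rather than the bare $4e/(e-1)$). Your sweep-over-$w$ argument and the asserted $\tilde O(\ell^2)$ bound on demand-bundle breakpoints are not wrong in spirit, but absent a way to evaluate $\hat R$, they cannot be assembled into a concrete revenue evaluation. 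Filling in these three gaps essentially reconstructs the paper's proof.
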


\paragraph{Welfare maximizing assortment.}\label{sec:welfare}

We have focused on revenue-maximization, but 
assortment optimization for welfare maximization is also non-trivial. 
The presence of cheap items in the assortment can reduce the total welfare and should be excluded.
We note that the algorithm we developed for revenue-maximization under additive $k$-demand valuations can be easily adjusted for welfare maximization (see Remark \ref{rm:dp} for details).  Also, if items are well-priced, our results for revenue maximization apply to welfare maximization with even better constants.  In particular, for unconstrained assortment, selecting the slate of all items is welfare-optimal if items are well-priced.  See Appendix~\ref{app:welfare}.

\paragraph{Learning assortments from demand samples.}\label{sec:learn}
Suppose $v$ and $\mathcal{F}$ are not known to the seller.  Instead, the algorithm 
can learn about $v$ and $\mathcal{F}$
via samples, taken by choosing a slate of items to sell and observing a buyer's choice.   Details appear in Appendix~\ref{app:learn}.

We show how to implement our dynamic program for $k$-additive valuations in this learning setting, by characterizing the algorithm's robustness to noise.  We show that if the algorithm can make $\Theta(n^{k+1}\log(n)/\varepsilon^2)$ queries, then our dynamic programming solution will be within an $O(\varepsilon \cdot \max_{|S|=k} \sum_{i \in S} p_i)$ additive factor to the optimal revenue.

We also show that a variant of Theorem \ref{thm:concave} applies to the learning setting.  This requires choosing the best of a polynomial number of assortments.  Since the highest revenue is bounded, standard concentration arguments imply that we can evaluate the revenue of any given assortment to within a small additive error by making polynomially many queries.
 
\bibliographystyle{plain}
\bibliography{bib}

\begin{thebibliography}{10}

\bibitem{AgrawalAGZ2016}
Shipra Agrawal, Vashist Avadhanula, Vineet Goyal, and Assaf Zeevi.
\newblock A near-optimal exploration-exploitation approach for assortment
  selection.
\newblock In {\em Proceedings of the 2016 ACM Conference on Economics and
  Computation}, EC '16, pages 599--600, New York, NY, USA, 2016. ACM.

\bibitem{AlptekinogluS2016}
Aydın Alptekinoğlu and John~H. Semple.
\newblock The exponomial choice model: A new alternative for assortment and
  price optimization.
\newblock {\em Operations Research}, 64(1):79--93, 2016.

\bibitem{BrontMDV2009}
Juan Jos{\'e}~Miranda Bront, Isabel M{\'e}ndez-D\'{\i}az, and Gustavo Vulcano.
\newblock A column generation algorithm for choice-based network revenue
  management.
\newblock {\em Oper. Res.}, 57(3):769--784, May 2009.

\bibitem{Cai2012}
Yang Cai, Constantinos Daskalakis, and S.~Matthew Weinberg.
\newblock Optimal multi-dimensional mechanism design: Reducing revenue to
  welfare maximization.
\newblock In {\em Proceedings of the 2012 IEEE 53rd Annual Symposium on
  Foundations of Computer Science}, FOCS '12, pages 130--139, Washington, DC,
  USA, 2012. IEEE Computer Society.

\bibitem{CaroG2007}
Felipe Caro and J{\'e}r{\'e}mie Gallien.
\newblock Dynamic assortment with demand learning for seasonal consumer goods.
\newblock {\em Management Science}, 53:276--292, 2007.

\bibitem{Chawla2010}
Shuchi Chawla, Jason~D. Hartline, David~L. Malec, and Balasubramanian Sivan.
\newblock Multi-parameter mechanism design and sequential posted pricing.
\newblock In {\em Proceedings of the Forty-second ACM Symposium on Theory of
  Computing}, STOC '10, pages 311--320, New York, NY, USA, 2010. ACM.

\bibitem{DavisGT2014}
James~M. Davis, Guillermo Gallego, and Huseyin Topaloglu.
\newblock Assortment optimization under variants of the nested logit model.
\newblock {\em Operations Research}, 62(2):250--273, April 2014.

\bibitem{DesirG2015}
Antoine Desir and Vineet Goyal.
\newblock Near-optimal algorithms for capacity constrained assortment
  optimization.
\newblock {\em Technical Report, Department of Industrial Engineering and
  Operations Research, Columbia University}, 2015.

\bibitem{Feige2009}
Uriel Feige.
\newblock On maximizing welfare when utility functions are subadditive.
\newblock {\em SIAM Journal on Computing}, 39(1):122--142, 2009.

\bibitem{Haghpanah2015}
Nima Haghpanah and Jason Hartline.
\newblock Reverse mechanism design.
\newblock In {\em Proceedings of the Sixteenth ACM Conference on Economics and
  Computation}, pages 757--758. ACM, 2015.

\bibitem{KleinbergMU2017}
Jon Kleinberg, Sendhil Mullainathan, and Johan Ugander.
\newblock Comparison-based choices.
\newblock In {\em Proceedings of the 2017 ACM Conference on Economics and
  Computation}, EC '17, pages 127--144, New York, NY, USA, 2017. ACM.

\bibitem{Kushilevitz:1996:CC:264772}
Eyal Kushilevitz and Noam Nisan.
\newblock {\em Communication Complexity}.
\newblock Cambridge University Press, New York, NY, USA, 1997.

\bibitem{Lehmann2001}
Benny Lehmann, Daniel Lehmann, and Noam Nisan.
\newblock Combinatorial auctions with decreasing marginal utilities.
\newblock In {\em Proceedings of the 3rd ACM Conference on Electronic
  Commerce}, EC '01, pages 18--28, New York, NY, USA, 2001. ACM.

\bibitem{Leme}
Renato~Paes Leme.
\newblock Gross substitutability: An algorithmic survey.

\bibitem{MendezDiazBVZ2014}
Isabel M{\'e}ndez-D\'{\i}az, Juan~Jos{\'e} Miranda-Bront, Gustavo Vulcano, and
  Paula Zabala.
\newblock A branch-and-cut algorithm for the latent-class logit assortment
  problem.
\newblock {\em Discrete Appl. Math.}, 164:246--263, February 2014.

\bibitem{RSS2010}
Paat Rusmevichientong, Zuo-Jun~Max Shen, and David~B. Shmoys.
\newblock Dynamic assortment optimization with a multinomial logit choice model
  and capacity constraint.
\newblock {\em Operations Research}, 58(6):1666--1680, 2010.

\bibitem{TalluriR2004}
Kalyan Talluri and Garrett van Ryzin.
\newblock Revenue management under a general discrete choice model of consumer
  behavior.
\newblock {\em Management Science}, 50(1):15--33, 2004.

\bibitem{UluHA2012}
Canan Ulu, Dorothée Honhon, and Aydın Alptekinoğlu.
\newblock Learning consumer tastes through dynamic assortments.
\newblock {\em Operations Research}, 60(4):833--849, 2012.

\end{thebibliography}

\appendix

\section{A Note on Computation}\label{app:computation}

The algorithm of Theorem~\ref{thm:dpmain}, as well as all the algorithms presented in this work, depend on the distribution of types which may be continuous. The only assumption required for their runtime is that the CDF of this distribution can be queried in $O(1)$ time. As the CDF may be a real number, our algorithms require a real RAM model where basic calculus can be performed as a single operation. This assumption can be easily dropped if the CDF oracle returns only the CDF within accuracy of $\frac \varepsilon {2 n} 2^{-B}$, where $B$ is the number of bits required to represent valuations and prices. As the revenue of an assortment $S$ equals $\sum_{S' \subseteq S} \text{Prob[$S'$ is bought]} \cdot \sum_{i \in S'} p_i$ and all probabilities are accurate within $\frac \varepsilon {n} 2^{-B}$, this results in an additive error in revenue calculation of at most $\frac \varepsilon {n} 2^{-B} \sum_{i=1}^n p_i  \le \varepsilon$. 

Additionally, revenue calculations depend only on the CDF at points where consumers are indifferent about the set of items to purchase. These are points of the form $\frac {\sum_{i \in S\setminus S'} p_i} {v(S) - v(S')}$ for $S' \subset S$. Since we assume both valuations and prices can be represented using $B$-bit numbers, the algorithms require that the CDF is only queried in rational numbers where both numerator and denominator are $B$-bit integers.

The algorithm of Theorem~\ref{thm:convex.constrained.gross} requires computing a modified revenue curve which is convex so that the resulting optimization problem is submodular. The optimal such curve corresponds to the lower convex envelope of the actual revenue curve $R(w) = w(1-F(w))$. While this curve might be easily computable in closed form in some cases, it cannot be computed efficiently using only query access to the CDF. To overcome this issue, we first preprocess the revenue curve by rounding it into powers of $(1+\varepsilon)$. This results in a different revenue maximization problem whose solution is close to the original one. Moreover, this revenue curve consists of very few pieces and can be listed explicitly and thus computing the convex envelope and all revenue calculations can be done efficiently. See details in Appendix~\ref{subsec:convex.constrained.gross}. 
\section{Missing Proofs of Section \ref{sec:hardness}}
\label{app:hardness}

\subsection{Hardness results in the noiseless setting}
\begin{lemma}
\label{lem:set}
For any constant $\varepsilon >0$, there exist $M= \exp ({n^{2\varepsilon}/24})$ sets $X_i$'s which have sizes $n^{1/2}$ and are subsets of $[n]$ such that 
\[
\forall 1 \leq i < j \leq M, |X_i \cap X_j| \leq n^{\varepsilon}.
\]
\end{lemma}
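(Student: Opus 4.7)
The plan is the probabilistic method in its deletion (alteration) form. I would sample $M' = 2M$ sets $X_1,\ldots,X_{M'}$ independently and uniformly from the family of size-$\sqrt{n}$ subsets of $[n]$, call a pair $(i,j)$ \emph{bad} if $|X_i \cap X_j| > n^\varepsilon$, and argue that the expected number of bad pairs is at most $M$. Then in some realization there are at most $M$ bad pairs, and deleting one set from each leaves at least $M$ sets whose pairwise intersections are all bounded by $n^\varepsilon$.

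The quantitative heart of the argument is a tail bound on $|X_i \cap X_j|$ for a single pair. Conditioning on $X_i$, this is the upper tail of the hypergeometric distribution with parameters $(n,\sqrt n,\sqrt n)$, whose mean is $(\sqrt n)^2/n = 1$. A union bound over which $t$-element subset of $X_i$ lands inside $X_j$ yields
\[
\Pr\bigl[\,|X_i \cap X_j| \geq t\,\bigr] \;\leq\; \binom{\sqrt n}{t}\,\prod_{s=0}^{t-1}\frac{\sqrt n - s}{n - s} \;\leq\; \binom{\sqrt n}{t}\,n^{-t/2} \;\leq\; \frac{1}{t!},
\]
and Stirling converts $1/t!$ into a bound of the form $(e/t)^t$. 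Plugging in $t = n^\varepsilon$ produces a super-polynomially small probability, which is precisely the form needed to offset the $\binom{M'}{2}$ factor in the deletion bound.

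The final step is arithmetic: substitute $M = \exp(n^{2\varepsilon}/24)$ and verify $\binom{M'}{2}\cdot (e/n^\varepsilon)^{n^\varepsilon} \leq M$ by taking logarithms and matching the exponents. The construction itself (random choice plus alteration) is entirely classical; the main obstacle is the constant bookkeeping, namely pairing the double-exponential growth of $\binom{M'}{2}$ against the $(e/n^\varepsilon)^{n^\varepsilon}$-decay of the tail, and choosing the constant $1/24$ (and possibly a slightly sharpened form of the hypergeometric tail estimate in the borderline regime) so that the two exponents balance.
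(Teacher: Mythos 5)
Your hypergeometric tail estimate is correct, and in fact tighter than what the paper's proof actually establishes: $\Pr\bigl[|X_i\cap X_j| \geq t\bigr] \leq \binom{\sqrt n}{t}n^{-t/2} \leq 1/t!$ gives, with $t = n^\varepsilon$, a per-pair failure probability of order $\exp\bigl(-n^\varepsilon(\varepsilon\log n - 1)\bigr)$. But the ``final step is arithmetic'' that you leave as bookkeeping cannot be made to close. Taking logarithms, $\log\binom{M'}{2} \approx n^{2\varepsilon}/12$, while the negative logarithm of your tail bound is only of order $n^\varepsilon\log n$, and $n^{2\varepsilon}/(n^\varepsilon\log n) = n^\varepsilon/\log n \to \infty$. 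No choice of constant in the exponent of $M$ fixes this mismatch of polynomial scales; the expected number of bad pairs diverges rather than being $O(M)$, so the deletion step fails.

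This is also a gap in the paper's own proof of the lemma. The paper invokes the multiplicative Chernoff bound in the form $\Pr\bigl[\sum_p Z_p > (1+\delta)\mu\bigr] \leq e^{-\delta^2\mu/3}$ with $\mu = 1$ and $\delta = n^\varepsilon - 1$, but that form holds only for $0 < \delta \leq 1$; for large $\delta$ the correct Chernoff bound degrades to $\exp\bigl(-\Theta(\delta\log\delta)\bigr) = \exp\bigl(-\Theta(n^\varepsilon\log n)\bigr)$, which is exactly what your hypergeometric calculation gives. In fact the lemma is false as stated once $n$ is large. If every pairwise intersection has size at most $t$, then no $(t+1)$-element subset of $[n]$ lies in two of the $X_i$'s, so $M\cdot\binom{\sqrt n}{t+1} \leq \binom{n}{t+1}$, hence $M \leq \prod_{s=0}^{t}\frac{n-s}{\sqrt n - s} \leq (2\sqrt n)^{\,t+1}$, i.e.\ $\log M = O(n^\varepsilon\log n)$, which is $o(n^{2\varepsilon})$ whenever $n^\varepsilon \gg \log n$. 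The repair, for both your argument and the paper's, is to weaken the target to $M = \exp(cn^\varepsilon\log n)$ for a suitably small constant $c>0$; your deletion argument then closes cleanly, and Theorem~\ref{thm:XOShard} still yields a superpolynomial, $n^{\Omega(n^\varepsilon)}$, query lower bound.
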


\begin{proof}
We are just going to pick $M$ random subset $X_i$ of size $n^{1/2}$. Then we will show the probability that $\forall 1 \leq i < j \leq , |X_i \cap X_j| \leq n^{\varepsilon}$ is positive. 

Fix some pair $(i,j)$. Define random variable $Z_p$ to be one if $p \in X_i$ and $p \in X_j$. Otherwise $Z_p$ will be 0. We have
\[
\Pr [Z_p = 1] = \frac{1}{n^{1/2}} \cdot \frac{1}{n^{1/2}} = \frac{1}{n}.
\]
Although $Z_p$'s are not independent, they are negative correlated. We can apply the multiplicative Chernoff bound:
\[
\Pr[\sum_{p=1}^n Z_p > n^{\varepsilon}] \leq \exp(-(n^{\varepsilon}-1)^2/3) < \exp(-n^{2\varepsilon}/12).
\]

Then by a Union bound over all pairs $(i,j)$, we have
\[
\Pr[\forall 1 \leq i < j \leq M, |X_i \cap X_j| \leq n^{\varepsilon}] > 1- \binom{M}{2}\cdot \exp(-n^{2\varepsilon}/12) > 0. 
\]
\end{proof}

\begin{theorem}[Restatement of Theorem \ref{thm:XOShard}]
For XOS valuations, any algorithm (no restrictions on the running time) which approximates the optimal revenue within factor smaller than $n^{1/2-\varepsilon}/2$ needs $\Omega(\exp ({n^{2\varepsilon}/24})/n)$ demand queries in the noiseless case. 
\end{theorem}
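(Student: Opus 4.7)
The plan is to reduce the deterministic two-party communication complexity of \textsc{Equality} on $M$-bit strings, where $M = \exp(n^{2\varepsilon}/24)$, to the combinatorial assortment problem.  The key enabling observation is that if the valuation splits as $v = \max(v_A, v_B)$ with Alice holding the clauses of $v_A$ and Bob those of $v_B$, then any demand query at prices $p'$ can be answered with $O(n)$ bits of communication: because $\arg\max_S[v(S) - p'\cdot S]$ is realized by whichever of the two parties' local best bundles yields the higher utility, each player computes their own local optimum and exchanges an $n$-bit characteristic vector.  Hence an algorithm using $q$ demand queries yields a two-party protocol of total size $O(qn)$, and combining this with the $\Omega(M)$ communication lower bound for \textsc{Equality} will give $q = \Omega(M/n)$.

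To build the hard instances I would use the family $\{X_i\}_{i \in [M]}$ guaranteed by Lemma~\ref{lem:set} and set every item price to $1$.  Given Alice's $x$ and Bob's $y \in \{0,1\}^M$, define
\[ v_{x,y}(S) = \max\!\left(\max_{i:\,x_i = 1}(1+2/\sqrt{n})\,|S\cap X_i|,\;\; \max_{j:\,y_j = 1} 2\sqrt{n}\cdot \mathbf{1}[S\cap X_j \ne \emptyset]\right), \]
where each Bob clause is XOS via $\mathbf{1}[S\cap X_j \ne \emptyset] = \max_{k\in X_j}\mathbf{1}[k \in S]$.  Alice's additive clauses realize the ``large set, small margin'' role while Bob's unit-trigger clauses realize the ``small set, large value'' role.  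When $x = y$: for any assortment $T$, either $T$ misses every $X_i$ with $x_i=1$ and the revenue is zero, or $T$ meets some $X_i$ with $x_i = y_i = 1$, in which case the aligned Bob clause fires on any single item $k \in T\cap X_i$ for value $2\sqrt{n}$; since every bundle has value at most $2\sqrt{n}$ (Alice's clauses are bounded by $(1+2/\sqrt{n})\sqrt{n} < 2\sqrt{n}$ and Bob's clauses are unit-triggered), the buyer strictly prefers that single-item purchase (utility $2\sqrt{n}-1$) to any bundle of size $|S|\ge 2$ (utility at most $2\sqrt{n}-2$), so the optimal revenue is at most $1$.

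When $x \ne y$, pick $i^*$ with $x_{i^*}=1$ and $y_{i^*}=0$ and have the seller display $T = X_{i^*} \setminus \bigcup_{j:\,y_j=1} X_j$; no Bob clause fires on $T$, and since $|S\cap X_{i^*}| = |S|$ for any $S\subseteq T$ whereas $|S\cap X_\ell|\le n^\varepsilon$ for $\ell \ne i^*$, Alice's clause $\alpha_{i^*}$ dominates, giving $v_{x,y}(S) = (1+2/\sqrt{n})|S|$ and driving the buyer to purchase all of $T$ for revenue $|T|$.  Separating the two revenue regimes by a factor exceeding $n^{1/2-\varepsilon}/2$ requires $|T|\ge \sqrt{n}/2$; Lemma~\ref{lem:set}'s pairwise intersection bound alone only yields $|T| \ge \sqrt{n} - n^\varepsilon |y|_1$, which is vacuous for large $|y|_1$.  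This is the main technical hurdle I anticipate: for arbitrary $y$ of high Hamming weight, Bob's clauses collectively may hit essentially all of $X_{i^*}$.  A satisfactory resolution likely requires a structural refinement --- for instance, choosing both the $X_i$'s and the Bob triggers jointly via a careful probabilistic construction so that Bob's coverage of any single $X_{i^*}$ is controlled by the combinatorial structure and not just by $|y|_1$, or alternatively replacing $\mathbf{1}[S\cap X_j\ne\emptyset]$ by $\mathbf{1}[z_j\in S]$ for cleverly spread singletons $z_j\in X_j$.  Once such a bound is in hand, any $o(n^{1/2-\varepsilon})$-approximation distinguishes the two cases and the $\Omega(M)$ communication lower bound for \textsc{Equality} yields the claimed $\Omega(\exp(n^{2\varepsilon}/24)/n)$ query bound.
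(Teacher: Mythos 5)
Your high-level framework is identical to the paper's: reduce from two-party \textsc{Equality} on $M = \exp(n^{2\varepsilon}/24)$-bit inputs, split the XOS clauses between Alice and Bob, and simulate each demand query with $O(n)$ bits by having each player report their locally optimal bundle and its utility. The gap you have flagged in your own construction, however, is real and is not a side issue --- it is fatal, and the paper's construction is designed precisely to avoid it. The problem with a lump-sum Bob clause $2\sqrt n\cdot\mathbf 1[S\cap X_j\neq\emptyset]$ is that a \emph{single} item of overlap with $X_{i^*}$ triggers the entire value $2\sqrt n$, which swamps Alice's utility $(1+2/\sqrt n)\sqrt n - \sqrt n = 2$ for buying all of $X_{i^*}$. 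Trying to repair this by subtracting $\bigcup_{j:y_j=1}X_j$ from the display set is hopeless because, as you note, Bob may have $\Theta(M)$ live clauses and the pairwise bound $|X_{i^*}\cap X_j|\le n^\varepsilon$ gives no control on the union. Your proposed alternative fix --- a single trigger $\mathbf 1[z_j\in S]$ --- has the same defect: nothing prevents the $z_j$ from falling inside $X_{i^*}$.

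The paper's resolution does not remove any items; instead it makes Bob's clauses \emph{per-item additive} with a much smaller per-item value, chosen so that the pairwise-intersection bound is directly useful. Concretely, each $X_i$ is partitioned into $b = n^{1/2-\varepsilon}/2$ blocks $Y_{i,1},\dots,Y_{i,b}$ of size $2n^\varepsilon$, and Bob's clause $d_{i,j}$ assigns value $b+2$ to each item of $Y_{i,j}$ and $0$ elsewhere. Two things follow. First, when $W_X = W_Y$, for any $S\subseteq X_i$ that Alice's clause would serve, some block captures a $1/b$ fraction of $S$, and buying only $Y_{i,j}\cap S$ gives utility $\ge (|S|/b)(b+1) > |S|$, so the buyer strictly prefers a set of size at most $2n^\varepsilon$; hence revenue $\le 2n^\varepsilon$. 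Second, when $i^*\in W_X\setminus W_Y$ and the seller displays all of $X_{i^*}$, every Bob clause $d_{i',j}$ has $i'\neq i^*$, so its support meets $X_{i^*}$ in at most $n^\varepsilon$ items, yielding utility at most $n^\varepsilon(b+1) < n^{1/2}$ --- strictly worse than the utility $n^{1/2}$ from buying all of $X_{i^*}$ via Alice's clause $c_{i^*}$ (which gives value $2$ per item). Thus the buyer buys all $\sqrt n$ items and revenue is $\sqrt n$, giving the required $\sqrt n / (2n^\varepsilon) = n^{1/2-\varepsilon}/2$ gap. The crucial parameter choice you are missing is that Bob's per-item value is $\Theta(n^{1/2-\varepsilon})$, not $\Theta(n^{1/2})$: small enough that $n^\varepsilon$ overlapping items cannot beat the full $X_{i^*}$, yet large enough that inside its own $X_i$ a small block beats any big Alice bundle.
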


\begin{proof}
By Lemma \ref{lem:set}, we can find $M= \exp ({n^{2\varepsilon}/24})$ sets $X_i$'s which have sizes $n^{1/2}$ and are subsets of $[n]$ such that 
\[
\forall 1 \leq i < j \leq M, |X_i \cap X_j| \leq n^{\varepsilon}.
\]

Let $b =n^{1/2-\varepsilon}/2$, For each $X_i$, define $Y_{i,1}, ....,Y_{i,b}$ to be an arbitrary partition of $X_i$, i.e., 
\begin{enumerate}
\item $Y_{i,1} \cup \cdots \cup Y_{i,b} = X_i$
\item $Y_{i,j} \cap Y_{i,j'} = \emptyset$ for all $1 \leq j < j'  \leq b$. 
\item $|Y_{i,j}| = |X_i|/b = 2n^{\varepsilon} $ for all $j \in [b]$.  
\end{enumerate}

Let $W,W'$ be a subset of $[M]$. Define the XOS valuation $v_{W,W'}$ as the following by specifying its clauses:
\begin{enumerate}
\item For each $i$ in $W$, $v_W$ has clause $c_i$ such that $c_i(\{j\}) = 2$ for all $j \in X_i$ and $c_i(\{j\}) = 0$ for all $j \not\in X_i$. These clauses are called c-clauses.
\item For each $i$ in $W'$ and each $j \in [b]$, $v_W$ has clause $d_{i,j}$ such that $d_{i,j}(\{j'\}) = b + 2$ for $j' \in Y_{i,j}$ and $d_{i,j}(\{j'\}) = 0$ for $j' \not\in Y_{i,j}$. These clauses are called d-clauses. 
\end{enumerate}
Suppose there's an algorithm $A$(can be randomized) which guarantees better than $n^{1/2-\varepsilon}/2$-approximation on the optimal revenue for XOS valuations. Let's assume $A$ uses $Q$ queries in expectation. 

Now consider the following communication problem: 
\begin{enumerate}
\item Alice gets input $W_X$ which is a subset of $[M]$ and $|W_X| = M/2$. And Bob gets input $W_Y$ which is a subset of $[M]$ and $|W_Y| = M/2$.
\item The goal is just to decide whether $W_X = W_Y$ by communication between Alice and Bob.
\end{enumerate}
This problem is just equality problem, which has zero-error randomized communication complexity $\Omega(\log \binom{M}{M/2}) = \Omega(M)$ (see Example 3.9 of \cite{Kushilevitz:1996:CC:264772}). 

Now consider the following protocol $\pi$ for the communication problem based on algorithm $A$:\
\begin{enumerate} 
\item Alice and Bob run algorithm $A$ locally on valuation $v_{W_X,W_Y}$ and item price 1 (i.e. $p_i = 1, \forall i\in [n]$). They use public randomness if $A$ needs randomness. Notice that without any information of $v_{W_X,W_Y}$ they can run most part of $A$ except demand queries. They are going to simulate the value queries by the following communication procedure. 
\item Whenever $A$ is making a demand query, Alice sends buyer's favorite subset over all the clauses she knows (i.e. $c_i$'s for all $i \in W_X$). Alice also sends the utility of that subset. Bob does the similar thing. Then they can figure out the result of the demand query by picking the subset with better utility. 
\item After running $A$, if the output of $A$ is larger than $2n^{\varepsilon}$, output ``not equal''. Otherwise output ``equal''.
\end{enumerate}

Since Alice and Bob use $O(n)$ bits of communication for each demand query, the expected communication complexity of $\pi$ is $O(Q \cdot n)$. 

Now let's prove $\pi$ correctly solves equality. 
\begin{enumerate}
\item When $W_X = W_Y$, we are going to show that the optimal revenue is at most $2n^{\varepsilon}$. Let's assume the optimal revenue is achieved by the seller showing $T$ and the buyer buying $S \subseteq T$. 
\begin{enumerate}
\item If $v_{W_X,W_Y}(S)$ is evaluated on some c-clause, let's assume the clause is $C_i$. Then we know $S \subseteq X_i$. Since $W_X =W_Y$, we know $i \in W_Y$. Since $Y_{i,1},...,Y_{i,b}$ is a partition of $X_i$, we know there exists $j$ such that $|Y_{i,j} \cap S| \geq |S|/b$. The utility of buying $Y_{i,j} \cap S$ is at least $(|S|/b) \cdot (b+2-1) > |S|$. On the other hand, the utility of buying $S$ is $|S|$. We get a contradiction now and therefore it's never the case that $v_{W_X,W_Y}(S)$ is evaluated on some c-clause.
\item If $v_{W_X,W_Y}(S)$ is evaluated on some d-clause, then we know $|S|$ will be smaller than $2n^{\varepsilon}$ as any d-clause are non-zero on $2n^{\varepsilon}$ items. Therefore in this case the revenue is at most $2n^{\varepsilon}$. 
\end{enumerate}
\item When $W_X \neq W_Y$, as $|W_X| = |W_Y|$, there exists $i$ such that $i \in W_X$ and $i \not\in W_Y$. We are going to show that if the seller show subset $X_i$, the buyer will buy the entire set. And therefore the optimal revenue is $n^{1/2}$. Formally, we will show for any subset $S \subseteq X_i$, the utility of the buying $S$ is at most the utility of buying $X_i$ for the buyer. It is clear that the utility of buying $X_i$ is $C_i(X_i) - |X_i| = |X_i| = n^{1/2}$. For any $S \subseteq X_i$,
\begin{enumerate}
\item If $v_{W_X,W_Y}(S)$ is evaluated on some c-clause, then the utility of buying $S$ is at most $|S| \leq n^{1/2}$. The equality is only achieved when $S = X_i$. 
\item If $v_{W_X,W_Y}(S)$ is evaluated on some d-clause, let that d-clause be $d_{i',j}$. Since $i \in W_Y$, we know $i' \neq i$, therefore $|Y_{i',j}\cap X_i| \leq |X_{i'} \cap X_i| \leq n^{\varepsilon}$. Therefore the utility of buying $S$ is at most $n^{\varepsilon} \cdot ( n^{1/2-\varepsilon}/2 + 2 -1) < n^{1/2}$. 
\end{enumerate}
\end{enumerate}
Since $A$ guarantees better than $(n^{1/2-\varepsilon}/2)$-approximation, when $W_X \neq W_Y$, it would output something larger than $2n^{\varepsilon}$. And when $W_X = W_Y$, it will output something at most $2n^{\varepsilon}$. Therefore protocol $\pi$ correctly solves equality. Then by the communication lower bound, we know that $Q\cdot n = \Omega(M)$. Therefore $Q = \Omega(\exp ({n^{2\varepsilon}/24})/n)$.
\end{proof}

\begin{remark}
\label{rm:cc}
It's easy to check that the proof of Theorem \ref{thm:XOShard} still works if we switch demand queries with other queries which can be computed by Alice and Bob using polynomial many bits of communication. For example, a value query can be simulated by $O(\log(n))$ bits as Alice and Bob can just report the value of the set in their own parts and then take the maximum.
\end{remark}
\begin{lemma}
\label{lem:opt}
Among all the subsets that achieve the optimal revenue for the seller, there exist a subset $T$ such that if the seller shows $T$, the buyer would pick all the items in $T$. 
\end{lemma}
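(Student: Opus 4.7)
The plan is a short exchange argument. Fix any optimal assortment $T^*$ and let $S^*\subseteq T^*$ be the bundle the buyer actually purchases when shown $T^*$, so that the optimal revenue is $p(S^*):=\sum_{i\in S^*}p_i$. I will show that the smaller assortment $T:=S^*$ is itself optimal and is purchased in full by the buyer, which proves the lemma.

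First I would verify that $S^*$ is still a utility-maximizing purchase after the menu is restricted to $T=S^*$. Since the buyer chose $S^*$ out of $T^*$, we have $v(S^*)-p(S^*)\ge v(S')-p(S')$ for every $S'\subseteq T^*$, hence in particular for every $S'\subseteq S^*$. So the maximum buyer utility on $T=S^*$ equals $v(S^*)-p(S^*)$ and is attained at $S^*$.

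Next I would rule out that any competing maximizer $S'\subsetneq S^*$ yields a strictly worse outcome for the seller. Any such $S'$ with the same utility as $S^*$ satisfies $v(S^*)-v(S')=p(S^*)-p(S')$, and monotonicity of $v$ then forces $p(S^*)\ge p(S')$. Thus among all buyer best responses to $T=S^*$, the set $S^*$ itself weakly maximizes the seller's revenue. Under the standard seller-favorable tie-breaking convention used in the paper's revenue model, the buyer's purchase from $T$ is therefore exactly $S^*=T$, and the collected revenue $p(S^*)$ matches the optimum, so $T$ is an optimal assortment at which the buyer buys every item.

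The only subtlety is the handling of ties among the buyer's utility maximizers, which is where the main obstacle lies. I would flag this explicitly: the argument above shows that $S^*$ is always a revenue-maximal buyer best response to $T=S^*$, so the optimal revenue is attained at $T=S^*$ under any tie-breaking rule that selects a revenue-maximal best response, and under the seller-favorable convention the buyer's actual pick coincides with $T$. This closes the proof.
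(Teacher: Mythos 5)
Your proof takes the same route as the paper: restrict the assortment to the set $S^*$ that the buyer actually purchases under an optimal assortment, and observe that $S^*$ remains a best response when shown $T=S^*$. The paper states this in one line without addressing ties; your explicit argument that any tied best response $S'\subsetneq S^*$ satisfies $p(S')\le p(S^*)$ (which, incidentally, already follows from non-negativity of prices without invoking monotonicity of $v$) makes the seller-favorable tie-breaking step rigorous rather than implicit.
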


\begin{proof}
Let $T'$ be an arbitrary set that achieves the optimal revenue for the seller. Assume in this case, buyer chooses $S \subseteq T'$. Then if we just show the buyer $S$, the buyer will still pick $S$. 
\end{proof}

\begin{theorem}[Restatement of Theorem \ref{thm:XOS2hard}]
For any XOS valuations with only 2 clauses, finding the optimal revenue is NP-hard in the noiseless case. 
\end{theorem}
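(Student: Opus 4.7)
My plan is to reduce from Subset Sum: given positive integers $a_1, \dotsc, a_n$ and target $A$, decide whether some $S\subseteq [n]$ satisfies $\sum_{i\in S} a_i = A$. I first apply Lemma~\ref{lem:opt} to restrict attention to assortments $T$ in which the buyer purchases every item, so the revenue is simply $\sum_{i\in T} p_i$. For a 2-clause valuation $v=\max(c_1,c_2)$, I would derive a clean characterization of this event: if the dominant clause on $T$ is clause 1, then the buyer takes all of $T$ iff (i) $c_1(i)\geq p_i$ for each $i\in T$, and (ii) $\sum_{i\in T}(c_1(i)-p_i) \geq \sum_{i\in T,\, c_2(i)>p_i}(c_2(i)-p_i)$, with a symmetric statement when clause 2 dominates.

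The reduction builds $n+1$ items. For each $i\in[n]$ set $p_i=a_i$, $c_1(i)=a_i+\epsilon$, and $c_2(i)=2a_i$, and add an ``anchor'' item $*$ with $p_*=M$, $c_1(*)=M+A+n\epsilon$, and $c_2(*)=0$, where $M$ is a large integer (say $M=4\sum_i a_i$) and $\epsilon<1/(4n)$ is a small positive rational. A direct computation of the buyer's best response under each clause shows: under clause 1 every item has strictly positive surplus, so the clause-1 optimal subset is all of $T$ with utility $|T\cap[n]|\epsilon+(A+n\epsilon)\mathbf{1}[*\in T]$; under clause 2 item $*$ is unprofitable while each $i\leq n$ contributes $a_i$, so the clause-2 optimal subset is $T\cap[n]$ with utility $\sum_{i\in T\cap[n]} a_i$. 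Therefore, when $*\in T$ and $\sum_{i\in T\cap[n]} a_i \leq A$, clause 1 strictly wins and the buyer takes $S=T$, yielding revenue $M+\sum_{i\in T\cap[n]} a_i$; otherwise (including $*\notin T$) clause 2 wins and the buyer takes only $T\cap[n]$, yielding revenue at most $\sum_{i=1}^n a_i$.

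Taking $M>2\sum_i a_i$ ensures the first regime strictly dominates the second, so the optimal revenue equals $M+\max\{\sum_{i\in S} a_i : \sum_{i\in S} a_i \leq A\}$. By integrality of the $a_i$, this maximum equals $A$ precisely when Subset Sum has a YES instance, and is at most $A-1$ otherwise. Hence any polynomial-time algorithm that decides whether the optimal revenue is at least $M+A$ decides Subset Sum, and the assortment problem is NP-hard even for 2-clause XOS valuations in the noiseless setting.

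The main technical obstacle is controlling the buyer's tie-breaking between the two clauses. Without the $\epsilon$-perturbation of $c_1$, the buyer is indifferent whenever $\sum a_i = A$, and an adversarial tie-break could drop the anchor item $*$ and collapse revenue from $M+A$ down to $A$. My choice of $\epsilon$ forces clause 1 to strictly win in the feasible case, while integrality of the $a_i$ provides a unit gap that dominates the $O(n\epsilon)<1$ slack on the clause-1 side in the infeasible case, forcing clause 2 to strictly win there. These two one-sided strict inequalities together make the $M+A$ vs.\ $M+A-1$ separation robust and complete the reduction.
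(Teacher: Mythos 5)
Your reduction is correct and follows essentially the same approach as the paper: both introduce a single high-priced ``anchor'' item whose value lives only in clause~1, so that the optimal assortment must coax the buyer onto clause~1, and both encode a knapsack-style constraint through the comparison between the two clauses' best utilities. The only differences are cosmetic---the paper reduces from Knapsack while you reduce from Subset Sum (a special case), and your explicit $\epsilon$-perturbation plus integrality gap handles the clause-1/clause-2 tie-breaking more carefully than the paper's argument, which leaves the boundary case $\sum w_i = W$ implicit.
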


\begin{proof}
We will reduce from the knapsack problem which is NP-hard, i.e.:
 \begin{alignat*}{2}
    \textbf{maximize }   & \sum_{i=1}^m v_i x_i  \\
    \textbf{subject to ~~~~} & \sum_{i=1}^m w_ix_i \leq W\\
                       & x_i \in \{0,1\}\\
  \end{alignat*}
Suppose we have an algorithm $A$ to find the optimal revenue for XOS valuations with 2 clauses. We are going to solve the knapsack problem in the following way:

We set $n = m+1$. We then construct XOS valuation with 2 clauses $c_1$ and $c_2$. We set $c_1$, $c_2$ and price $p_i$'s as:
\begin{enumerate}
\item For $1 \leq i \leq m = n-1$, $c_1(\{i\}) = v_i+1$, $c_2(\{i\}) = w_i + v_i+1$ and $p_i = v_i$.
\item For $i = n$, $c_1(\{i\}) = v_1 + \cdots + v_m +1 + W$, $c_2(\{i\}) = 0$ and $p_i = v_1 + \cdots +v_m+1$. 
\end{enumerate}

Notice that the optimal revenue is at least $p_n$ because the seller can always show only item $n$. And as $p_n > p_1 + \cdots + p_{n-1}$, to achieve the optimal revenue, the seller needs to make sure the buyer purchases item $n$, which means the value needs to be evaluated on $c_1$. 

Now we are going to show the optimal revenue is equal to the maximum objective of the knapsack problem plus $p_n$. 

\begin{enumerate}
\item By Lemma \ref{lem:opt}, let $T$ be the set that when the seller shows $T$, the buyer would buy $T$ and the seller gets optimal revenue. As discussed before $n \in T$. Then we know that $c_1(T) -\sum_{i\in T} p_i \geq c_2(T \backslash n)-\sum_{i\in T \backslash n} p_i $. This implies $\sum_{i \in T \backslash n} w_i \leq W$. Therefore, if we pick $x_i = 1$ iff $i \in T$, we have $ \sum_{i=1}^m w_ix_i \leq W$ and $\sum_{i=1}^m v_i x_i$ at least the optimal revenue minus $p_n$. Therefore the optimal revenue is at most the maximum objective of the knapsack problem plus $p_n$.
\item Now consider the optimal solution for the knapsack problem $x_i$'s. Define $T = \{i| x_i = 1\} \cup \{n\}$. It's easy to check that the buyer would pick $T$ if the seller shows $T$ because $\sum_{i=1}^m w_ix_i \leq W$. Therefore the optimal revenue is at least the maximum objective of the knapsack problem plus $p_n$.
\end{enumerate}
We finish the proof by noticing the fact that the decision version of the knapsack problem is NP-hard.
\end{proof}
 
\subsection{Hardness for $2$-demand valuations in the noisy setting}
\begin{theorem}[Restatement of Theorem \ref{thm:2demandhard}]
For submodular $2$-demand valuations, it's NP-hard to approximate the optimal revenue within approximation factor $1+1/n^c$ in the noisy case for some large enough constant $c$. Therefore there's no FPTAS to compute the optimal revenue in this setting unless $P = NP$. 
\end{theorem}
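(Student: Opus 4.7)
The plan is to reduce from the NP-hard $k$-clique problem. Given a graph $G=(V,E)$ on $n$ vertices and a parameter $k$, we build an instance of combinatorial assortment with a submodular $2$-demand valuation $v$, item prices $\{p_i\}$, and a type distribution $\mathcal{F}$ such that the optimal expected revenue crosses a threshold $R^*$ exactly when $G$ has a $k$-clique, with a multiplicative gap of at least $1+1/n^c$ between the YES and NO cases.

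The construction uses one item per vertex and defines the valuation via an underlying submodular function $v'$ on singletons and pairs: $v'(\{i\})=V_0$ for all $i$, and $v'(\{i,j\})$ takes a larger value if $(i,j)\in E$ and a smaller value otherwise (each at most $2V_0$ to preserve submodularity). The $2$-demand extension is $v(S)=\max_{T\subseteq S,\,|T|\le 2} v'(T)$. Item prices $\{p_i\}$ are set so that the effective price-per-value of each pair $\{i,j\}$ distinguishes edges from non-edges, and the type distribution $\mathcal{F}$ is concentrated at threshold values chosen so that each edge pair is the uniquely preferred purchase for a buyer at some type in the support. Intuitively, for an assortment $S$, the expected revenue decomposes into contributions, one for each edge of $G$ that lies inside $G[S]$, plus lower-revenue ``fallback'' terms from types whose preferred edge pair is absent from $S$.

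The correctness argument shows that (a) the revenue of $S$ is at most $R^*$, with equality achievable only when $G[S]$ contains the maximum possible number of edges among $k$ vertices, i.e., a $k$-clique, and (b) the revenue of any non-$k$-clique $S$ is strictly less than $R^*$ by an additive amount $\gamma=\Omega(1/\mathrm{poly}(n))$ coming from at least one missing edge contribution. Cannibalization is the mechanism that discourages expanding $S$ beyond $k$ vertices: adding extra vertices introduces non-edge pairs that offer cheaper bundles to high-type buyers, diverting them away from high-revenue edge pairs. Tuning $\gamma/R^*=\Omega(1/n^c)$ for a sufficiently large constant $c$ yields the claimed $1+1/n^c$ inapproximability, and since $k$-clique is NP-hard this rules out an FPTAS unless $P=NP$.

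The main obstacle is the simultaneous calibration of the prices, pair valuations, and type distribution so that: (i) each edge pair's ``utility line'' $w \mapsto v'(\{i,j\})\,w - (p_i+p_j)$ occupies a distinct non-empty region of the upper envelope, yielding the revenue-by-edges decomposition; (ii) non-edge pairs never dominate edge pairs at any type in the support of $\mathcal{F}$; (iii) cannibalization rigorously pins the optimum assortment size at $k$; and (iv) the underlying $v'$ remains submodular. The delicate part of the analysis is lower-bounding the revenue gap between clique and non-clique instances by $\Omega(1/n^c)$, which requires showing that the loss from any single missing edge exceeds any possible gain from rearranging the remaining vertices.
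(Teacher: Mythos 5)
Your high-level plan — reduce from $k$-clique with one item per vertex and a submodular $2$-demand valuation — matches the paper, but the technical route you sketch has a step that would break, and the paper's actual construction works around precisely that issue.

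The problematic step is the claimed ``revenue decomposes into contributions, one for each edge inside $G[S]$.'' In your construction, all singletons have the same value $V_0$ and all edge pairs share one value $v'(\{i,j\})=V_e$ while all non-edge pairs share another. In the $(w,u)$-plane where a pair $\{i,j\}$ corresponds to the line $w \mapsto v'(\{i,j\})\,w - (p_i+p_j)$, all edge pairs then have identical slope $V_e$. Parallel lines cannot all appear on the upper envelope: only the edge pair with the largest intercept (smallest $p_i+p_j$) is ever preferred by any type, so the expected revenue from edge pairs never separates into $\Theta(|E(G[S])|)$ distinct contributions. No amount of calibrating prices fixes this while the pair values are constant across edges; you would need every pair to have a \emph{distinct} value. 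The paper achieves exactly this with a Sidon-set style construction: it chooses integers $h_1<\dots<h_{n_G}$ with all pairwise sums of $0,h_1,\dots,h_{n_G}$ distinct, sets item $i$'s value to $\sqrt{h_i}$ and price to $h_i$, and defines the pair value as $\sqrt{h_i+h_j}$ (strictly subadditive, hence submodular). Crucially, this makes the valuation and prices \emph{graph-independent}; the edge structure is embedded instead in the type distribution $\mathcal F$, which places carefully sized atoms near the indifference points $2\sqrt{h_i+h_j}$, with atom weights encoding the quadratic program $\max \sum_i \alpha_i x_i - \sum_{i\neq j}\beta_{ij}x_i x_j$ that represents $k$-clique via a $-n_G^2(\sum_i x_i - k)^2$ penalty. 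Your plan embeds the edge structure in the valuation and relies on ``cannibalization by non-edge pairs'' to keep the optimum at size $k$, but you give no quantitative mechanism for that; the paper gets the cardinality control for free from the quadratic penalty term in the program.

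There is a second structural difference worth noting. The paper introduces a set of ``anchor'' items (the $b_i$'s) and a base distribution $D$ with a piecewise-linear convex revenue curve hugging $1/x$. Lemma~\ref{lem:charopt} shows that under $D$ the optimal assortment must contain all the anchors, and once they are present the choice of $h_i$'s is revenue-neutral under $D$. The edge-encoding distribution $D'$ is then mixed in at a tiny weight $\rho$, small enough that the anchor structure is still forced, but large enough that the residual optimization over which $h_i$'s to include exactly solves the quadratic program. This two-layer structure is what makes the gap rigorously $\Omega(1/n^c)$: the discrete jump of the math program between the $k$-clique and non-$k$-clique cases directly translates into a $1+1/n^{c'}$ revenue gap. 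Your proposal has no analogue of this decoupling, so both pinning the assortment and lower-bounding the gap remain open in your sketch — and the parallel-lines issue above means the decomposition you want to build on does not exist in the first place.
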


\begin{proof}
The reduction is from $k$-clique. Let the $k$-clique instance be graph $G = (V,E)$. Let $|V| = n_G$. Now consider the following mathematical program:
 \begin{alignat*}{2}
    \textbf{maximize }   & \sum_{(i,j)\in E} x_ix_j -n_G^2 \left(\sum_{i=1}^{n_G} x_i - k\right)^2  \\
    \textbf{subject to~~} & x_i \in \{0,1\}\\
  \end{alignat*}
  The objective is maximized when $\sum_{i=1}^{n_G} x_i = k$, otherwise the objective will be negative. It's easy to see maximized value is $\binom{k}{2}$ if and only if graph $G$ has a $k$-clique. The objective can be rewritten as
\[
\sum_{i=1}^{n_G} (1+2kn_G^2) x_i - \sum_{i,j \in[n_G], i\neq j}(2kn_G^2 -1_{(i,j)\in E}) x_ix_j - n_G^2k.  
\] 

Next we are going to show that the assortment algorithm can be used to solve the following mathematical program when $\alpha_i > 0$ and $\beta_{i,j}>0$. This is more general than the above mathematical program. Therefore it will imply the assortment algorithm can be used to solve the $k$-clique problem.
 \begin{alignat*}{2}
    \textbf{maximize }   & \sum_{i=1}^{n_G} \alpha_i x_i - \sum_{i,j \in[n_G], i\neq j}\beta_{i,j} x_ix_j \\
    \textbf{subject to~~} & x_i \in \{0,1\}\\
  \end{alignat*}
  
Let $h_1< h_2 < ...< h_{n_G}$ be $n$ different positive integers such that the pairwise sums of $0,h_1,...,h_{n_G}$ are all different and $h_1 \geq 2$. We can find such sequence by using numbers that are $O(n^4)$. The existence can be proved by randomly picking integers and using probabilistic argument. Now let $a_1 < a_2 < \cdots < a_m$ be the sorted list of $\sqrt{h_1},...,\sqrt{h_{n_G}}$ and the square roots of $h_1,...,h_{n_G}$'s pairwise sums (i.e. $\sqrt{h_1+h_2}, \sqrt{h_1+h_3},...$). Here $m = n_G + \binom{n_G}{2}$. Let $b_0 = 1$ and $b_i = (a_i + a_{i+1})/2$ for $i=1,...,m-1$ and $b_m = a_m + 1$. So now we have $b_0 < a_1 < b_1 < a_2 < \cdots < a_m < b_m$. Figure \ref{fig:lb} shows what it looks with these options as lines.

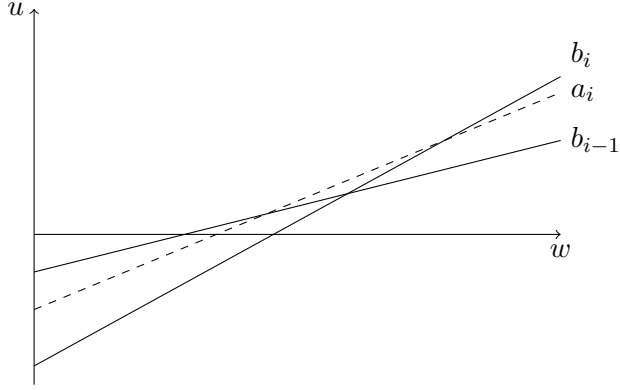
\begin{figure}\begin{center}
\begin{tikzpicture}[scale=2]

\draw [->] (0,0) -- (3.5,0)  node [below] {$w$};
\draw [->] (0,-1) -- (0,1.5) node [left] {$u$};

\draw [domain=0:3.5] plot (\x, 1*\x/4-.2*5/4) node [right] {$b_{i-1}$};
\draw [domain=0:3.5,dashed] plot (\x, 1.65*\x/4-.4*5/4) node [right] {$a_i$};
\draw [domain=0:3.5] plot (\x, 2.2*\x/4-.7*5/4) node [above right] {$b_i$};

\end{tikzpicture}
\end{center}
\caption{Example of $a_i$'s and $b_i$'s viewed as lines}
\label{fig:lb}
\end{figure}

Now consider the following 2-demand valuation $v$ with $n = n_G + m + 1$ items. Each $h_i$, $1 \leq i \leq n_G$ corresponds to an item with price $h_i$ and value $\sqrt{h_i}$. Each $b_i$, $0 \leq i \leq m$ also corresponds to an item with price $b_i^2$ and value $b_i$. If two items are both corresponding to some $h_i$ and $h_j$, the value of the bundle will be $\sqrt{h_i+h_j}$. Other bundles of 2 items will have values equal to the higher value of one of its items. Since $\sqrt{h_1+h_2} < \sqrt{h_1} + \sqrt{h_2}$ and this valuation is 2-demand, its also submodular. 

It's also that no matter what the seller shows, the buyer would either buy nothing, or a single item, or 2 items that correspond to some $h_i$ and $h_j$. Therefore the sets the buyer could buy correspond to $a_i$'s and $b_i$'s. The prices of these sets are $a_i^2$ or $b_i^2$. The values of these sets are $a_i$ or $b_i$. 

Now consider a distribution $D$ on the multiplicative noise that is specified by the following CDF $F$. 
\begin{enumerate}
\item For $x = 2b_i$, $i = 0,...,m$, we have $F(x) = 1+\frac{1}{4b_m^2} - \frac{1}{x^2}$. 
\item For $ 2b_{i-1} < x < 2b_i$, $i = 1,...,m$, we have $F(x) = 1 + \frac{1}{4b_m^2} + \frac{1}{4b_{i-1}b_i}-\frac{b_{i-1}+b_i}{2xb_{i-1}b_i}$.
\item For $b_0 \leq x < 2b_0$, we have $F(x) = 1+\frac{1}{4b_m^2} - \frac{1}{x^2}$.
\item The probability of multiplicative noise smaller than $b_0$ is $\frac{1}{4b_m^2}$. It does not matter where they actually distribute at for our problem since these buyers won't buy anything anyways.
\end{enumerate}

Now define $R(x) = \left(1+ \frac{1}{b_m^2} - F(x)\right)x$ for $ b_0 \leq x \leq 2b_m$. We have
\begin{enumerate}
\item For $x =2 b_i$, $i = 0,...,m$, we have $R(x) = \frac{1}{x}$.
\item For $ 2b_{i-1} < x < 2b_i$, $i = 1,...,m$, we have $R(x) = \frac{2b_i - x}{2b_i - 2b_{i-1}} \cdot \frac{1}{2b_{i-1}} + \frac{x-2b_{i-1}}{2b_i - 2b_{i-1}}\cdot \frac{1}{2b_i}$. In other words, $R(x)$ is linear between $2b_{i-1}$ and $2b_i$. 
\item For $b_0 \leq x < 2b_0$, we have $R(x) = \frac{1}{x}$.
\end{enumerate}
Notice that $\frac{1}{x}$ is a convex function. And $R(x)$ is a piece-wise linear function based on points on curve $\frac{1}{x}$. So $R(x)$ is also convex.

Now we are going to specify the expected revenue for valuation $v$ and multiplicative noise distribution $D$. Suppose the set shown by the seller is $T$ and $C$ is the set of $a_i$'s and $b_i$'s corresponding to available set options for the buyer when $T$ is shown. Let $|C| = n_C$ and elements in $C$ are sorted as $c_1 < c_2 < \cdots < c_{n_C}$. For notation convenience, let $c_0 = 0$. It's easy to check that $c_i$ is has the highest utility for the buyer if $ c_i + c_{i-1} < x < c_i + c_{i+1}$. The revenue can be written as
\[
\left(\sum_{i=1}^{n_C-1} (F(c_i+c_{i+1}) - F(c_i + c_{i-1}))c_i^2\right) + (1- F(c_{n_G} + c_{n_G-1}))c_{n_G}^2.
\]

We are going to prove the following lemma to characterize the set $T$'s which achieve the optimal revenue.

\begin{lemma}
\label{lem:charopt}
If $T$ achieves the optimal revenue if and only if it contains all the items correspond to $b_0, ...,b_m$. 
\end{lemma}

\begin{proof}
We prove the two directions separately:
\begin{enumerate}
\item ``only if'': We prove by contradiction. Suppose $T$ does not have some item corresponds $b_i$. We will show by adding this item, the revenue will strictly increased. There are two cases:
	\begin{enumerate}
	\item When $i=m$, adding this item would increase the revenue by $(1-F(c_{n_G} + b_m))(b_m^2- c_{n_G}^2) > 0$. 
	\item When $i < m$, let $j$ be the index such that $c_j < b_i < c_{j+1}$, adding the item would increase the revenue by 
	\begin{align*}
	&& F(c_j + c_{j+1})(c_{j+1}^2 - c_j^2) - F(c_j + b_i)(b_i^2 - c_j^2) - F(c_{j+1}+b_i)(c_{j+1}^2-  b_i^2)\\
	&=& -R(c_j+c_{j+1})(c_{j+1} - c_j) + R(c_j+b_i)(b_i - c_j) + R(c_{j+1} + b_i)(c_{j+1} - b_i).
	\end{align*}
	Since $R$ is a convex function, we know 
	\[
	 R(c_j+c_{j+1}) \leq \frac{b_i-c_j}{c_{j+1}-c_j} R(c_j + b_i) + \frac{c_{j+1}-b_i}{c_{j+1}-c_j} R(c_{j+1} + b_i).
	 \]
	 Moreover, since $c_j + b_i < 2b_i $ and $c_{j+1} + b_i > 2b_i$, we know $R(c_j + b_i)$ and $R(c_{j+1} + b_i)$ locate on two different piece-wise linear functions of $R$. Therefore we know the previous inequality is strict, i.e.
	 \[
	 R(c_j+c_{j+1}) < \frac{b_i-c_j}{c_{j+1}-c_j} R(c_j + b_i) + \frac{c_{j+1}-b_i}{c_{j+1}-c_j} R(c_{j+1} + b_i).
	 \]
	 This will imply that adding item corresponds to $b_i$ would increase the revenue by some positive value. 
	\end{enumerate}
\item ``if'': Consider the option set $C$ for $T$, we will first show that if $\{b_0,...,b_m\}\subseteq C$ and $a_i \not \in C$, then $a_i$ to $C$ won't change the revenue. Similarly as the ``only if'' part, adding $a_i$ will increase the revenue by 
\[
-R(b_{i-1}+b_i)(b_{i-1} - b_i) + R(a_i+b_i)(b_i - a_i) + R(b_{i-1} + a_i)(a_i - b_{i-1}).
\]
Since $a_i +b_i \leq 2b_i$ and $a_i + b_{i-1} \geq 2b_{i-1}$, we know that $R(b_{i-1}+b_i)$, $R(a_i+b_i)$ and $R(a_i + b_{i-1})$ are on the same piece-wise linear function, therefore
\[
-R(b_{i-1}+b_i)(b_{i-1} - b_i) + R(a_i+b_i)(b_i - a_i) + R(b_{i-1} + a_i)(a_i - b_{i-1}) = 0.
\]
This means the revenue does not change.

Now we can apply this claim to $T$ several times and we know that the revenue of showing $T$ is the same the revenue of showing everything.

Thus we know that if $T$ contains all the items correspond to $b_0, ...,b_m$, no matter what other items $T$ have, the revenue is the same. This together with the ``only if'' part will imply that the ``if'' part. 
\end{enumerate}
\end{proof}

Notice that from Lemma \ref{lem:charopt}, we know that if $T$ does not contain all the items correspond to $b_0,...,b_m$, the revenue of $T$ will be strictly worse than the optimal revenue. Let's assume optimal revenue to be $OPT_D$. Let's also assume $T$'s revenue will be at least $\Delta$ worse than $OPT_D$. It's easy to see that this $\Delta$ is not too small. It should be $\Omega(1/n_G^c)$ for some constant $c$. 

Now consider another distribution $D'$ on the multiplicative noise. $D'$ will be used to embed the mathematical program. $D'$ is discrete, we will describe it by its support and density. For each $i=1,...,m$, 
\begin{enumerate}
\item If $a_i$ equals to some $\sqrt{h_j}$, $D'$ will have density $\frac{s\alpha_j}{a_i^2 - b_{i-1}^2}$ at location slightly smaller than $2a_i$.
\item IF $a_i$ equals to some $\sqrt{h_{j_1} + h_{j_2}}$, $D'$ will have density $\frac{s\beta_{j_1,j_2}}{b_i^2 - a_i^2}$ at location slightly larger than $2a_i$. 
\end{enumerate}
Here $s$ is just some scaling factor that makes sure that the densities sum up to 1. Now suppose $T$ has all the items correspond to $b_0,...,b_m$. Let $x_i$ indicate whether item corresponds to $h_i$ is included in $T$. And let $R_0$ be the revenue of only showing items correspond to $b_0,...,b_m$ on distribution $D'$, then the revenue of $T$ on $D'$ can be written as 
\[
R_0 + s(\sum_{i=1}^{n_G} \alpha_i x_i - \sum_{i,j \in[n_G], i\neq j}\beta_{i,j} x_ix_j).
\]
It is very similar to the objective of the mathematical program. 

Finally we consider a distribution $\hat{D} = (1-\rho)D + \rho D'$ where $\rho = \frac{\Delta}{2b_m^2}$. Now we are going to understand the optimal revenue for $\hat{D}$. No matter what we show, the revenue from $\rho D'$ part will be at most $\rho \cdot b_m^2 = \Delta/2$. On the other hand, for $(1-\rho)D$, if we don't show some set with all the items correspond to $b_0,...,b_m$, we will at least lose $(1-\rho)\Delta>\Delta/2$ revenue. Therefore, to achieve the optimal revenue for $\hat{D}$, we also need to include all the items correspond to $b_0,...,b_m$. Then the optimal revenue can be written as
\[
(1-\rho)OPT_D + \rho R_0 + s\rho (\sum_{i=1}^{n_G} \alpha_i x_i - \sum_{i,j \in[n_G], i\neq j}\beta_{i,j} x_ix_j).
\]
Since $OPT_D$ and $R_0$ can be computed in polynomial time, computing the optimal revenue will also solve the mathematical program. Notice that for the $k$-clique problem, the optimized value of the mathematical program is differed by some constant between the case when the instance has a $k$-clique and the case when the instance does not have a $k$-clique. It is easy to check that $\frac{\rho s}{(1-\rho)OPT_D + \rho R_0 }$ is $\Omega(1/n^c)$ for some large enough constant $c$. Therefore approximating the optimal revenue within approximation factor $1+1/n^{c'}$ for some large enough constant $c'$ will also solve the $k$-clique problem which is NP-hard.

Finally one thing worth mentioning is that in the proof we define $\hat{D}$ as a continuous distribution, it might be annoying to make it to be an input for algorithms. Actually we can discretize this distribution by picking a discrete distribution that has the same CDF as $\hat{D}$ at all locations which are pairwise sums of $0,a_1,...,a_m, b_0,...,b_m$. It's easy to check that for our valuation $v$, the preferences of options stay the same between any two adjacent locations of these locations. So discretizing $\hat{D}$ in this way won't affect the proof.
\end{proof}

\section{Missing Proofs for Well-Priced Items (Thm \ref{thm:convex}, \ref{thm:convex.constrained.gross}, Welfare)}\label{app:well-priced}
\begin{figure}\begin{center}
\begin{tikzpicture}[scale=2]

\draw [->] (0,0) -- (3.5,0)  node [below] {$w$} node [right] {$v_0$};
\draw [->] (0,-1) -- (0,1.5) node [left] {$u$};

\draw [dashed] (5/3,0) node [below] {$w_2$} -- (5/3,1/3);
\draw [dashed] (3,0) node [below] {$w_3$} -- (3,1);

\draw [dashed] (0,1/3) node [left] {$u_2$} -- (5/3,1/3);
\draw [dashed] (0,1) node [left] {$u_3$} -- (3,1);

\draw [fill] (5/3,1/3) circle [radius=0.025];
\draw [fill] (3,1) circle [radius=0.025];
\draw [fill] (1/1.4,0) circle [radius=0.025] node [above] {$w_1$};
\draw [fill] (0,0) circle [radius=0.025] node [above right] {$w_0$} node [left] {$u_0,u_1$};

\draw [domain=0:3.5] plot (\x, 1.4*\x/4-.2*5/4) node [right] {$v_1$};
\draw [domain=0:3.5] plot (\x, 2.0*\x/4-.4*5/4) node [right] {$v_2$};
\draw [domain=0:3.5] plot (\x, 2.5*\x/4-.7*5/4) node [above right] {$v_3$};

\draw (0,-.2*5/4) node [left] {$-p_1$};
\draw (0,-.4*5/4) node [left] {$-p_2$};
\draw (0,-.7*5/4) node [left] {$-p_3$};

\draw [ultra thick, domain=0:1/1.4] plot (\x, 0);
\draw [ultra thick, domain=1/1.4:5/3] plot (\x, 1.4*\x/4-.2*5/4);
\draw [ultra thick, domain=5/3:3] plot (\x, 2.0*\x/4-.4*5/4);
\draw [ultra thick, domain=3:3.5] plot (\x, 2.5*\x/4-.7*5/4);

\draw [domain=0:3.5] plot (\x, .27*\x-.7) node [right] {dominated option};

\end{tikzpicture}
\end{center}
\caption{The numbering of options in $U(T)$ for a given assortment $T$. Superscripts $T$ are dropped for illustration purposes. Notice that any dominated option is not included in the numbering.}
\label{fig:proof-notation}
\end{figure}
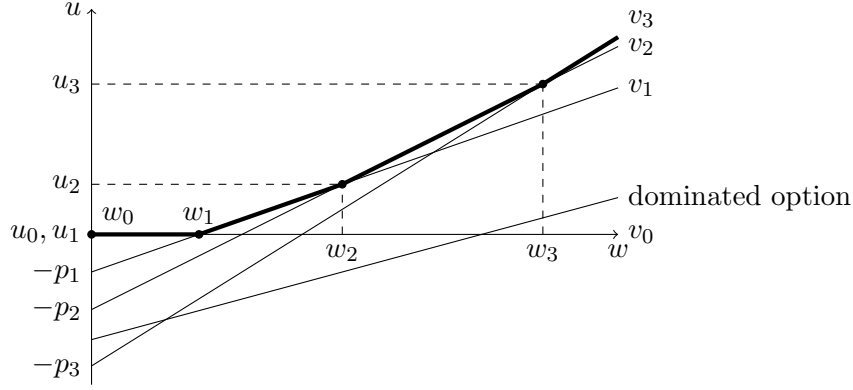
We start by giving an expression of the revenue obtained under a given assortment. To calculate this, we first identify all possible subsets that a buyer may purchase.

\begin{definition}
A set $S$ is \emph{dominated} for assortment $T$ if for all $w \in [0,\infty)$ there is a set $S'\subseteq T$ with $S \neq S'$, such that
$$v(S)\cdot w-\sum_{i \in S} p_i \le v(S')\cdot w-\sum_{i \in S'} p_i.$$

\noindent For an assortment $T$ we denote by $$U(T) = \{(v(S),\sum_{i\in S} p_i) :
  S \subseteq T
  \wedge
  S \text{ is not dominated for } T\}$$
the set of undominated options for $T$.
\end{definition}

The set of $U(T)$ is a totally-ordered under coordinate-wise comparisons, i.e. if $(v,p),(v',p')\in U(T)$, then either
$$(v'<v \text{ and }p'<p), \quad \text{ or } \quad
(v'=v \text{ and }p'=p), \quad \text{ or } \quad
(v'>v \text{ and }p'>p).$$
Moreover, $(0,0) \in U(T)$ and we can define the ordered sequence $(0,0)=(v^T_0,p^T_0)$,..., $(v^T_{n_T},p^T_{n_T})$ of elements of $U(T)$.
Additionally, we define as $w^T_i = \frac {p^T_i-p^T_{i-1}} {v^T_i-v^T_{i-1}}$ to be the indifference point between options $i$ and $i-1$ of $U(T)$.
The notation is illustrated in Figure~\ref{fig:proof-notation}.
Using this notation, we can write the expected revenue as follows.

\begin{lemma} \label{lem:revenue expression}
  The total revenue of assortment $T$ under distribution $F$ is equal to
  $$Rev(T) = \sum_{i=1}^{n_T} (v^T_i-v^T_{i-1}) R\left( w^T_i \right),$$
  where $R$ is the revenue function of distribution $F$, i.e. $R(w)=w(1-F(w))$.
\end{lemma}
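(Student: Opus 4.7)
The plan is to express the revenue as an expectation over the buyer's type $w$, identify the chosen option as a function of $w$, and then rewrite the resulting sum using Abel summation so that the revenue curve $R$ appears naturally.

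First I would argue that for a buyer of type $w$, the purchased set corresponds to one of the undominated options in $U(T)$ (any dominated set yields weakly lower utility than some undominated alternative for every $w$, and ties may be broken arbitrarily in favor of an undominated option). Having indexed $U(T)$ so that $v^T_0 \le v^T_1 \le \dots \le v^T_{n_T}$ and $p^T_0 \le p^T_1 \le \dots \le p^T_{n_T}$, I would show that the buyer's chosen option is monotone in $w$: specifically, option $i$ is strictly preferred to option $i-1$ precisely when $w > w^T_i$ (this follows from the definition $w^T_i = (p^T_i - p^T_{i-1})/(v^T_i - v^T_{i-1})$), and a telescoping argument shows option $i$ is optimal exactly on the interval $[w^T_i, w^T_{i+1})$, with the conventions $w^T_0 = 0$ and $w^T_{n_T+1} = \infty$. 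The pairwise ordering combined with the fact that the $w^T_i$ are nondecreasing (since $U(T)$ is totally ordered, equivalently the options lie on the upper envelope of lines $w \mapsto v^T_i w - p^T_i$) is the main bookkeeping step here.

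Given this characterization, the expected revenue is
\[
Rev(T) = \sum_{i=0}^{n_T} p^T_i \bigl( F(w^T_{i+1}) - F(w^T_i) \bigr).
\]
Using $p^T_0 = 0$ and $F(w^T_{n_T+1}) = 1$, I would apply Abel summation to rewrite this as
\[
Rev(T) = \sum_{i=1}^{n_T} (p^T_i - p^T_{i-1})\bigl(1 - F(w^T_i)\bigr).
\]
Substituting $p^T_i - p^T_{i-1} = w^T_i (v^T_i - v^T_{i-1})$ and using $R(w) = w(1-F(w))$ yields the claimed identity
\[
Rev(T) = \sum_{i=1}^{n_T} (v^T_i - v^T_{i-1}) R(w^T_i).
\]

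The only nontrivial step is the first one — verifying that the buyer's optimum over $U(T)$ is precisely option $i$ on $[w^T_i, w^T_{i+1})$. The rest is mechanical: a telescoping/Abel summation and substitution. I expect the bulk of the argument to be establishing that the sequence $w^T_i$ is well-defined and nondecreasing, which follows from the coordinate-wise total ordering of $U(T)$ noted in the excerpt, together with a short convexity argument on the upper envelope of the lines $w \mapsto v w - p$ over $(v,p) \in U(T)$.
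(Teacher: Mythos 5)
Your proposal is correct and follows essentially the same route as the paper: characterize the option purchased as a function of $w$ via the indifference points $w^T_i$, express revenue as $\sum_i p^T_i(F(w^T_{i+1}) - F(w^T_i))$, and then apply Abel summation and substitute $p^T_i - p^T_{i-1} = w^T_i(v^T_i - v^T_{i-1})$. Your explicit care about monotonicity of the $w^T_i$ sequence (from the total ordering of $U(T)$) is a point the paper leaves implicit, but it does not change the argument.
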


\begin{proof}
Observe that a buyer with type $w$ prefers option $(v^T_{i},p^T_{i})$ to option $(v^T_{i-1},p^T_{i-1})$, if and only if
$w > w^T_i$.

Thus, a buyer will pay $p^T_{i}$ if his type $w \in [w^T_{i},w^T_{i+1})$, where $w^T_0 = 0$ and $w^T_{n_T+1}=+\infty$.
The total revenue is thus:
\begin{align}
Rev(T) &= \sum_{i=0}^{n_T} p^T_{i} (F(w^T_{i+1})-F(w^T_{i}))\\
&= \sum_{i=0}^{n_T}  p^T_i (1- F(w^T_{i})) - \sum_{i=1}^{n_T+1}  p^T_{i-1} (1- F(w^T_{i})) \nonumber \\
&= \sum_{i=1}^{n_T}  p^T_i (1- F(w^T_{i})) - \sum_{i=1}^{n_T}  p^T_{i-1} (1- F(w^T_{i})) \nonumber \\
&= \sum_{i=1}^{n_T}  (p^T_i-p^T_{i-1}) (1- F(w^T_{i})) = \sum_{i=1}^{n_T}  (v^T_i-v^T_{i-1}) w^T_i (1- F(w^T_{i})) \nonumber
\end{align}
which gives the required statement.
\end{proof}

\noindent An equivalent expression to Lemma~\ref{lem:revenue expression} can be written as an integral with respect to the utility
function $u_T(w)$ obtainable under an assortment $T$.

\begin{lemma} \label{lem:revenue expression integral}
  The total revenue of assortment $T$ for a revenue function $R$ supported on $[0,H]$ is
  $$Rev(T) = \int_{0}^{\infty} u_T(w) R''\left( w\right) dw - R'(H) u_T(H)$$
  where $u_T(w) = \max_{S \subseteq T} v(S) \cdot w - \sum_{i\in S} p_i$ is the utility obtained by a buyer with type $w$.
\end{lemma}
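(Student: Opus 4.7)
\bigskip

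The plan is to start from the sum expression in Lemma~\ref{lem:revenue expression} and convert it to the integral form via (essentially) integration by parts twice, using the piecewise-linear structure of $u_T$.

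First I would identify $u_T$ concretely in the notation of the preceding lemma. Since the undominated options for $T$ are exactly $(v^T_0,p^T_0),\dots,(v^T_{n_T},p^T_{n_T})$ and a buyer of type $w$ prefers option $i$ on the interval $[w^T_i,w^T_{i+1}]$ (with $w^T_0 = 0$ and $w^T_{n_T+1} = +\infty$), the function $u_T$ is continuous and piecewise linear: on $[w^T_i,w^T_{i+1}]$ it equals $v^T_i w - p^T_i$, so $u_T'(w) = v^T_i$ there. In particular $u_T(0)=0$ (since $S=\emptyset$ is always an available option). Also, since $R$ is supported on $[0,H]$, we have $R(0)=R(H)=0$ and $R''$ vanishes outside $[0,H]$, so $\int_0^\infty u_T R''\,dw = \int_0^H u_T R''\,dw$.

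Next I would apply integration by parts once:
\begin{equation*}
\int_0^H u_T(w) R''(w)\,dw = \bigl[u_T(w) R'(w)\bigr]_0^H - \int_0^H u_T'(w) R'(w)\,dw = u_T(H) R'(H) - \int_0^H u_T'(w) R'(w)\,dw,
\end{equation*}
using $u_T(0)=0$. It remains to show $\int_0^H u_T'(w) R'(w)\,dw = -\mathrm{Rev}(T)$. Exploiting the piecewise-constant structure of $u_T'$ (with the convention that $w^T_{n_T+1}=H$, absorbing any indifference points beyond $H$, which contribute nothing since $R\equiv 0$ there),
\begin{equation*}
\int_0^H u_T'(w) R'(w)\,dw = \sum_{i=0}^{n_T} v^T_i\bigl[R(w^T_{i+1}) - R(w^T_i)\bigr].
\end{equation*}
Applying Abel summation and using $R(w^T_0)=R(0)=0$ and $R(w^T_{n_T+1})=R(H)=0$, this telescopes to $-\sum_{i=1}^{n_T}(v^T_i - v^T_{i-1}) R(w^T_i)$, which equals $-\mathrm{Rev}(T)$ by Lemma~\ref{lem:revenue expression}. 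Substituting back gives $\mathrm{Rev}(T) = \int_0^\infty u_T R''\,dw - R'(H) u_T(H)$, as required.

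The main obstacle is really only bookkeeping: being careful at the endpoint $w=H$, where $R(H)=0$ but $R'(H)$ need not vanish (this is precisely the origin of the boundary term $-R'(H) u_T(H)$), and handling indifference points $w^T_i$ that may lie above $H$ (which is harmless since $R$ vanishes there, so they can be collapsed to $H$ without affecting the telescoping).
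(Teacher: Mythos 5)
Your proof is correct and takes essentially the same approach as the paper: both rely on Lemma~\ref{lem:revenue expression}, one integration by parts, and an Abel summation exploiting the piecewise structure of $u_T$. You simply run the chain in the reverse direction (starting from the integral and reducing to the sum rather than building up to it), while also being slightly more explicit about the boundary facts $R(0)=R(H)=0$ and the harmless collapsing of indifference points beyond $H$.
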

\begin{proof}
By Lemma~\ref{lem:revenue expression}, we can write the total revenue for assortment $T$ as
\begin{align*}
	Rev(T) &= \sum_{i=1}^{n_T} (v^T_i-v^T_{i-1}) R\left( w^T_i \right)
	&\text{ where } w^T_i = \frac {p^T_i-p^T_{i-1}} {v^T_i-v^T_{i-1}}\\
	&= \sum_{i=1}^{n_T} v^T_i ( R\left( w^T_i \right)  - R\left( w^T_{i+1} \right) )
	&\text{ where $w^T_{1}=0$ and $w^T_{n_T+1}=H$} \\
	&= -\sum_{i=1}^{n_T} v^T_i \int_{w^T_i}^{w^T_{i+1}} R'\left( w\right) dw \\
	&= - \int_{0}^{H} u'_T(w) R'\left( w\right) dw
	&\text{ where } u_T(w) = \max_{S \subseteq T} v(S) \cdot w - \sum_{i\in S} p_i \\
	&= \int_{0}^{H} u_T(w) R''\left( w\right) dw - R'(H) u_T(H)
	&\text{ by integration by parts since } u(0) =0.
\end{align*}
\end{proof}

\subsection{Proof of Theorem~\ref{thm:convex}}\label{sec:convex-proof}

We first show that under the conditions of Theorem~\ref{thm:convex}, the revenue function of the distribution $R(w)=w(1-F(w))$ can be well
approximated by a convex function $\hat R$. The core lemma behind the proof is thus the following:

\begin{lemma}\label{lem:convex 4 approx}
Consider a distribution $F$ and denote by $R$ its revenue function $R(w)=w(1-F(w))$. Suppose that for some $r \ge 0$:
\begin{itemize}
\item the revenue function $R$ is non-increasing for $x \ge r$, and
\item the density $f$ of $F$ is non-increasing for $x \ge r$, i.e. $F$ is concave.
\end{itemize}
Then, there exists a convex decreasing function $\hat R$ such that for all $x \ge r$, $\hat R(w) \le R(w) \le 4 \hat R(w)$.
\end{lemma}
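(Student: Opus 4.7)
My plan is to take $\hat R$ to be the lower convex envelope of $R$ on $[r,\infty)$, i.e.\ the pointwise supremum over all convex functions dominated by $R$ on that interval. This $\hat R$ is convex by construction, non-negative (the zero function is a valid candidate), non-increasing (inherited from the monotonicity of $R$), and satisfies $\hat R \le R$ automatically. The substantive work is to verify the other direction $R(w) \le 4 \hat R(w)$ for every $w \ge r$.

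I will establish the upper bound pointwise, by exhibiting for each $w$ a convex function $g_w$ on $[r,\infty)$ with $g_w \le R$ and $g_w(w) \ge R(w)/4$; the extremal property of the convex envelope then gives $\hat R(w) \ge g_w(w) \ge R(w)/4$. The main tool is the observation that the hypothesis ``$f$ non-increasing on $[r,\infty)$'' makes $1-F$ convex and decreasing there, so for any $c \ge 0$ the function $t \mapsto c(1-F(t))$ is convex and decreasing, and is bounded above by $R(t) = t(1-F(t))$ on $[r,\infty)$ precisely when $c \le r$. For the easy regime $w \in [r, 4r]$, I take $g_w(t) := \min(w/2, r)\cdot(1-F(t))$: this is a valid convex function $\le R$ on $[r,\infty)$, and $g_w(w) = \min(1/2,\, r/w) \cdot R(w) \ge R(w)/4$.

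The principal obstacle is the regime $w > 4r$, where the constant-scale candidate only attains $g_w(w) = rR(w)/w < R(w)/4$. Here I plan to use a piecewise-linear chord from $(r, R(r))$ down to a point $(t^*, 0)$ with $t^* \ge w$ chosen near the top of the type support, together with the value $0$ beyond $t^*$. The key ingredient for verifying $g_w \le R$ is the doubling inequality $1 - F(w/2) \ge \tfrac{3}{2}(1-F(w))$, which holds on $[r,\infty)$ by combining $\int_{w/2}^{w} f(t)\,dt \ge (w/2) f(w)$ (from $f$ non-increasing) with $wf(w) \ge 1-F(w)$ (from $R$ non-increasing); this yields $R(w) \le \tfrac{4}{3} R(w/2)$ and, iteratively, geometric decay of $R$ along dyadic scales, which is exactly what allows a chord to reach $R(w)/4$ at $w$ while remaining below $R$ on $[r, t^*]$. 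The hardest technical step will be verifying the chord-domination rigorously when $R$ is not concave throughout $[r, t^*]$; my fallback is to split $[r, t^*]$ at the first inflection of $R$, use the chord on the concave prefix, and continue with a supporting tangent line along the convex tail, gluing the two pieces convexly by matching slopes at the inflection. Taking the pointwise maximum of the $g_w$ across all $w$ (which is convex as a supremum of convex functions) then recovers the desired $\hat R$.
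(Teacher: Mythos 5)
Your general strategy --- exhibit, for each $w \geq r$, a convex function $g_w \leq R$ on $[r,\infty)$ with $g_w(w) \geq R(w)/4$, then take $\hat R$ to be the lower convex envelope --- is sound in principle, and it is the same object the paper ultimately works with. The witness for the ``easy'' regime $w \in [r,4r]$ is correct and a nice observation: since $f$ is non-increasing on $[r,\infty)$, $1-F$ is convex there, so $g_w(t) = \min(w/2,r)\cdot(1-F(t))$ is convex, is $\leq t(1-F(t)) = R(t)$ because $\min(w/2,r) \leq r \leq t$, and attains at $w$ either $R(w)/2$ (when $w\leq 2r$) or $(r/w)R(w) \geq R(w)/4$ (when $2r \leq w \leq 4r$). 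The paper does not split the range this way; it instead proves a single ``$4$-approximate convexity'' inequality, $R(\alpha x + (1-\alpha)y) \leq 4\alpha R(x) + 4(1-\alpha)R(y)$ for all $x,y\geq r$ and $\alpha\in[0,1]$, and then invokes the two-point (Carath\'eodory) representation of the one-dimensional lower convex envelope, $\hat R(z) = \inf_{\alpha x + (1-\alpha)y = z}\alpha R(x)+(1-\alpha)R(y)$, to conclude $\hat R \geq R/4$ uniformly. That approach does not commit to any particular form of witness, which is what makes it go through for all $w$ at once.

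The gap in your proposal is the regime $w > 4r$, and I do not think the ingredients you name can close it. The ``doubling inequality'' $1-F(w/2) \geq \tfrac{3}{2}(1-F(w))$ is correct as derived, but it is strictly weaker than what $R$ non-increasing already gives you for free: $R(w/2)\geq R(w)$ means $(w/2)(1-F(w/2)) \geq w(1-F(w))$, i.e.\ $1-F(w/2) \geq 2(1-F(w))$. So the doubling inequality cannot be the ``key ingredient'' --- it carries no information beyond monotonicity of $R$ and in particular does not constrain how $R$ sits relative to a chord from $(r,R(r))$ to $(t^*,0)$. That chord-domination claim is the actual crux: $R$ need not be concave on $[r,t^*]$ (one can check $R'' = -2f - wf'$, which can change sign when $f$ is non-increasing), so the single chord can be violated, and the ``split at the first inflection, glue a chord to a supporting tangent'' fallback is not a proof --- it presupposes a single inflection, does not verify the glued slopes are non-decreasing, and most importantly does not verify that the resulting function reaches $R(w)/4$ at $w$. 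To repair the argument you would essentially need to establish some quantitative statement controlling $R$ against its chords; but that is exactly the content of the paper's $4$-approximate convexity inequality, whose proof genuinely uses concavity of $F$ in a secant-slope comparison that your proposal does not replicate. As written, the proposal is incomplete for $w > 4r$.
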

\begin{proof}
We will show that for all $x,y\ge r$ and all $\alpha \in [0,1]$, $R(\alpha x + (1-\alpha) y) \le 4 \alpha R(x) + 4 (1-\alpha) R(y)$.
This suffices to complete the proof, as we can define $\hat R$ as the lower envelope of the function $R$, i.e.
$$\hat R(z) \triangleq \inf_{\substack{x,y\ge r, \alpha \in [0,1]\\ \alpha x + (1-\alpha) y = z}}  \alpha R(x) + (1-\alpha) R(y) \ge R(z) / 4$$
The function $\hat R$ is convex as it is the lower envelope of $R$ and satisfies $\hat R(x) \le R(x) \le 4 \hat R(x)$..

We now show that for all $ x,y \ge r$ and $\alpha \in [0,1]$ with $x \le z = \alpha x + (1-\alpha) y \le y$, it holds that $R(z) \le 4 \alpha R(x) + 4 (1-\alpha) R(y)$. The inequality holds if $\alpha \ge 1/2$ since by monotonicity of the revenue function $R(z) \le R(x) \le 4 \alpha R(x)$.
Furthermore, if $\alpha < 1/2$, we have that
$$
\begin{aligned}
4 \alpha R(x) & \ge 4 \alpha R\left( \frac {x+y} 2 \right) = 2 \alpha (x+y) \left( 1 - F\left( \frac {x+y} 2 \right) \right)& \\
&\ge 2 \alpha (x+y) \left( F(y) - F\left( \frac {x+y} 2 \right) \right) &\text{as $F(y) \le 1$}\\
&\ge 2 \alpha (x+y) \frac {F(y) - F(z)} {y-z} \frac {y-x} 2  &\text{as $F$ is concave} \\
&= (x+y) (F(y) - F(z))  &\text{as $\alpha =  \frac {y-z} {y-x}$} \\
&\ge z (F(y) - F(z)) = R(z) - z (1-F(y)) \\
&\ge  R(z) - 2 (1-\alpha) y (1-F(y)) = R(z) - 2 (1-\alpha) R(y) &\text{as $\alpha < 1/2$}
\end{aligned}
$$
Thus, for all  $\alpha \in [0,1]$, $R(z) \le 4 \alpha R(x) + 4 (1-\alpha) R(y)$.
\end{proof}

The function $\bar R$ of Lemma~\ref{lem:convex 4 approx} allows us to compute an approximation
to the optimal revenue, $\overline {Rev}(T)$. We have that,
$$\overline {Rev}(T) =
\sum_{i=1}^{n_T} (v^T_i-v^T_{i-1}) \bar R\left( w^T_i \right)
\le
\sum_{i=1}^{n_T} (v^T_i-v^T_{i-1}) R\left( w^T_i \right)
\le
4 \sum_{i=1}^{n_T} (v^T_i-v^T_{i-1}) \bar R\left( w^T_i \right)
 = 4 \overline {Rev}(T).$$

If we compute the optimal assortment with respect to function $\bar R$, i.e. $T^* = \arg\max_{T} \overline {Rev}(T)$, we will that
guarantee at least $1/4$ of the revenue of the optimal assortment for the original distribution. This is because,
$OPT = \max_{T} {Rev}(T) \le \max_{T} \overline {Rev}(T) = 4 \overline {Rev}(T^*) \le 4 {Rev}(T^*)$.

To complete the proof, we now show that the optimal assortment for function $\bar R$ is $T^* = N$.

Applying Lemma~\ref{lem:revenue expression integral} to $\bar R$ and noting that for all $T$, $u_T(w)=0$ for $w \le r$ we get that the total revenue can be written as
$$\overline{Rev}(T) =
\int_{r}^{H} u_T(w) \bar R''\left( w\right) dw - \bar R'\left( H\right) u_T(H).$$

Since $\bar R(w)$ is convex for $w \ge r$, so $\bar R''\left( w\right) \ge 0$ and since it is decreasing we have that $- \bar R'(w) \ge 0$.
Thus, in order to maximize total revenue with respect to $\bar R$ we need to maximize utility.
The assortment $T=N$ maximizes utility pointwise and guarantees the maximum total revenue with respect to $\bar R$. This implies that $T^* = N$.

\subsection{Proof of Theorem~\ref{thm:convex.constrained.gross}}\label{subsec:convex.constrained.gross}

Similar to the proof of Theorem~\ref{thm:convex}, we consider the convex approximation $\bar R$ to the function $R$ given by Lemma~\ref{lem:convex 4 approx}.
We have that its corresponding total revenue function $\overline{Rev}$ satisfies
$\overline{Rev}(T) \le {Rev}(T) \le 4\overline{Rev}(T)$ for any assortment $T$.

Thus, if we obtain an $\alpha$-approximation to the problem of maximizing $\overline{Rev}(T)$, we will obtain a $(\alpha/4)$ approximation to ${Rev}(T)$ as well.
That is, if we find a feasible assortment $T^*$, such that
$\overline{Rev}(T^*) \ge \alpha \max_{|T| \le C} \overline{Rev}(T)$, then $T^*$ satisfies:
$${Rev}(T^*)\ge\overline{Rev}(T^*) \ge \alpha \max_{|T| \le C} \overline{Rev}(T) \ge \frac{\alpha}4 \max_{T} {Rev}(T).$$
We will thus show how to efficiently approximately maximize $\overline{Rev}(T)$ when the valuation function $v(S)$ satisfies the gross-substitutes condition.

Our approach is to show that the function $\overline{Rev}(T)$ is submodular and use the fact that a simple
greedy strategy achieves a $\alpha = 1-1/e$ approximation for the problem of submodular maximization under capacity constraints.
The greedy strategy starts with the empty set and iteratively adds the item that maximizes the total revenue until the capacity
constraint is reached, i.e. if the current assortment is $T$ with $|T|<C$ the item that will be added is $\arg\max_{i \not\in T}
\overline{Rev}(T\cup\{i\})$.

To show that the function $\overline{Rev}(T)$ is submodular, we use Lemma~\ref{lem:revenue expression integral} to get that the total revenue is
$\int_{r}^{H} u_T(w) \bar R''\left( w\right) dw - \bar R'\left( H\right) u_T(H).$
As argued in Section~\ref{sec:convex-proof}, $\overline{Rev}(T)$ is a positive combination of the functions $ u_T(w) $.
In the next lemma we show that for any $w$ the function $u_T(w)$ is submodular, which implies that the function $\overline{Rev}(T)$ is submodular as well.

\begin{lemma}\label{lem:gs-submodular}
Let $v:2^N\rightarrow \mathbb{R}_{\ge0}$ be a monotone function satisfying gross-substitutes. Then, for any price vector $p$, the function
$f(T) = \max_{S\subseteq T} v(S) - \sum_{i \in S} p_i$
is submodular.
\end{lemma}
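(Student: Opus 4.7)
\medskip
\noindent\textbf{Proof plan.} I would prove submodularity of $f$ directly from the M\#-exchange characterization of gross substitutes, as stated in the valuation classes definition. Concretely, for $A\subseteq B$ and $j\notin B$, I will show
\[
f(A\cup\{j\}) - f(A) \;\ge\; f(B\cup\{j\}) - f(B),
\]
or equivalently $f(A) + f(B\cup\{j\}) \le f(A\cup\{j\}) + f(B)$. Write $g(S)=v(S)-\sum_{i\in S}p_i$, let $S_A\subseteq A$ be a maximizer attaining $f(A)=g(S_A)$, and let $S_{Bj}\subseteq B\cup\{j\}$ be a maximizer attaining $f(B\cup\{j\})=g(S_{Bj})$.

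If $j\notin S_{Bj}$, then $S_{Bj}\subseteq B$ so $f(B)\ge g(S_{Bj})=f(B\cup\{j\})$, which combined with monotonicity of $f$ gives $f(B\cup\{j\})-f(B)=0$. Monotonicity of $f$ also gives $f(A\cup\{j\})-f(A)\ge 0$, so the desired inequality is immediate. The main case is $j\in S_{Bj}$, where I will apply the M\#-exchange property to $S=S_{Bj}$, $T=S_A$, and $x=j\in S$. This yields either (1) $v(S_{Bj}) + v(S_A) \le v(S_{Bj}\setminus\{j\}) + v(S_A\cup\{j\})$, or (2) there exists $y\in S_A$ with $v(S_{Bj}) + v(S_A) \le v(S_{Bj}\setminus\{j\}\cup\{y\}) + v(S_A\setminus\{y\}\cup\{j\})$.

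In case (1), subtracting $p(S_{Bj})+p(S_A)=p(S_{Bj}\setminus\{j\})+p(S_A\cup\{j\})$ from both sides (valid since $j\notin S_A$) gives $g(S_{Bj})+g(S_A) \le g(S_{Bj}\setminus\{j\})+g(S_A\cup\{j\})$. Since $S_{Bj}\setminus\{j\}\subseteq B$ and $S_A\cup\{j\}\subseteq A\cup\{j\}$, the right-hand side is at most $f(B)+f(A\cup\{j\})$, which is what we want. In case (2), note that $y\in S_A\subseteq A\subseteq B$ and $y\ne j$, so $S_{Bj}\setminus\{j\}\cup\{y\}\subseteq B$ and $S_A\setminus\{y\}\cup\{j\}\subseteq A\cup\{j\}$. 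The only subtlety is translating the $v$-inequality into a $g$-inequality: a short calculation shows that the price of the two swapped sets satisfies $p(S_{Bj}\setminus\{j\}\cup\{y\})+p(S_A\setminus\{y\}\cup\{j\}) \le p(S_{Bj})+p(S_A)$, with equality when $y\notin S_{Bj}$ and a deficit of $p_y\ge 0$ when $y\in S_{Bj}$ (because then $y$ is already present in $S_{Bj}\setminus\{j\}$ and the set-union does not add anything). Thus the $g$-inequality is preserved (and may only become stronger), giving $f(B\cup\{j\})+f(A)=g(S_{Bj})+g(S_A)\le f(B)+f(A\cup\{j\})$, as desired.

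The main thing to get right is this price-bookkeeping step in case (2): the M\#-exchange is a statement about $v$, and one has to check carefully that the induced inequality on $g=v-p(\cdot)$ goes the correct way regardless of whether $y$ was already in $S_{Bj}$. Everything else is a clean reduction to the two cases of the exchange property, using that $A\subseteq B$ to ensure $y\in A$ whenever $y\in S_A$.
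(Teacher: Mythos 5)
Your proof is correct and follows essentially the same route as the paper's: apply the M$^\sharp$-exchange property to the two maximizer sets in order to transfer $j$ across the two optimization problems, then split into the two exchange cases (direct removal/insertion, or swap with some $y$). You are if anything more careful than the paper in case 2, where you explicitly verify that the price accounting yields at worst a slack of $p_y \ge 0$ when $y$ already lies in $S_{Bj}$ --- a detail the paper dispatches with a single ``similarly.''
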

\begin{proof}
For ease of notation, we denote by the addition $S+i$ the union $S\cup\{i\}$ and $S-i$ the set difference $S\setminus\{i\}$.
To show submodularity it suffices that for all sets $T \subseteq N$ and $i,j \not\in T$ it holds that:
$f(T+i)+f(T+j) \ge f(T+i+j)+f(T)$.

To prove this, let $A \subseteq T+i+j$ and $B \subseteq T$ be the maximizers of $f(T+i+j)$ and $f(T)$ respectively.
If $j \not \in A$, then the sets $A$ and $B$ are feasible for the maximization problems $f(T+i)$ and $f(T+j)$ respectively.
This implies that $f(T+i)+f(T+j) \ge v(A) - \sum_{i \in A} p_i + v(B) - \sum_{i \in B} p_i =  f(T+i+j)+f(T)$.

Now suppose that $j \in A$. Since $v$ satisfies gross-substitutes, the M$^\sharp$-exchange property implies that either:
\begin{enumerate}
\item $v(A)+v(B)\le v(A-j) + v(B+j)$, or
\item there exists element $k \in B$ such that $v(A)+v(B)\le v(A-j+k) + v(B+j-k)$.
\end{enumerate}

In case 1, we have that $A-j$ and $B+j$ are feasible for the maximization problems $f(T+i)$ and $f(T+j)$ respectively which gives $f(T+i)+f(T+j) \ge v(A-j) - \sum_{i \in A-j} p_i + v(B+j) - \sum_{i \in B+j} p_i \le  f(T+i+j)+f(T)$.  Similarly, in case 2, we have that $A-j+k$ and $B+j-k$ are feasible for the maximization problems $f(T+i)$ and $f(T+j)$ respectively.

\end{proof}

The greedy strategy requires being able to evaluate the revenue of an assortment $S$, $\overline{Rev}(S)$, with respect to the approximate curve $\overline{R}$. This might be hard to do exactly but can be done efficiently given query access to the cumulative distribution $F$ after some preprocessing.

To derive the runtime of $\tilde O(\ell^3 n)$, we make the standard assumption that valuations and prices can be written using $B=polylog(n)$ bits so that operations can be performed efficiently. Note that the whole analysis goes through if numbers have $poly(n)$ bits resulting in polynomial runtime that depends on the exact polynomial of the bit representation.

\paragraph{Preprocessing} Let $w_{min} = \min_i \frac { p_{i} } { v(\{i\}) }$. We define the $\varepsilon$-rounded revenue curve $R^*(w)$ to be the smallest value $R(w_{min})/(1+\varepsilon)^i$ that is greater or equal to $R(w)$
for some integer $i \in [0,B/\varepsilon^2]$.

Notice that by Lemma~\ref{lem:revenue expression} we get that
$$\sum_{i=1}^{n_T} (v^T_i-v^T_{i-1}) R\left( w^T_i \right) \le Rev^*(T) \le (1+\varepsilon) \sum_{i=1}^{n_T} (v^T_i-v^T_{i-1}) R\left( w^T_i \right) + v^T_{n_T} R(w_{min})/(1+\varepsilon)^{B/\varepsilon^2}$$

Thus, $Rev(T) \le Rev^*(T) \le (1+\varepsilon) Rev(T) + \varepsilon R(w_{min})$.

Since for the optimal assortment $OPT$ we have that $OPT \ge R(w_{min})$, we get that an $\alpha$-approximate assortment for $Rev^*(\cdot)$ yields a $\alpha/(1+2\varepsilon)$ approximation to $OPT$. Therefore, from now on will assume that the underlying revenue curve is $R^*(w)$ instead of the original $R$. A query to $R^*(w)$ for a given $w$ can be computed by querying the distribution $F$ at point $w$ and the calculating $w(1-F(w))$ rounded at the appropriate multiple of $R(w_{min})/(1+\varepsilon)^i$ as described above. This preprocessing allows us to consider a revenue function $R^*$ that is piecewise constant with at most $B/\varepsilon^2$ pieces.

\paragraph{Convex Approximation} We now proceed to compute a convex curve $\bar R$ that approximates the revenue curve $R^*$. We do this by computing its convex lower envelope. To do this, we need to find the beginning and end of every segment of $R^*$. This can be done by binary search. Notice, that even though segments may start and end at arbitrary real numbers, it is easy to see from Lemma~\ref{lem:revenue expression} that only the values at points $w^T_i$ affect the revenue and those are all fractions of the form $a/b$ where both $a$ and $b$ are $B$ bit integers. There are at most $2^{2 B}$ such values so the binary search takes $O(B)$ time. 

Given the partition to $O(B/\varepsilon^2)$ intervals of the revenue curve function, we can compute the convex lower envelope in $O(B/\varepsilon^2)$ time. 
Lemma~\ref{lem:convex 4 approx} implies that at every point $w \ge w_{min}$, $\hat R(w) \le R^*(w) \le 4 \hat R(w)$.

\paragraph{Greedy Strategy} We now perform the greedy strategy which starts with the empty set $T = \emptyset$ and iteratively adds the item that maximizes the total revenue (with respect to function $\bar R$, i.e. $\arg\max_{i \not\in T}
\overline{Rev}(T\cup\{i\})$) until the capacity
constraint is reached. This achieves a 
$(1-1/e)$ approximation to $\overline{OPT} = \arg\max_{|S|\le k} \overline{Rev}(S)$, which implies a 
$(1-1/e)/4$ to ${OPT}^* = \arg\max_{|S|\le k} {Rev}^*(S)$, which in turn implies a $(1-1/e)/(4+8\varepsilon)$ approximation to $OPT$. The greedy strategy requires $O(n)$ evaluations of assortment revenue $\overline{Rev}$ to compute the best item to add to the assortment. Thus for a total of $\ell$ such steps, the total number of evaluations is $O(\ell n)$. We next show that each such revenue evaluation can be performed in time $O(\ell^2 B^2/\varepsilon^2)$. Choosing $\varepsilon = \Theta(1)$ so that $(1-1/e)/(4+8\varepsilon) = 1/6.33$ and noting that $B=polylog (n)$ we get that the total runtime is $\tilde O(\ell^3 n)$ resulting in a $1/6.33$ approximation to the optimal revenue.

\paragraph{Revenue Evaluation} We use Lemma~\ref{lem:revenue expression} to evaluate the total revenue. We have that
\begin{align*}
	&\overline{Rev}(T) = \sum_{i=1}^{n_T} (v^T_i-v^T_{i-1}) \bar R\left( w^T_i \right)
  \\
	&= \sum_{i=1}^{n_T-1} v^T_i ( \bar R\left( w^T_i \right)  - \bar R\left( w^T_{i+1} \right) ) + v^T_{n_T}  \bar R\left( w^T_{n_T} \right)
  \\
	&= \sum_{i=1}^{n_T-1} (u^T_{i+1} - u^T_{i}) \frac {\bar R\left( w^T_i \right)  - \bar R\left( w^T_{i+1} \right)}{w^T_{i+1}-w^T_i} + v^T_{n_T}  \bar R\left( w^T_{n_T} \right)
  \\
	&= \sum_{i=2}^{n_T-1} u^T_{i} \left( \frac {\bar R\left( w^T_{i-1} \right)  - \bar R\left( w^T_{i} \right)}{w^T_{i}-w^T_{i-1}} - \frac {\bar R\left( w^T_i \right)  - \bar R\left( w^T_{i+1} \right)}{w^T_{i+1}-w^T_i}\right) + 
  u^T_{n_T} \frac {\bar R\left( w^T_{n_T-1} \right)  - \bar R\left( w^T_{n_T} \right)}{w^T_{n_T}-w^T_{n_T-1}} + v^T_{n_T}  \bar R\left( w^T_{n_T} \right)
\end{align*}

Notice that the term $u^T_{i} \left( \frac {\bar R\left( w^T_{i-1} \right)  - \bar R\left( w^T_{i} \right)}{w^T_{i}-w^T_{i-1}} - \frac {\bar R\left( w^T_i \right)  - \bar R\left( w^T_{i+1} \right)}{w^T_{i+1}-w^T_i}\right)$ is 0 whenever $w^T_{i-1}$,$w^T_i$ and $w^T_{i+1}$ lie in the same segment of $R^*$, i.e. $R^*(w^T_{i-1})= R^*(w^T_i) = R^*(w^T_{i+1})$, since at these points the function $\bar R$ is linear. Therefore, even though there may be exponentially many terms in the summation, the only non-zero terms appear whenever $w^T_i$ is either the first point or the last point in the region $R^*(w) = R^*(w^T_i)$. Those points $w^T_i$ and their corresponding utilities and values can be identified by binary searching starting from the endpoint of some region to find the point where the subset that the buyer buys changes. This binary search requires $O(B)$ utility evaluations. As it is performed at most $O(B/\varepsilon^2)$ times, one for each segment of $R^*$, and utility evaluation at a given point $w$ (i.e. a demand query) for gross-substitute valuations over $\ell$ items can be performed in $O(\ell^2)$ time (see \cite{Leme} for details), the total runtime for revenue evaluation of a given assortment is $O(\ell^2 B^2/\varepsilon^2 )$.

\subsection{Welfare Maximization}
\label{app:welfare}

We will prove the following variants of our revenue results for well-priced items, applied to the objective of welfare maximization.

\begin{theorem}\label{thm:convex-welfare}
When items are well-priced, $S=N$ is a welfare-optimal unconstrained assortment.
\end{theorem}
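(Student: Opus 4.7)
The plan is to follow the same high-level strategy as the proof of Theorem~\ref{thm:convex}, but exploiting the fact that for welfare we can hope for exact (rather than approximate) optimality of $T = N$. The key step is to obtain an integral expression for welfare analogous to Lemma~\ref{lem:revenue expression integral} in which the coefficient on $u_T(w)$ is pointwise non-negative on the support of $u_T$, so that no convexification loss has to be paid.

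Concretely, I would first write $W(T) = \E[u_T(w)] + Rev(T)$, decomposing welfare into consumer surplus and revenue. Plugging in the revenue expression from Lemma~\ref{lem:revenue expression integral} and using $\E[u_T(w)] = \int_0^H u_T(w) f(w)\, dw$, this rearranges to
$$W(T) = \int_0^H u_T(w)\, (f(w) + R''(w))\, dw - R'(H)\, u_T(H).$$
A direct calculation from $R(w) = w(1-F(w))$ gives $f(w) + R''(w) = -(w f(w))'$ and $R'(H) = -H f(H)$ (assuming $F(H)=1$), so the expression becomes
$$W(T) = -\int_0^H u_T(w)\, (w f(w))'\, dw + H f(H)\, u_T(H).$$

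Next, I would invoke well-pricedness in two ways. The bundle-level bound $p(S) \ge r v(S)$ guarantees that no buyer with $w < r$ derives positive utility from any subset $S$, so $u_T(w) = 0$ throughout $[0,r)$ and the integral can be restricted to $[r,H]$. On $[r,H]$ I would then combine the assumption that $f$ is non-increasing after $r$ with the fact that the revenue curve $R$ is non-increasing after $r$ to show $(w f(w))' \le 0$ on $[r,H]$, making the coefficient on $u_T(w)$ non-negative throughout. Since the boundary term $H f(H)\, u_T(H)$ is automatically non-negative and $u_T(w) = \max_{S \subseteq T}(w v(S) - p(S))$ is pointwise non-decreasing in $T$, the entire expression is maximized when $u_T$ is pointwise maximized, which is achieved at $T = N$.

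The main obstacle is verifying that $w f(w)$ is non-increasing on $[r,H]$, which is strictly stronger than monotonicity of $f$ alone and is the welfare counterpart of the approximate convexification step in Theorem~\ref{thm:convex}. I expect this to follow, without any factor-of-$4$ loss, from the joint effect of well-pricedness and the decay conditions on $f$ and $R$ after $r$. A cleaner way to present the argument may be via the auxiliary ``surplus curve'' $K(w) = R(w) + \int_w^H (1-F(t))\, dt$, which satisfies $K'(w) = -w f(w)$ and $W(T) = \sum_i (v^T_i - v^T_{i-1}) K(w^T_i)$; under this reformulation the technical step amounts to establishing that $K$ is convex on $[r,H]$, and nailing this convexity precisely will be the delicate part of the proof.
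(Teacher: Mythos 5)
Your proposal is built on a different welfare objective than the one the paper actually uses, and under the objective you adopt the theorem is false, so the technical gap you flag cannot be closed. The decomposition $W(T) = \E[u_T(w)] + Rev(T)$ gives the total surplus $\E[w \cdot v(S(w))]$, which is the standard economic notion of welfare. But the paper's welfare objective is $\E[v(S(w))]$ — the expected common-value of the purchased bundle, without the type multiplier $w$ — as stated in Remark~\ref{rm:dp} (``in the case of welfare $g(S)=v(S)$'') and as used in the appendix identity $Wel(T) = \sum_i v^T_i \int_{w^T_i}^{w^T_{i+1}} f(w)\,dw$. Under that definition the proof is immediate: $Wel(T) = \int_0^H u_T'(w) f(w)\,dw = f(H) u_T(H) - \int_r^H u_T(w) f'(w)\,dw$ after integration by parts (using $u_T=0$ on $[0,r)$ from well-pricedness), and the coefficient $-f'(w)$ is non-negative on $[r,H]$ directly because $f$ is non-increasing there. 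No convexification of a surplus curve is needed.

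Your worry about whether $w f(w)$ is non-increasing on $[r,H]$ is well-founded: it does not follow from well-pricedness, and in fact fails in the most basic case. Take $F$ uniform on $[0,1]$, so $r=1/2$, $f$ is constant (hence non-increasing after $r$), and $R$ is non-increasing after $r$; yet $w f(w) = w$ is strictly increasing. Worse, the theorem you are trying to prove is false for $\E[w\,v(S)]$. Consider two unit-demand items with $v_1 = 1,\ p_1 = 0.55$ and $v_2 = 0.1,\ p_2 = 0.05$, both well-priced for uniform on $[0,1]$. For $T=\{1\}$ the buyer purchases item $1$ when $w>0.55$, so $\E[w\,v(S)] = \int_{0.55}^1 w\,dw = 279/800 \approx 0.34875$. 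For $T=\{1,2\}$ the buyer purchases item $2$ on $(1/2, 5/9)$ and item $1$ on $(5/9,1]$, giving $\E[w\,v(S)] = \int_{1/2}^{5/9} 0.1w\,dw + \int_{5/9}^1 w\,dw = 2259/6480 \approx 0.34861 < 279/800$. So showing all items strictly decreases this version of welfare; the route you propose is not salvageable, and the correct fix is to prove the statement for the paper's $\E[v(S)]$ objective, where the simpler integration-by-parts argument closes the proof.
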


\begin{theorem}\label{thm:convex.constrained.gross-welfare}
For gross substitutes valuations and well-priced items, a $1.6$-approximation to the welfare-optimal assortment of size at most $\ell$ can be computed in time $\tilde O(\ell^3 n)$.
\end{theorem}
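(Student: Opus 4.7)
The plan is to follow the template of Theorem~\ref{thm:convex.constrained.gross}, but to exploit the fact that for welfare the factor-$4$ loss from Lemma~\ref{lem:convex 4 approx} is unnecessary. This yields the sharper $e/(e-1) < 1.6$ constant directly from greedy submodular maximization.

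The first step is to derive an expression for welfare analogous to Lemma~\ref{lem:revenue expression}. An Abel-summation computation starting from $W(T) = \sum_i v_i^T \int_{w_i^T}^{w_{i+1}^T} s\,dF(s)$ gives
\[
  W(T) \;=\; \sum_{i=1}^{n_T} \left(v_i^T - v_{i-1}^T\right) \Psi\!\left(w_i^T\right),
\]
where $\Psi(w) = \int_w^\infty s\,dF(s) = \E_{s \sim F}[s \cdot \mathbf{1}[s > w]]$ plays the role that the revenue curve $R$ plays for revenue. Integration by parts, exactly as in the proof of Lemma~\ref{lem:revenue expression integral}, gives the equivalent form
\[
  W(T) \;=\; \int_0^H u_T(w)\,\Psi''(w)\,dw \;-\; \Psi'(H)\,u_T(H),
\]
and since $-\Psi'(H) = H f(H) \ge 0$, the only possible obstruction to $W$ being a non-negative combination of $\{u_T(w)\}_w$ is the sign of $\Psi''$.

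Under the well-priced regularity hypotheses, I will show (with a rounding preprocessing of $F$ analogous to Appendix~\ref{subsec:convex.constrained.gross}) that $\Psi$ can be replaced by a convex lower envelope $\widetilde\Psi$ with $\Psi / c \le \widetilde\Psi \le \Psi$ for a constant $c$ arbitrarily close to $1$; this is the welfare analog of Lemma~\ref{lem:convex 4 approx}, but with substantially smaller loss because $\Psi$ is smoother than $R$ on $[r,H]$, and because buyers with $w < r$ purchase nothing under well-pricing and so do not affect the integral. Replacing $\Psi$ by $\widetilde\Psi$ makes $\widetilde\Psi'' \ge 0$ and expresses the corresponding $\widetilde W(T)$ as a non-negative combination of the utility functions $u_T(w)$ together with a positive multiple of the boundary value $u_T(H)$. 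By Lemma~\ref{lem:gs-submodular}, each $u_T(w)$ is monotone submodular in $T$ for gross-substitutes $v$, so $\widetilde W$ is itself monotone submodular.

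Applying the standard greedy algorithm for monotone submodular maximization under the cardinality constraint $|T|\le\ell$ then produces $T^*$ with $\widetilde W(T^*) \ge (1-1/e) \max_{|T|\le\ell} \widetilde W(T)$, and the sandwich bound $\widetilde W \le W \le c \widetilde W$ with $c$ close to $1$ upgrades this to a $1.6$-approximation of $W$. The runtime analysis is inherited line-by-line from Appendix~\ref{subsec:convex.constrained.gross}: greedy performs $O(\ell n)$ welfare evaluations, each reducing to polynomially many demand queries on the GS valuation and computable in $\tilde O(\ell^2)$ time after a $\tilde O(n)$-time preprocessing of $F$, for a total of $\tilde O(\ell^3 n)$. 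The main obstacle will be the convexification step: showing that the regularity plus non-increasing-density assumption makes $\Psi$ close enough to convex on $[r,H]$ that only a multiplicative constant close to $1$ is lost, rather than the factor $4$ that Lemma~\ref{lem:convex 4 approx} pays for $R$; this requires tracking carefully how the extra integration defining $\Psi$ from $f$ smooths the curve.
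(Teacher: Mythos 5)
Your plan has a genuine gap, and it stems from a definitional mismatch plus an unjustified convexity hope.

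You take welfare to be $\E_w[w\cdot v(S_w)]$, which leads to the curve $\Psi(w)=\int_w^\infty s\,dF(s)$ with $\Psi''(w)=-f(w)-wf'(w)$. Under the well-priced hypotheses you only know $f'(w)\le 0$ for $w\ge r$, which does \emph{not} make $\Psi''\ge 0$: for a uniform type distribution $f$ is constant, so $\Psi''(w)=-f(w)<0$ and $\Psi$ is a strictly concave (downward) parabola on $[r,H]$. So $\Psi$ is not ``smoother than $R$'' in the sense you need, and its convex lower envelope can be far below it. Worse, the budget is essentially zero: greedy already costs $e/(e-1)\approx 1.582$, so any sandwich $\Psi/c\le\widetilde\Psi\le\Psi$ must have $c\le 1.6\cdot(1-1/e)\approx 1.011$. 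Nothing in the assumptions forces the convexification loss to be that small (for the concave case above it is bounded away from $1$). So the ``main obstacle'' you flag cannot be cleared as stated.

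The paper's proof avoids this entirely because it uses a different welfare notion: $Wel(T)=\E_w[v(S_w)]$, i.e. $g(S)=v(S)$ with no $w$ multiplier (see Remark~\ref{rm:dp} and the formula $Wel(T)=\sum_i v_i^T(F(w^T_{i+1})-F(w^T_i))$ in Appendix~\ref{app:welfare}). Integration by parts then gives
\[
Wel(T)=\int_0^H u_T'(w)\,f(w)\,dw = f(H)\,u_T(H)-\int_r^H u_T(w)\,f'(w)\,dw,
\]
and the coefficient on $u_T(w)$ is $-f'(w)$, which is nonnegative exactly because $f$ is non-increasing past $r$; together with $f(H)\ge 0$ this makes $Wel(T)$ an \emph{exact} nonnegative combination of the submodular functions $u_T(w)$ (Lemma~\ref{lem:gs-submodular}). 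No convex envelope is needed and no constant is lost there. Greedy then gives a $(1-1/e)^{-1}<1.6$ approximation directly; the rounding of $f$ into powers of $(1+\varepsilon)$ is still done, but only to make the evaluation of $Wel$ efficient (piecewise-constant density yields few nonzero terms in the telescoped sum), contributing a $(1+O(\varepsilon))$ factor that can be absorbed. If you want to use the $\E[w\cdot v(S_w)]$ notion of welfare, you would need a fundamentally different argument; your current plan does not produce a $1.6$ bound.
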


Similar to Lemma~\ref{lem:revenue expression integral}, we can write the total expected welfare as an integral of utility.

We have that the expected welfare for an assortment $T$ is:

$$Wel(T) = \sum_{i=1}^{n_T} v^T_i \int_{w^T_i}^{w^T_{i+1}} f(w) dw = \int_{0}^{H} u_T'(w) f(w) dw  = f(H) u_T(H) - \int_{r}^{H} u_T(w) f'(w) dw$$

Since, the instance is well-priced, we have that the density $f$ is decreasing after $r$ and thus all coefficients of utility are positive.
This implies that optimizing utility pointwise (by showing all items) is optimal for welfare as well which completes the proof of Theorem~\ref{thm:convex-welfare}.

Moreover, if the assortment is allowed to have size at most $\ell$, we can observe that similar to the proof of Theorem~\ref{thm:convex.constrained.gross}, we can use Lemma~\ref{lem:gs-submodular} to show that the $Wel(T)$ is submodular as it is the positive combination of submodular functions ($u_T(w)$). This implies that the simple greedy algorithm that always adds the item that improves expected revenue the most yields a $(1-1/e)$-approximation. 

To calculate revenue, we first round the density function into powers of $(1+\varepsilon)$ as in the proof of Theorem~\ref{thm:convex.constrained.gross}, so that the density $f$ is piecewise constant. We then write:
\begin{align*}
&Wel(T)\\ &= \sum_{i=1}^{n_T} v^T_i (F(w^T_{i+1}) -F(w^T_{i})) \\
&= \sum_{i=1}^{n_T-1} (u^T_{i+1}-u^T_i) \frac{ F(w^T_{i+1}) -F(w^T_{i})}{w^T_{i+1}-w^T_{i}} + v^T_{n^T} (1 -F(w^T_{n^T}))\\
&= \sum_{i=1}^{n_T-1} u^T_i \left(\frac{ F(w^T_{i})-F(w^T_{i-1})}{w^T_{i}-w^T_{i-1}} - \frac{ F(w^T_{i+1}) -F(w^T_{i})}{w^T_{i+1}-w^T_{i}}\right)
+ u^T_{n^T} \frac{ F(w^T_{n^T}) -F(w^T_{n^T-1})}{w^T_{n^T}-w^T_{n^T-1}} + v^T_{n^T} (1 -F(w^T_{n^T}))
\end{align*}
As the density function is rounded into powers of $(1+\varepsilon)$, the summation contains only very few non-zero terms. These can be found and evaluated using binary search performing a utility evaluation at every step. Utility evaluations can be performed in $O(\ell^2)$ time (see \cite{Leme} for details) and thus the total runtime for computing an approximate assortment maximizing welfare is $\tilde O(\ell^3 n)$.

\section{Missing Proofs for Concave Revenue Curves (Thm~\ref{thm:concave})}
\subsection{Proof of Theorem~\ref{thm:concave}}
\label{app:concave}
  The proof of Theorem~\ref{thm:concave} uses the notation and results developed in Appendix~\ref{app:well-priced}.
  r
  Denote by $T^*$ an optimal assortment for an instance with $k$-demand buyers whose type distribution has a concave revenue curve $R$. Let $T \subseteq T^*$ be the subset that a buyer with $w=H$ purchases under assortment $T^*$, i.e. $v(T) \cdot H-\sum_{i \in T} p_i \ge \max_{S \subseteq T^*} v(S)\cdot H-\sum_{i \in S} p_i$. Since the buyer is $k$-demand, $|T|\le k$.

  Now consider the assortment $T$ instead. Under this assortment, a buyer with $w=H$ purchases the whole assortment $T$ and thus $u_T(H) = u_{T^*}(H)$.
  Moreover, since fewer options are available for purchase under assortment $T$, it holds that $u_T(w) \le u_{T^*}(w)$ for all $w \in [0,H]$.

  By Lemma~\ref{lem:revenue expression integral} and by noting that $R''(w) \le 0$ for $w \in [0,H]$ since $R$ is concave, we have that
  $$Rev(T) = \int_{0}^{\infty} u_T(w) R''\left( w\right) dw - R'(H) u_T(H) \ge \int_{0}^{\infty} u_{T^*}(w) R''\left( w\right) dw - R'(H) u_{T^*}(H) = Rev(T^*)$$

  This implies that $T$ with size at most $k$ is an optimal assortment.
 
\section{Missing Proofs for Additive $k$-Demand Buyers (Thm~\ref{thm:dpmain}, \ref{thm:dpmainconstrained})}
\subsection{The DP for additive $k$-demand valuations}
\label{sec:dp}
In this section we will write $v(\{i\})$ as $v_i$. And since the valuation is additive $k$-demand, for each subset $S$ of size at most $k$, $v(S) = \sum_{i\in S} v_i$. 

We are going to characterize each item as a line in two dimensional space. The $x$-axis is $w$ and the $y$-axis is the utility of the item for buyer with some $w$. So for item $i$, the slope of the line will be $v_i$ and it also passes point $(0,-p_i)$. See Figure \ref{fig:uv} as an example.

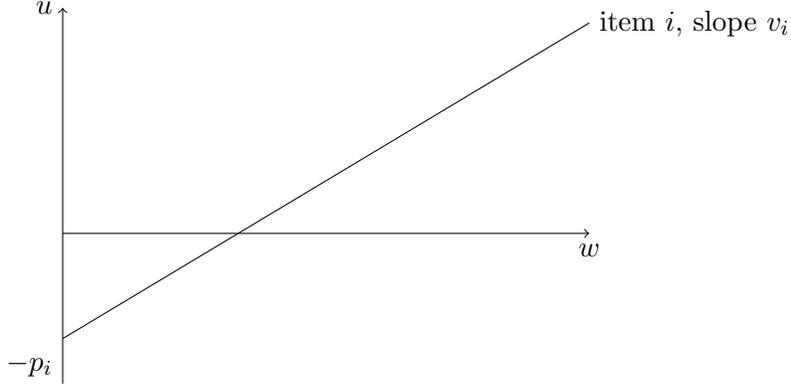
\begin{figure}\begin{center}
\begin{tikzpicture}[scale=2]

\draw [->] (0,0) -- (3.5,0)  node [below] {$w$};
\draw [->] (0,-1) -- (0,1.5) node [left] {$u$};

\draw (0,-.7*5/4) node [left] {$-p_i$};

\draw [domain=0:3.5] plot (\x, .6*\x-.7) node [right] {item $i$, slope $v_i$};

\end{tikzpicture}
\end{center}
\caption{Example of an item option viewed as a line}
\label{fig:uv}
\end{figure}
We are going to assume all the lines are in the general position (i.e., no two lines are the same, no three lines intersect at the same point, no two intersection points have the same $w$ value, and no two items have exactly the same price). We will also assume the density of $w$ at intersection points are negligible.

Now back to the original problem, the seller just need to choose $l$ lines in the plane to show to the buyer. For notation convenience, the seller will always include $2k-1$ items with prices $0$ and values $0$ (to mean not buying). And the buyer with some $w$ will buy the lines that have top-$k$ utilities at $w$. 

When $k =1$ (i.e. unit-demand case), the problem can be solved by a simple DP: scan from small $w$ to large $w$ and remember which item with the largest utility. When $k>1$, only remembering which items are in top-$k$ might not be enough. In the following figure, item 1 and 2 are top-2 at $w_1$, item 1 and 3 are top-2 at $w_2$ and item 2 and 3 are top-2 at $w_3$. Assume the seller already include item 1 and 3 and the buyer is 2-demand. Then whether item 2 shows up in top-2 at $w_1$ is correlated with item 2 shows up in top-2 at $w_3$. However, if we just scan from $w_1$ to $w_3$ and remember only the top-2, then at $w_2$ we will forget whether the seller includes item 2 or not. See Figure \ref{fig:k=2} as an example.

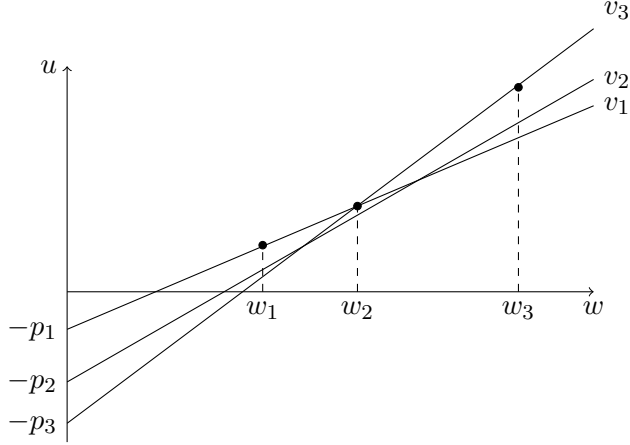
\begin{figure}\begin{center}
\begin{tikzpicture}[scale=2]

\draw [->] (0,0) -- (3.5,0)  node [below] {$w$};
\draw [->] (0,-1) -- (0,1.5) node [left] {$u$};

\draw [dashed] (1.3,0) node [below] {$w_1$} -- (1.3,0.31);
\draw [dashed] (1.93,0) node [below] {$w_2$} -- (1.93,0.57);
\draw [dashed] (3,0) node [below] {$w_3$} -- (3,1.36);

\draw [fill] (1.93,0.57) circle [radius=0.025];
\draw [fill] (3,1.36) circle [radius=0.025];
\draw [fill] (1.3,0.31) circle [radius=0.025];

\draw [domain=0:3.5] plot (\x, 1.7*\x/4-.2*5/4) node [right] {$v_1$};
\draw [domain=0:3.5] plot (\x, 2.3*\x/4-.6) node [right] {$v_2$};
\draw [domain=0:3.5] plot (\x, 3*\x/4-.7*5/4) node [above right] {$v_3$};

\draw (0,-.2*5/4) node [left] {$-p_1$};
\draw (0,-.6) node [left] {$-p_2$};
\draw (0,-.7*5/4) node [left] {$-p_3$};

\end{tikzpicture}
\end{center}
\caption{Example of problems of only remembering top-$k$ when $k=2$ }
\label{fig:k=2}
\end{figure}

We will first prove the following lemma to give some characterization of the top-$k$ in this two dimensional space. 
\begin{lemma}
\label{lem:2k-1}
Let $w_1 < w_2 < w_3$, suppose item $i$ is not in the top-$(2k-1)$ at $w_2$, then one of the followings is true:
\begin{enumerate}
\item Item $i$ is not in the top-$k$ at $w_1$.
\item Item $i$ is not in the top-$k$ at $w_3$.
\end{enumerate}
\end{lemma}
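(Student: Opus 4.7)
I would prove Lemma \ref{lem:2k-1} via its contrapositive: assume item $i$ is in the top-$k$ at $w_1$ and in the top-$k$ at $w_3$, and show it must then be in the top-$(2k-1)$ at $w_2$. The key structural fact to exploit is that every item's utility as a function of $w$ is affine, so differences of utilities are affine functions of $w$ as well.

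Concretely, for any item $j$, let $u_j(w) = v_j w - p_j$. Define
\[
  J = \{\, j \neq i : u_j(w_2) > u_i(w_2) \,\},
\]
i.e.\ the set of items that strictly beat item $i$ at $w_2$. Showing $i$ is in the top-$(2k-1)$ at $w_2$ amounts to showing $|J| \le 2k - 2$. For each $j \in J$, the function $g_j(w) = u_j(w) - u_i(w) = (v_j - v_i)w - (p_j - p_i)$ is affine and satisfies $g_j(w_2) > 0$. Since an affine function positive at an interior point of an interval must be positive at at least one of the endpoints (if $g_j$ has nonnegative slope it is increasing so $g_j(w_3) \ge g_j(w_2) > 0$, and if the slope is nonpositive then $g_j(w_1) \ge g_j(w_2) > 0$), we conclude that for each $j \in J$, at least one of $u_j(w_1) > u_i(w_1)$ or $u_j(w_3) > u_i(w_3)$ holds.

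Now partition $J$ into the (possibly overlapping) sets
\[
  J_1 = \{\, j \in J : u_j(w_1) > u_i(w_1) \,\}, \qquad
  J_3 = \{\, j \in J : u_j(w_3) > u_i(w_3) \,\}.
\]
By the affine observation, $J = J_1 \cup J_3$, so $|J_1| + |J_3| \ge |J|$. If $|J| \ge 2k-1$, then by pigeonhole either $|J_1| \ge k$ or $|J_3| \ge k$. In the first case, at least $k$ items strictly beat item $i$ at $w_1$, so $i$ is not in the top-$k$ at $w_1$; in the second case, $i$ is not in the top-$k$ at $w_3$. Either case contradicts the hypothesis, so $|J| \le 2k-2$, which means $i$ is in the top-$(2k-1)$ at $w_2$, as desired.

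The main (very mild) obstacle is handling the boundary case where $g_j(w_1) = 0$ or $g_j(w_3) = 0$, i.e.\ ties in utility at the endpoints; however, the paper's general-position assumption rules out the coincidences that would matter, and in any case the strict/non-strict distinction is easily absorbed by noting that ties at the endpoints still imply $g_j$ is strictly positive at the other endpoint (since $g_j(w_2) > 0$ is strict and $g_j$ is affine and nonconstant).
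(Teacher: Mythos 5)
Your proof is correct and rests on the same underlying fact the paper uses: each item's utility is affine in $w$, so an item that is above item $i$ at $w_2$ but below it at $w_1$ has strictly larger slope and hence stays above at $w_3$. The paper argues directly (fix the disjunct ``$i$ is top-$k$ at $w_1$,'' count the $\ge k$ items that cross from below $i$ to above $i$, and push them to $w_3$ via slopes), while you take the contrapositive and replace the explicit crossover-tracking with a covering $J = J_1 \cup J_3$ plus a pigeonhole count. This is an equivalent reorganization of the same argument rather than a genuinely different route; your version is slightly cleaner in that it avoids singling out the crossover set, but it buys no new generality. One minor wording issue: you call $J_1, J_3$ a ``partition'' of $J$ although (as you note parenthetically) they may overlap; ``cover'' is the right word, and the subsequent inequality $|J_1|+|J_3|\ge |J|$ is exactly what a cover gives you.
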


\begin{proof}
By symmetry we will just show if item $i$ is in top-$k$ at $w_1$ then item $i$ is not in the top-$k$ at $w_3$. Since item $i$ is in the top-$k$ at $w_1$ and item $i$ is not in the top-$(2k-1)$ at $w_2$, there are at least $k$ items such that they have lower utility than item $i$ at $w_1$ and higher utility than item $i$ at $w_2$. It means these items have larger slope than item $i$ and therefore they will also have higher utility than item $i$ at $w_3$. Thus item $i$ is not in the top-$k$ at $w_3$.
\end{proof}

Intuitively, Lemma \ref{lem:2k-1} says that although top-$k$ items might not be enough information for the DP, top-$k$ items together with $k-1$ extra items might be enough information for the DP.

Now we are going to specify the DP procedure (for $l=n$):

\begin{enumerate}
\item Consider all the intersection points between any two lines, sort them by their $w$'s, let the positive such $w$'s be $w_1< w_2< \cdots<w_N$. $N = O(n^2)$. Let $w_{N+1} = +\infty$. For notation convenience, define the rightmost intersection point to be $w_{N+1}$. 
\item The state of the DP will have three components $(w,S,T)$:
	\begin{enumerate}
 	\item $w$ means the current intersection point's $w$ value. (The DP is scanning from small $w$ to large $w$)
 	\item $S$ is a set of $k$ items meaning the top-$k$ items at $w$.
 	\item $T$ is a set of $k-1$ items.
 	\item The DP value $DP(w,S,T)$ means the maximum revenue of buyers with multiplicative noise at most $w$ achieved  by state $(w,S,T)$.
 	\end{enumerate}
\item The initial state has $w = 0$, $S$ to be the set of $k$ items with prices 0 and values 0, $T$ to be the set of $k-1$ items with prices 0 and values 0. The DP value of this initial state is set to be 0. 
\item Now we specify the transition. For $w = 0,w_1,w_2,...,w_N$:
	\begin{enumerate}
	\item Enumerate all achievable state $(w,S,T)$. 
	\item For each such state let item $i_1$ be the $k$-th item at $w$ inside set $S$. Let item $i_2$ be the $(k-1)$-th item at $w$ inside $T$. Let $w_1$ be item $i_1$'s next intersection point's $w$ value. Let $w_2$ be item $i_2$'s next intersection point's $w$ value. Assume the intersection at $w_1$ is between item $i_1$ and item $j_1$ and the intersection at $w_2$ is between item $i_2$ and $j_2$. 
		\begin{enumerate}
		\item If $w_1 = w_2 = w_{N+1}$, update($w_1,S,T, DP(w,S,T)+(1-F(w))\cdot \sum_{i\in S} p_i$).
		\item If $w_1 < w_2$ and $j_1 \in T$, let $S' = S \cup \{j_1\} \backslash \{i_1\}$, and $T' = T \cup \{i_1\} \backslash \{j_1\}$, \\update($w_1,S',T', DP(w,S,T)+(F(w_1)-F(w))\cdot \sum_{i\in S} p_i$).
		\item If $w_1 < w_2$ and $j_1 \not\in T$, update($w_1,S,T, DP(w,S,T)+(F(w_1)-F(w))\cdot \sum_{i\in S} p_i$).
		\item If $w_1 > w_2$, update($w_2,S,T, DP(w,S,T)+(F(w_2)-F(w))\cdot \sum_{i\in S} p_i$).
		\item If $w_1 > w_2$ and $j_2 \not \in S \cup T$ and $p_{j_2} > p_{i_2}$, let $T' = T \cup \{j_2\} \backslash \{i_2\}$,\\ update($w_2,S,T', DP(w,S,T)+(F(w_2)-F(w))\cdot \sum_{i\in S} p_i$). 
		\end{enumerate}
	\item The update($w,S,T,value$) is just to do the following:
	\begin{enumerate}
		\item Mark state $(w,S,T)$ as achievable.
		\item If state $(w,S,T)$ already has some value $DP(w,S,T)$ and $value > DP(w,S,T)$, set $DP(w,S,T) \leftarrow value$. 
	\end{enumerate}
	\end{enumerate}
\item The DP will output the optimal revenue as the maximum DP value of all states with $w = w_{N+1}$. Let $P$ be the DP-path from the initial state to the state achieves this DP value. Here DP-path is defined as a directed path on the states and there's an directed edge between two states if one state can transit to another state. The optimal assortment will be all the items inside some $S$ on DP-path $P$.
\end{enumerate}

The following lemma discusses the running time of the DP algorithm. Notice that, as the proof suggests, although there are $\Omega(n^{2k+1})$ possible states, the number of achievable states is $O(n^{2k})$.
\begin{lemma}
\label{lem:dptime}
The above DP procedure (for $l=n$) runs in time $O(\max\{ n^{2k},n^2\log(n)\})$.
\end{lemma}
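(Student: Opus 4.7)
My plan is to bound the DP's running time by a preprocessing cost plus the total work over all state transitions, showing that each transition performs only $O(1)$ amortized work and that there are at most $O(n^{2k})$ transitions in total. The $O(n^2\log n)$ summand will come from preprocessing and the $O(n^{2k})$ summand from the transition count.

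For preprocessing, I would enumerate all $N=\binom{n}{2}$ pairwise intersection values $(p_j-p_i)/(v_j-v_i)$ in $O(n^2)$ time and sort them in $O(n^2\log n)$ time; within the same budget I also keep, for each item $i$, the sorted list of its intersections with the other $n-1$ items, which supports $O(1)$ lookup of ``the next intersection involving item $i$ after the current $w$.''

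The key step is to show that the DP makes $O(n^{2k})$ transitions, which also bounds the number of achievable states since each state has $O(1)$ outgoing transitions. I would parameterize each transition by the intersection point that triggers it together with the role (either $i_1$ or $i_2$) of one of the two items at that intersection. Fix an intersection point $w^\ast$ of items $a,b$, and count transitions in which $a$ plays the role of $i_1$: then $a\in S$ is the lowest-utility element of $S$ at $w^\ast$, the other $k-1$ elements of $S$ have strictly higher utility than $a$ at $w^\ast$, and the $k-1$ elements of $T$ are disjoint from $S$, giving at most $\binom{n-1}{2k-2}=O(n^{2k-2})$ configurations. The same bound applies to the three symmetric roles $(a=i_2)$, $(b=i_1)$, $(b=i_2)$, so each intersection contributes $O(n^{2k-2})$ transitions and summing over the $O(n^2)$ intersections yields $O(n^{2k})$ in total, matching the remark that only $O(n^{2k})$ of the $\Omega(n^{2k+1})$ possible states are achievable. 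Each transition then performs $O(1)$ work: find $i_1,i_2$ by an $O(k)=O(1)$ scan of $S\cup T$, look up the next intersection in the precomputed per-item list, produce $(S',T')$ by swapping or inserting at most two items, increment the DP value by $(F(w')-F(w))\sum_{i\in S}p_i$ using a running price sum maintained incrementally across state changes, and read/write the DP table via a hash keyed by $(w,S,T)$.

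The principal obstacle is this counting step: the naive bound $O(n^2)\cdot\binom{n}{2k-1}=O(n^{2k+1})$ obtained by multiplying intersection points by admissible $(S,T)$ pairs is too weak, and the factor-of-$n$ saving comes precisely from using each intersection event to pin down one of its items as either $i_1$ or $i_2$, which removes a degree of freedom in the choice of $S\cup T$. Combining $O(n^{2k})$ transitions at $O(1)$ each with the $O(n^2\log n)$ preprocessing gives the bound $O(\max\{n^{2k},\,n^2\log n\})$.
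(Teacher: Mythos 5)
Your proof is correct and matches the paper's argument in its essentials: the $O(n^2\log n)$ term comes from preprocessing sorted per-item intersection lists, the $O(n^{2k})$ term comes from bounding the number of achievable states (equivalently, transitions), and the crucial factor-of-$n$ saving over the naive $O(n^{2k+1})$ comes from observing that the intersection point defining $w$ pins down one of the $2k-1$ items in $S\cup T$. The only cosmetic difference is that the paper counts achievable \emph{states} directly --- for any state $(w,S,T)$ with $w\neq 0,w_{N+1}$, one of the two items crossing at $w$ must lie in $S\cup T$, which immediately gives $O(n^2)\cdot O(n^{2k-2})$ --- whereas you count \emph{transitions} and recover the same bound by pinning one item to the role $i_1$ or $i_2$; the two bookkeeping schemes are equivalent since each state has $O(1)$ outgoing and at least one incoming transition.
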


\begin{proof}
For each achievable state $(w,S,T)$, we know that $|S| \leq k$, $|T| \leq k-1$ and $w$ has $O(n^2)$ different values. Also if $w \neq w_{N+1}, 0$ then one of the two lines intersected at $w$ will be inside set $S$ or $T$. therefore the number of achievable states with $w \neq w_{N+1}$ is at most $O(n^{2k})$. On the other hand, the number of states with $w = w_{N+1}$ is at most $O(n^{2k-1})$. Thus in total there are at most $O(n^{2k})$ achievable states. If we preprocess the sorted list of intersections of each item in $O(n^2\log(n))$ time, we can do the state transition in $O(1)$ time. Therefore the total running time is $O(\max\{ n^{2k},n^2\log(n)\})$.

\end{proof}

\begin{lemma}
\label{lem:dppath}
For any DP-path $(0,S_0,T_0)\rightarrow (w_{i_1},S_{i_1},T_{i_1}) \rightarrow (w_{i_2},S_{i_2},T_{i_2}) \rightarrow \cdots \rightarrow (w_{i_t},S_{i_t}, T_{i_t})$ with $w_{i,t} = w_{N+1}$. Let $S = \cup_{j=1}^t S_{i_j}$. For any $w_{i_j} < w < w_{i_{j+1}}$, the top-$k$ utility items in $S$ at $w$ is $S_{i_j}$. 
\end{lemma}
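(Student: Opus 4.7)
The plan is to prove this statement by induction on $j$, using a strengthened invariant: at every step, $S_{i_j}$ equals the top-$k$ items of $S = \cup_{j'=1}^{t} S_{i_{j'}}$ (the full assortment fixed by the DP-path) at any $w$ in the interior of the current interval, while $T_{i_j}$ correctly records items of $S$ that sit just below the top-$k$ and are therefore the candidates that can enter the top-$k$ as $w$ increases. Tracking $T_{i_j}$ alongside $S_{i_j}$ is necessary because the transitions are justified by the ability to foresee which line will overtake the $k$-th slot next.

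For the base case $j = 0$, the state has $S_0$ and $T_0$ populated entirely by dummy items of value and price zero, so their utility is identically $0$. Every real item $i$ has utility $v_i w - p_i < 0$ at values of $w$ sufficiently close to $0^+$, so the top-$k$ of $S$ on $(0, w_{i_1})$ is the dummy set $S_0$, as required. For the inductive step at $w = w_{i_j}$, I would case-analyze the transition. Only case~(ii) modifies $S$: the $k$-th item $i_1 \in S_{i_{j-1}}$ intersects $j_1 \in T_{i_{j-1}}$. By the general position assumption, immediately before $w_{i_j}$ the items $i_1$ and $j_1$ are adjacent in the utility ranking of $S$ and no third line lies between them at $w_{i_j}$, so after the crossing $j_1$ moves into the $k$-th position and $i_1$ drops below, giving $S_{i_j} = S_{i_{j-1}} \cup \{j_1\} \setminus \{i_1\}$ as the new top-$k$ of $S$. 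In cases~(iii)–(v), $S$ is unchanged, and one must check that the top-$k$ of $S$ is likewise unchanged across $w_{i_j}$; this follows because the intersecting line is either outside $S$ altogether (so no element of $S$ swaps rank at $w_{i_j}$) or lies in $T$ at a position too low to perturb the top-$k$.

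The main obstacle is to rule out an untracked item $x \in S$, i.e.\ $x \in S \setminus (S_{i_{j-1}} \cup T_{i_{j-1}})$, from being in the top-$k$ of $S$ immediately after $w_{i_j}$. Such an $x$ must lie in $S_{i_{j^*}}$ for some $j^* > j$, so it enters the DP state at some later step, necessarily via a case-(v) transition into $T$ and subsequently via case~(ii) into $S$. Lemma~\ref{lem:2k-1} then forces $x$ to remain in the top-$(2k-1)$ of \emph{all} items throughout the scanning range between any moment when $x$ is in the top-$k$ and the moment the DP places $x$ into $S$. Combined with the transition rule of case~(v), which is designed to fire exactly when such an item is about to overtake the $(k-1)$-th member of $T$, this shows that $x$ cannot challenge the top-$k$ of $S$ at $w_{i_j}^+$ before it has been absorbed into the tracked set $S \cup T$. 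Once this is established, the case analysis above closes the induction and identifies $S_{i_j}$ as the top-$k$ of $S$ on $(w_{i_j}, w_{i_{j+1}})$.
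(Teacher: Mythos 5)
Your induction-with-strengthened-invariant strategy is a genuinely different framing from the paper's proof, which proceeds by contradiction: the paper fixes the true top-$k$ set $S'$ at a point $w\in(w_{i_j},w_{i_{j+1}})$ and shows directly that every $i\in S'$ must lie in $S_{i_j}$ by ruling out, separately, the possibilities $i\in T_{i_j}$ and $i\notin S_{i_j}\cup T_{i_j}$, concluding by cardinality. Your inductive framing is reasonable in spirit, and you have correctly identified the crux of the argument --- ruling out an ``untracked'' item of $S$ from intruding into the true top-$k$ --- but that crux is where your sketch has real gaps.

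First, you assert that such an intruder $x\in S\setminus(S_{i_{j-1}}\cup T_{i_{j-1}})$ ``must lie in $S_{i_{j^*}}$ for some $j^*>j$.'' That does not follow: since $S=\bigcup_{j'}S_{i_{j'}}$, the witness index could just as well satisfy $j^*<j$. This is exactly the scenario where $x$ was tracked by the DP at some earlier step, later fell out of $S\cup T$, and now (hypothetically) re-enters the true top-$k$ of $S$ at $w_{i_j}^+$ without the DP noticing. The paper treats this as a separate subcase (its ``$x<j$'' branch) with a slope argument: when $x$ left the tracked set it was surpassed by at least $k$ higher-slope lines, and one argues those lines --- or their higher-slope replacements --- all remain above $x$ at $w_{i_j}$, so all of $S_{i_j}$ sits above $x$. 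Your sketch omits this branch entirely. Second, your use of Lemma~\ref{lem:2k-1} is not load-bearing as written: that lemma gives membership in the top-$(2k-1)$ ranking over \emph{all} lines, but the DP's tracked set $S\cup T$ is \emph{not} maintained to be the top-$(2k-1)$ lines (items enter $T$ only when the DP-path discretionarily chooses transition~(v), and $T$ may contain items never appearing in any $S_{i_{j'}}$, hence not in $S$). So ``$x$ is in the top-$(2k-1)$'' does not by itself imply ``$x$ has been absorbed into $S\cup T$,'' which is the inference your final sentence leans on. Relatedly, your invariant that $T_{i_j}$ ``correctly records items of $S$ that sit just below the top-$k$'' is not accurate and would not survive the inductive step. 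To repair the proposal you would need to (i) add the ``$x$ was tracked earlier and fell out'' branch with its own slope/surpassing argument, and (ii) replace the appeal to Lemma~\ref{lem:2k-1} with an argument that actually tracks which lines surpass $x$ and why they end up in $S_{i_j}$, rather than relying on an unproven identification of $S\cup T$ with the global top-$(2k-1)$.
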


\begin{proof}
Let $S'$ be the top-$k$ utility set at $w$ for $w_{i_j} < w < w_{i_{j+1}}$. We will show for each item $i \in S'$, $i \in S_{i_j}$. Since $|S'| = |S_{i_j}| = k$, this implies $S' = S_{i_j}$.
Let's prove by contradiction, suppose there exists some $i \in S'$ and $i \not\in S_{i_j}$. There are two cases:
\begin{enumerate}
\item $i \in T_{i_j}$: First of all, by the DP transition procedure, it's easy to check that every item in $S_{i_j}$ has higher utility than every item in $T_{i_j}$ for buyers at $w_{i_j}$. Again by the DP transition procedure, this even holds for $w_{i_j} < w < w_{i_{j+1}}$. Therefore, there are at least $k$ items in $S$ that has utility higher than $i$ at $w$. This contradicts with the fact that $i \in S'$.
\item $i \not\in T_{i,j} \cup S_{i_j}$: Since $i \in S' \subseteq S$, we know there exists $x$ such that $i \in S_{i_x}$. The following argument is very similar to Lemma \ref{lem:2k-1} but slightly different:
\begin{enumerate}
\item If $x < j$, there exists $x < y < j$ when item $i$ leaves the sets of DP states ($S\cup T$). As item $i$ has higher utility than every item in $T_{i_x}$ for buyers at $w_{i_x}$, item $i$ is surpassed by at least $k$ items when we scan $w$ from $x$ to $y$. These items will have higher values and slopes than item $i$ at $w_{i_j}$. Some of them might not stay in $S\cup T$ at $w_{i_j}$, but it means they are surpassed by some other items with even higher values and slopes. Therefore all items in $S_{i_j}$ will have higher values than item $i$ at $w_{i_j}$. This contradicts with the fact that $i \in S'$. 
\item If $x > j$, there exists $j < y < x$ when item $i$ joins the sets of DP states ($S\cup T$). As item $i$ has higher utility than every item in $T_{i_x}$ for buyers at $w_{i_x}$, item $i$ surpasses at least $k$ items when we scan $w$ from $y$ to $x$. These items will have higher values and lower slopes than item $i$ at $w_{i_j}$. Some of them might not stay in $S\cup T$ at $w_{i_j}$, but it means they are surpass by some other items with lower slopes before item $i$ surpasses them. Therefore all items in $S_{i_j}$ will have higher values than item $i$ at $w_{i_j}$. This contradicts with the fact that $i \in S'$. 
\end{enumerate}
\end{enumerate} 
\end{proof}

\begin{lemma}
\label{lem:dpvalue}
There exists a DP-path that achieves the optimal revenue. 
\end{lemma}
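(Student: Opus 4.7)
The plan is to exhibit, given any optimal assortment $S^*$, a single DP-path whose accumulated DP value equals $\mathrm{Rev}(S^*)$. Since the DP outputs the maximum value over all paths, this forces the output to be at least the optimum, and Lemma~\ref{lem:dppath} (the converse direction) guarantees it cannot exceed it. I would pad $S^*$ with the required $2k-1$ dummy items (value $0$, price $0$), and for each $w \ge 0$ order the items of $S^*$ by decreasing utility $w v_i - p_i$, letting $S(w)$ be the top $k$ items and $T(w)$ the items at ranks $k+1,\ldots,2k-1$. At $w=0$ all real items of $S^*$ have nonpositive utility while dummies have utility $0$, so $(S(0),T(0))$ agrees with the DP's initial state.

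Next I would walk $(w,S(w),T(w))$ through exactly the sequence of $w$-values the DP visits along this path: given the current state, the next $w'$ is the smaller of the next intersection of $i_1$ (the rank-$k$ element of $S$) or $i_2$ (the rank-$(2k-1)$ element of $T$) with any other item of $[n]$. By general position, each such intersection is between two items whose ranks (in the full $[n]$ order, hence also in the $S^*$ sub-order when both belong to $S^*$) are adjacent, and only those two items swap ranks. For $i_1$ crossing $j_1$: if $j_1 \in T$ (which forces $j_1$ to be at rank $k+1$ of $S^*$), I apply case (ii) to swap $i_1$ and $j_1$; otherwise $j_1$ is either in $S$ (an internal swap in $S$, no set change) or not in $S^*$ at all (irrelevant to the top-$k$ of $S^*$), both covered by case (iii). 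For $i_2$ crossing $j_2$: if $j_2 \in S^* \setminus (S \cup T)$, so $j_2$ was at rank $2k$ of $S^*$ and is now entering $T$, I apply case (v); the side condition $p_{j_2} > p_{i_2}$ is automatic since at a positive intersection with $v_{j_2} > v_{i_2}$ one has $p_{j_2} - p_{i_2} = w'(v_{j_2}-v_{i_2}) > 0$. All other subcases ($j_2 \in T$ giving an internal $T$-swap, or $j_2 \notin S^*$) leave $S,T$ unchanged and are handled by case (iv).

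Summing the per-step increments telescopes to $\sum_j (F(w^{(j+1)}) - F(w^{(j)})) \sum_{i \in S(w^{(j)})} p_i$, which by construction of $S(w)$ equals the expected total payment of a random buyer facing $S^*$ and selecting her top-$k$, i.e.\ $\mathrm{Rev}(S^*)$. The main obstacle I expect is verifying that intersections between items in $S^*$ that involve neither $i_1$ nor $i_2$ (for instance a crossing between two items both at ranks strictly above $k$ inside $S$, or a crossing purely within $T$ not touching the bottom of $T$) can be safely skipped by the DP without $S(w)$ or $T(w)$ changing as sets; this requires combining the general-position adjacency observation with Lemma~\ref{lem:2k-1}, together with the fact that an item can only migrate between the three bands $S^*\setminus(S\cup T)$, $T$, and $S$ by first crossing one of $i_1, i_2$, since the band boundaries are defined precisely by those two items.
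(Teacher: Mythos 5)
Your proposal is correct and uses the same path-selection rule as the paper: take transition 4(b)v exactly when $j_2 \in S^*$. But the organization is genuinely different. The paper first fixes a \emph{minimal} optimal $S^*$, proves only the weaker set-equality $\bigcup_j S_{i_j} = S^*$ by a contradiction argument (any $i \in S^*$ that is top-$k$ at some visited $w_{i_j}$ must already lie in $S_{i_j}$, with a two-case analysis on whether $i \in T_{i_j}$ or $i \notin S_{i_j}\cup T_{i_j}$), and then invokes Lemma~\ref{lem:dppath} as a black box to convert this into a statement about the path's accumulated DP value. You instead maintain the stronger invariant that the full state $(S,T)$ matches the rank bands $(S(w),T(w))$ of $S^*$ at every visited $w$, verify each transition type directly (correctly noting that the side condition $p_{j_2} > p_{i_2}$ is automatic when $j_2$ overtakes $i_2$ at a positive crossing), and telescope the DP increments to $\mathrm{Rev}(S^*)$ yourself. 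Your version is more self-contained — it does not rely on Lemma~\ref{lem:dppath}, and it does not need the minimality of $S^*$, since the telescoping computes $\mathrm{Rev}(S^*)$ regardless of whether every item of $S^*$ is ever purchased — at the price of re-deriving a piece of what Lemma~\ref{lem:dppath} already supplies. The obstacle you flag is indeed the crux, and your resolution (band membership in $S^*$ can only change by crossing the rank-$k$ or rank-$(2k-1)$ item, which is exactly what the DP monitors, with general position forcing adjacent-rank swaps) is precisely the observation the paper's Lemma~\ref{lem:2k-1} packages, so there is no real gap.
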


\begin{proof}
Let $S^*$ be the smallest set the seller shows to the buyer to achieve the optimal revenue. Since $S^*$ is the smallest such set, every item $i$ in $S^*$ must be in the top-$k$ utility set for some buyer. 

Notice that in the DP, the DP-path starts at the initial state and only diverges at step 4(b)iv and 4(b)v. Basically for $w_1 > w_2$ and $j_2 \not \in S \cup T$ and $p_{j_2} > p_{i_2}$, the DP-path can choose to go to either 4(b)iv or 4(b)v. Now consider the DP-path that chooses 4(b)v only when $j_2 \in S^*$. Let this DP-path be $(0,S_0,T_0)\rightarrow (w_{i_1},S_{i_1},T_{i_1}) \rightarrow (w_{i_2},S_{i_2},T_{i_2}) \rightarrow \cdots \rightarrow (w_{i_t},S_{i_t}, T_{i_t})$ with $w_{i,t} = w_{N+1}$. Define $S' = \cup_{j=1}^t S_{i_j}$. 

It's clear that only items in $S^*$ (except those 0 value items in the initial state) can join the sets in the DP states. So we have $S' \subseteq S^*$. 

For each item $i$ in $S^*$, we know $i$ is in the top-$k$ utility set of $S^*$ for some buyer. Let's assume it is in the top-$k$ utility set of $S^*$ for some buyer at $w$ such that $w_{i_j} < w < w_{i_{j+1}}$. Then by similar argument as in Lemma \ref{lem:dppath}, we know item $i$ is also in the top-$k$ utility set of $S^*$ at $w_{i_j}$. We want to show $i \in S_{i_j}$ by contradiction:
\begin{enumerate}
\item $i \in T_{i_j}$: Similarly as the argument in Lemma \ref{lem:dppath}, we know every item in $S_{i_j}$ has higher utility than every item in $T_{i_j}$ for buyers at $w_{i_j}$. Since $S_{i_j} \subseteq S^*$, this contradicts with the fact $i$ is also in the top-$k$ utility set of $S^*$ at $w_{i_j}$.
\item $i \not\in T_{i,j} \cup S_{i_j}$: Define $r_p$ as item $i$'s rank together with items in $T_{i,p} \cup S_{i,p}$. We know $r_0 > 2k-1$ and $r_j \leq k$. Let $p'$ be the largest $p' < j$ such that $r_{p'} = 2k-1$. Such $p'$ exists and at $w_{i_{p'}}$, item $i$ should be in $S_{i_{p'}} \cup T_{i_{p'}}$ because of the way we choose the DP-path. And for $ p'<p\leq j$, we know $r_p \leq 2k-1$, and therefore we know item $i$ stays in $S_{i_p} \cup T_{i_p}$ for $ p'<p\leq j$. Now we get a contradiction.
\end{enumerate}

Since each item $i$ in $S^*$ is also in $S'$ and $S' \subseteq S^*$, we know $S' = S^*$. By Lemma \ref{lem:dppath} together with how the DP updates values, we know the above DP-path achieves the optimal revenue. 
\end{proof}

Combining Lemma \ref{lem:dptime}, Lemma \ref{lem:dppath} and Lemma \ref{lem:dpvalue}, we get the following result:
\begin{theorem}[Restatement of Theorem \ref{thm:dpmain}]
\label{thm:dp}
For additive $k$-demand valuations, there exists an algorithm with $O(\max\{ n^{2k},n^2\log(n)\})$ running time (for $l=n$) which finds the optimal revenue and the corresponding assortment in the offline setting and the noisy case.
\end{theorem}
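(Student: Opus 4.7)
The plan is to design a dynamic program that sweeps the buyer type $w$ from $0$ to $\infty$, maintaining a compact summary of the current top-ranked items of the assortment. The geometric reformulation is that item $i$ corresponds to a line $w \mapsto w v_i - p_i$ in the $(w,u)$-plane, an assortment is a set of such lines, and a type-$w$ buyer purchases the top-$k$ lines at $w$ (ignoring items of negative utility). Since the identity of the top-$k$ lines changes only at pairwise line intersections, it suffices to discretize into the $N = O(n^2)$ breakpoints $w_1 < w_2 < \cdots < w_N$ given by the intersection $w$-coordinates; within any interval $(w_j, w_{j+1})$ the revenue contribution is $(F(w_{j+1}) - F(w_j)) \sum_{i \in S_j} p_i$, where $S_j$ is the top-$k$ subset of the assortment on that interval.

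The central structural observation, which controls the DP state, is the following $2k-1$ lemma: if an item $i$ is not among the top-$(2k-1)$ items at some $w^\star$, then $i$ cannot be in the top-$k$ at both some $w < w^\star$ and some $w > w^\star$. The reason is that being outside the top-$(2k-1)$ at $w^\star$ forces at least $k$ steeper-slope items to lie above $i$ at $w^\star$, and steeper lines remain above $i$ for all larger $w$. Consequently, any item that will ever matter again must be one of the top-$(2k-1)$ currently in the DP. I would therefore define the DP state to be $(w, S, T)$ with $|S| = k$ (current top-$k$), $|T| = k-1$ (the reserve), evolving through breakpoints; transitions are dictated by which items are swapping at the next intersection and have a handful of cases depending on whether the swap is internal to $S$, internal to $T$, or crosses the $S/T$ boundary (including a branching transition in which an item outside $S \cup T$ may be allowed into $T$). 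Each transition is $O(1)$ after $O(n^2 \log n)$ preprocessing of per-item sorted intersection lists.

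For the runtime, I would argue that although the state space naively has size $O(n^{2k+1})$, the number of achievable states is $O(n^{2k})$: at every non-initial breakpoint at least one of the two intersecting items must belong to $S \cup T$ (otherwise the intersection is invisible to the DP), which leaves at most $O(n^{2k-2})$ choices for the remaining $2k-2$ elements and $O(n^2)$ choices for the breakpoint. Together with $O(1)$ transition time this gives the stated $O(n^{2k} + n^2 \log n)$ bound.

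The hard part will be the correctness proof, namely showing that some valid DP-path achieves the true optimum. The strategy is to fix a minimal revenue-optimal assortment $S^\star$ and construct a DP-path whose union of top-$k$ sets equals $S^\star$ and whose $S_{i_j}$ agrees with the top-$k$ of $S^\star$ at every interval. At each branching transition I would elect to promote an outside item into $T$ exactly when that item belongs to $S^\star$. The subtle step is to verify that any $i \in S^\star$ that is top-$k$ for $S^\star$ at some $w$ is in fact in the $S_{i_j}$ of the constructed path at that $w$: minimality of $S^\star$ guarantees every $i \in S^\star$ is top-$k$ for $S^\star$ somewhere, and the $2k-1$ lemma, applied along the sweep, guarantees that once $i$ enters $S \cup T$ via the branching rule it remains in $S \cup T$ all the way up to the interval where it enters the top-$k$ of $S^\star$, at which point a standard case analysis shows it must lie in $S_{i_j}$. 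This yields an optimal DP-path, completing the proof.
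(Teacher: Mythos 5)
Your proposal follows essentially the same approach as the paper: the identical line-sweep reformulation, the same $2k$--$1$ lemma controlling which items can re-enter the top-$k$, the same DP state $(w,S,T)$ with $|S|=k$, $|T|=k-1$, the same counting of achievable states to get $O(n^{2k})$, and the same correctness argument that constructs an optimal DP-path by branching to admit an outside item into $T$ exactly when it belongs to a minimal optimal assortment $S^\star$. The only thing you leave implicit that the paper spells out as a separate lemma (their Lemma on DP-paths) is the converse direction --- that \emph{every} DP-path's value equals the true revenue of its induced assortment, so that the DP never overestimates --- but this is a routine companion to the correctness-of-tracking claim you already plan to prove.
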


\begin{proof}
We will just use the DP procedure discussed above. Lemma \ref{lem:dptime} guarantees the running time. 

By Lemma \ref{lem:dpvalue}, we know our DP will output some value at least the optimal revenue. On the other hand, by Lemma \ref{lem:dppath} together with how the DP updates values, we know any DP-path will achieve some value equal to the revenue when the seller showing some set. Therefore our DP will output some value at most the optimal revenue. Therefore, out DP will output the optimal revenue. 

Let this optimal revenue be achieved by some DP-path $P$. The assortment outputted by the DP would be all the items in some $S$ of the DP state on $P$. By Lemma \ref{lem:dppath}, we know this assortment achieves the DP value of the path which is just the optimal revenue. 
\end{proof}

\begin{corollary}[Restatement of Theorem \ref{thm:dpmainconstrained}]
\label{cor:dp}
The above DP procedure can be extended to the case for arbitrary $l$ and the extended version runs in time $O(n^{2k} l)$.
\end{corollary}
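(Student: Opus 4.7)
The plan is to augment the dynamic program of Theorem~\ref{thm:dpmain} (developed in Appendix~\ref{sec:dp}) with one extra coordinate tracking how many real (non-dummy) items have been committed to the assortment so far, and then enforce the cardinality constraint by only accepting DP paths whose final counter is at most $\ell$.

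Concretely, I would replace each state $(w,S,T)$ by $(w,S,T,c)$ where $c\in\{0,1,\ldots,\ell\}$ counts the number of non-dummy items that have ever appeared in $S\cup T$ along the path. The initial state is $(0,S_0,T_0,0)$, and I would modify the transition rules as follows. Branches 4(b)i, 4(b)iii and 4(b)iv merely advance $w$ without touching $S\cup T$, so they preserve $c$. Branch 4(b)ii swaps an item already in $S$ with one already in $T$, which also leaves $S\cup T$ unchanged and preserves $c$. The only branch that introduces a fresh item is 4(b)v, where a new $j_2\notin S\cup T$ enters $T$; in the augmented DP I allow this transition only when $c<\ell$, and then transition to $c+1$. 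The answer returned is the maximum DP value over all reachable terminal states $(w_{N+1},S,T,c)$ with $c\le\ell$.

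Correctness reduces to two observations that build directly on Lemmas~\ref{lem:dppath} and~\ref{lem:dpvalue}. For soundness, any DP path ending with counter value $c$ corresponds to an assortment $\bigcup_j S_{i_j}$ of size at most $c$, since every real item that ever sits in some $S_{i_j}$ must first have entered $S\cup T$, and by the case analysis above this can only happen through a 4(b)v step; therefore the assortment produced has size at most the number of 4(b)v invocations, which is $c\le\ell$. For completeness, given an optimal assortment $S^*$ with $|S^*|\le\ell$, the specific DP path constructed in the proof of Lemma~\ref{lem:dpvalue} takes branch 4(b)v only when the incoming item $j_2$ lies in $S^*$, and each item of $S^*$ enters at most once, so the counter along this path never exceeds $|S^*|\le\ell$. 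Hence the augmented DP attains the revenue of $S^*$.

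The runtime analysis extends the bookkeeping of Lemma~\ref{lem:dptime}: the reachable state space is multiplied by at most $\ell+1$, each transition remains $O(1)$ after the preprocessing of per-item sorted intersection lists, so the total time is $O(n^{2k}\ell)$ as claimed. The main thing to verify carefully is the soundness claim that no real item can sneak into $S\cup T$ outside of a 4(b)v transition; this is just a line-by-line inspection of the transition rules, and once it is in hand, the argument slots neatly into the existing correctness proof of the unconstrained DP.
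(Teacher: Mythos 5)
Your proposal is correct and takes essentially the same approach as the paper: augment each DP state with a counter for 4(b)v transitions (equivalently, for items entering $S\cup T$), forbid transitions once the counter hits $\ell$, and appeal to Lemmas~\ref{lem:dppath} and~\ref{lem:dpvalue} for soundness and completeness. Your write-up is a bit more explicit than the paper's in verifying branch by branch that 4(b)v is the only way a new item can enter, but the argument and the $O(n^{2k}\ell)$ runtime analysis are the same.
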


\begin{proof}
To extend the DP procedure, we just add to the DP state a counter which counts the number of transition 4(b)v on the DP-path. This initial state has this counter 0 and we only care about DP state with this counter at most $l$. This will result in a factor $l$ blow up in the running time, i.e. $O(n^{2k})$ to $O(n^{2k} l)$. (It does not affect the preprocessing time $O(n^2 \log(n))$.)
 
For correctness, we know that each item can enter $S$ of the DP state only after it enters $T$ of the DP state by transition 4(b)v. By Lemma \ref{lem:dppath} and the DP procedure, it is easy to see that such counter restriction makes sure that each DP-path's corresponding assortment has size at most $l$. 

Now let $S^*$ be the optimal assortment of size at most $l$, we need to show there's a DP-path that achieves the optimal revenue, and transition 4(b)v is used $|S^*|$. Recall that the DP-path is decided by whether we take each transition 4(b)v. Consider the following DP-path. For each item $i \in S^*$, let $w$ be the small multiplicative noise such that $i$ appears in the top-$k$ utility set. We let item $i$ enter the set $S\cup T$ before that. By similarly argument as Lemma \ref{lem:dppath}, we know that this will give the optimal revenue. On the other hand, this DP-path uses transition 4(b)v $|S^*|$ times. So it will be found by the extended version of the DP. Therefore the extended version of the DP will correctly output the optimal revenue.
\end{proof}
\begin{remark}
\label{rm:dp}
The DP works not only for maximizing revenue but also maximizing the expectation of any function $g(S)$ of the items bought. In the case of revenue, $g(S) = \sum_{i\in S} p_i$ while in the case of welfare $g(S)=v(S)$. This is because the DP tracks optimizes over all possible subsets of items bought in different regions of the buyer's typespace. We just need to switch the target objective from $\sum_{i\in S} p_i$ (which is the revenue gained when the buyer buys set $S$) in the DP to other objectives. 
\end{remark}

\section{Missing Proofs for Learning from Demand Samples} 
\label{app:learn}
\subsection{The DP in the learning setting for additive $k$-demand valuations}

In this section, we are going to show that the DP of Theorem \ref{thm:dpmain} and Theorem \ref{thm:dpmainconstrained} (for $l=n$ or arbitrary $l$) can be extended to the learning setting. For the learning setting, we are going to assume that for any two intersection points with $w$ value $w_1$ and $w_2$, then $|F(w_1) - F(w_2)| > \varepsilon_0$.

\begin{theorem}
\label{thm:dplearn}
Let $\varepsilon \leq \varepsilon_0$. The DP of Theorem \ref{thm:dpmain} and Theorem \ref{thm:dpmainconstrained} can be extended to the learning setting. The DP's output is within $O(\varepsilon \cdot \max_{|S|=k} \sum_{i \in S} p_i)$ additive factor to the optimal revenue and the DP has sample complexity $\Theta(n^{k+1}\log(n)/\varepsilon^2)$.
\end{theorem}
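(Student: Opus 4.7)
My plan is to run the DP of Theorem~\ref{thm:dpmain} (or its capacity-constrained variant) using empirical probability estimates in place of queries to the true CDF $F$, then argue that the sampling error propagates to only an $O(\varepsilon\cdot\max_{|S|=k}\sum_{i\in S}p_i)$ additive loss in revenue. The key observation that drives the sample count is that the DP only ever needs to evaluate top-$k$ subsets, of which there are $O(n^k)$; this is the source of the $n^{k+1}$ dependence rather than $n^{2k}$.

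First I would collect the data. For each $S\subseteq N$ with $|S|\le k$, I would offer $S$ as an assortment to $N_0=\Theta(n\log n/\varepsilon^2)$ fresh buyers, recording the empirical frequency $\hat q(S',S)$ with which each $S'\subseteq S$ is purchased. This uses $O(n^k)\cdot N_0 = \Theta(n^{k+1}\log n/\varepsilon^2)$ demand queries. By Hoeffding plus a union bound over the $O(n^k\cdot 2^k)=O(n^k)$ such pairs, each estimate is accurate within some $\delta=O(\varepsilon/\sqrt{n})$ with high probability. Since a buyer offered $S$ reveals their favorite sub-$k$-bundle of $S$, the break-points of $\hat q(\cdot,S)$ as a function of $S'$ identify (approximately) the indifference thresholds between pairs of bundles, and hence let us recover both the ordering of all pairwise intersection points $w_{ij}^\star=(p_j-p_i)/(v_i-v_j)$ and the values $F(w_{ij}^\star)$ up to additive $\delta$. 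The assumption $\varepsilon\le\varepsilon_0$ ensures the relative ordering of intersection $w$-values is learned correctly with high probability, so the empirical DP explores exactly the state space of Theorem~\ref{thm:dpmain}.

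Next I would plug these estimates into the DP. The revenue accumulated on any DP-path can be written, via Lemma~\ref{lem:revenue expression}, as $\sum_j (F(w^T_{i_{j+1}})-F(w^T_{i_j}))\,\sum_{i\in S_{i_j}}p_i$, and each difference in CDF values is exactly a probability $q(S_{i_j},T)$ that a buyer purchases the top-$k$ bundle $S_{i_j}$ when offered the final assortment $T$. Crucially, since the final $T$ is made up of the items that passed through the DP states on the path, each $q(S_{i_j},T)$ corresponds to a mass that can be expressed in terms of the learned $\hat q$'s. Thus the algorithm outputs $\sum_j \widehat{\Delta F_j}\cdot\sum_{i\in S_{i_j}}p_i$.

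The main obstacle is the error analysis: naively there are up to $O(n^2)$ transitions per path, so per-transition errors of order $\varepsilon/\sqrt n$ could conceivably sum to $\omega(\varepsilon)$. I would handle this by \emph{not} bounding errors term by term but instead exploiting telescoping: the computed revenue is equal to $\sum_{S:|S|\le k}\hat q(S,T)\sum_{i\in S}p_i$ and the true revenue equals the same expression with $q(S,T)$ in place of $\hat q(S,T)$. The number of distinct size-$k$ bundles $S$ contributing nonzero mass to any fixed $T$ is at most $O(n)$ (each intersection $w$-value, of which $O(n^2)$ exist globally, appears in at most two consecutive intervals along $T$'s top-$k$ structure; a cleaner bound comes from the fact that each item enters the top-$k$ $O(1)$ times). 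Combined with per-coefficient error $O(\varepsilon/\sqrt n)$ and the bound $\sum_{i\in S}p_i\le\max_{|S|=k}\sum_{i\in S}p_i$, the total error is $O(\sqrt n\cdot\varepsilon/\sqrt n)\cdot\max_{|S|=k}\sum_{i\in S}p_i = O(\varepsilon\cdot\max_{|S|=k}\sum_{i\in S}p_i)$, matching the claim. The same argument applies to Theorem~\ref{thm:dpmainconstrained} since the only change is an additional counter in the DP state, which does not affect the error bookkeeping.
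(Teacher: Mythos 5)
Your high-level plan---replace CDF queries with empirical estimates and show the error propagates only additively---is the right one, and your instinct that a telescoping argument is needed is on the right track. But there are two genuine problems, one in the sampling procedure and one in the error analysis.

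First, the sampling procedure does not learn the quantities the DP actually uses. The DP queries $F$ at the pairwise indifference points $w_{ij}^\star=(p_j-p_i)/(v_i-v_j)$ between two top-$k$ bundles that differ by a single swap. Offering an assortment $S$ with $|S|\le k$ to a $k$-demand additive buyer never makes that comparison visible: since $|S|\le k$, the $k$-demand cap never binds and the buyer just picks every item of $S$ with $wv_i-p_i>0$. The only thresholds revealed are the marginal points $p_i/v_i$ (intersections with the $w$-axis). To surface the pairwise thresholds $w_{ij}^\star$, you must offer an assortment large enough that the buyer is forced to drop one of $i,j$; this is precisely why the paper offers every subset of size $k+1$ (together with singletons for the axis intersections). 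That is the source of the $\binom{n}{k+1}=O(n^{k+1})$ factor, not your stated rationale of ``$O(n^k)$ top-$k$ subsets.''

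Second, the error analysis as written does not close. You target per-estimate accuracy $\delta=O(\varepsilon/\sqrt{n})$ and observe (correctly) that along any fixed assortment $T$ the top-$k$ set changes $O(n)$ times. But $O(n)\cdot O(\varepsilon/\sqrt{n})=O(\sqrt{n}\,\varepsilon)$, not $O(\varepsilon)$, so your claimed final bound does not follow. The fix, which is what the paper does, is to use summation by parts and the \emph{monotonicity} of the bundle prices: along any DP path the price $s_j=\sum_{i\in S_{i_j}}p_i$ of the bundle bought is nondecreasing in $j$, so the revenue rewrites as $\sum_j F_j(s_{j-1}-s_j)+s_m$ and the error from replacing each $F_j$ by $\hat F_j$ with $|F_j-\hat F_j|\le\varepsilon$ is at most $\varepsilon\sum_j(s_j-s_{j-1})=\varepsilon s_m\le\varepsilon\cdot\max_{|S|=k}\sum_{i\in S}p_i$, because the price increments telescope. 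With that identity you only need accuracy $\varepsilon$ per CDF estimate (hence $\Theta(\log n/\varepsilon^2)$ queries per assortment), and the sample count $\Theta(n^{k+1}\log n/\varepsilon^2)$ comes from the $O(n^{k+1})$ assortments of size $k+1$ rather than from an inflated per-assortment query budget.
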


\begin{proof}

The first thing to notice is that, in the DP, the only information we used about  $v$ and $\mathcal{F}$ is about the CDF values of $w$'s of intersection points. Furthermore, we are only looking at intersection points that have positive utility and rank at least $k$ and intersection points between $x$-axis and other items. Consider the following sampling procedures:

\begin{enumerate}
\item For each item $i$, show it (just a single item) to $\Theta(\log(n)/\varepsilon^2)$ buyers. Let $\hat{X}_i$ be the empirical probability of the buyer not buying the item. Set the estimated CDF value of the $w$ of the intersection point between $x$-axis and item $i$ to be $\hat{X}_i$.
\item For each subset $S$ of size $k+1$, show it to $\Theta(\log(n)/\varepsilon^2)$ buyers. Let $\hat{X}_i$ be the empirical probability of buyers buying set $S \backslash \{ i \}$.  Let $\hat{X}$ be the probability of buyers buying fewer than $k$ items. For each $\hat{X}_i > 0$, let $j_i$ be the largest $j$ such that $\hat{X}_j > 0$ and $ j < i$. If such $j_i$ exists, then set the estimated CDF value of the $w$ of the intersection point between item $i$ and item $i_j$ to be $\hat{X}+ \sum_{j<i} \hat{X}_j$. 
\end{enumerate}

\begin{lemma}
For any constant $c$, with probability at most $1- 1/n^c$, we can learn the CDF values of $w$'s of the intersection points used in the DP within additive error $\varepsilon$ by using $O(n^{k+1}\log(n)/\varepsilon^2)$ samples. 
\end{lemma} 

\begin{proof}
We will just use the procedure described above. It's clear that this procedure uses \\$O(n^{k+1}\log(n)/\varepsilon^2)$ samples. 

Now we are going to show the approximate guarantee. We are going to only show this for the second part of the sampling procedure since similar proof can be applied to the first part. First of all, define $X$ and $X_i$'s to be actual probabilities. If we set $\hat{X} \leftarrow X$ and $\hat{X}_i = X_i$ for $i \in [n]$, it's easy to check that we will find all the intersection points which have positive utility and rank at least $k$ and we will also get the CDF values correct. By Chernoff bound and union bound, we can show that with probability $1 - 1/n^c$, for each set $S$ of $k+1$, we have $|\hat{X}+ \sum_{j<i} \hat{X}_j - (X+ \sum_{j<i} X_j)| < \varepsilon$ and $|\hat{X}_i - X_i| < \varepsilon$ for all $i \in [n]$ if we show each $S$ to $c' \log(n)/\varepsilon^2$ buyers for some large enough constant $c'$. By the assumption in the beginning of this section we know $X_i$ is either at least $\varepsilon_0$ or equal to 0. This together with $|\hat{X}_i - X_i| < \varepsilon \leq \varepsilon_0$ implies that $\hat{X}_i > 0$ if and only if $X_i > 0$. Then we know our sampling procedure will correctly find all the intersection points we want.  Furthermorer, since we have $|\hat{X}+ \sum_{j<i} \hat{X}_j - (X+ \sum_{j<i} X_j)| < \varepsilon$, we know the CDF values are approximated within additive error $\varepsilon$. 
\end{proof}

Now we have all the intersection points we need for the DP and they are also sorted correctly with respect to their $w$ values. Therefore if we run the DP on the sampling result, we will get same set of DP paths as in the offline setting. Now we will show each DP path's DP value is approximated within some additive error by the following lemma. And this will conclude the proof of the theorem.

\begin{lemma}
For any DP path, the DP value using sampled data is within additive error $\varepsilon \cdot \max_{|S|=k} \sum_{i \in S} p_i$ compared to the DP value using the accurate data.  
\end{lemma}

\begin{proof}
Assume the DP path has $m$ states. Let the $i$-th DP state's $w$ has CDF value $F_i$. Let $s_i$ be the total price of items bought by buyers whose multiplicative noise is between the $i$-th DP state and the $i+1$-th DP state. It's to see that $s_1 \leq s_2 \leq \cdots s_m$ as a buyer with smaller multiplicative noise would not prefer to buy more expensive items. 

We can write the DP value using the accurate data as
\begin{eqnarray*}
&& F_1 \cdot 0 + (F_2 - F_1) \cdot s_1  + \cdots  + (F_m - F_{m-1}) \cdot s_{m-1} + (1-F_m) \cdot s_m  \\ 
&=& F_1 \cdot (0- s_1) +  F_2 \cdot (s_1 - s_2) + \cdots + F_m\cdot (s_{m-1}- s_m) + 1 \cdot s_m \\
\end{eqnarray*}
Now let the estimated CDF values to be $\hat{F}_i$. The DP value using sampled data is 
\[
\hat{F}_1 \cdot (0- s_1) +  \hat{F}_2 \cdot (s_1 - s_2) + \cdots + \hat{F}_n\cdot (s_{m-1}- s_m) + 1 \cdot S_m.
\]

We know that $|F_i - \hat{F}_i| < \varepsilon$.  So the difference between the DP value using sampled data and the DP value using accurate data is
\begin{eqnarray*}
&&(F_1-\hat{F}_1) \cdot (0- s_1) +  (F_2-\hat{F}_2) \cdot (s_1 - s_2) + \cdots + (F_m-\hat{F}_m)  \cdot (S_{m-1}- S_m) \\
&\leq& |(F_1-\hat{F}_1) \cdot (0- s_1)| +  |(F_2-\hat{F}_2) \cdot (s_1 - s_2)| + \cdots + |(F_m-\hat{F}_m)  \cdot (S_{m-1}- S_m)| \\
&\leq& \varepsilon \cdot (s_1 + (s_2 - s_1) + \cdots + (s_m-s_{m-1})) \\
&\leq& \varepsilon \cdot s_m \leq \varepsilon \cdot \max_{|S|=k} \sum_{i \in S} p_i.
\end{eqnarray*}
\end{proof}
\end{proof}

\end{document}